\newcommand{\plusplus}{+\!\!\!+}
\newtheorem{theorem}{Theorem}%[section]
\newtheorem{lemma}[theorem]{Lemma}
\theoremstyle{definition} % All newtheorems after this will be non-italicized
\newtheorem{definition}[theorem]{Definition}
\theoremstyle{remark} % The title Case will be italicized
\definecolor{ltblue}{rgb}{0,0.4,0.4}
\definecolor{dkblue}{rgb}{0,0.1,0.6}
\definecolor{dkgreen}{rgb}{0,0.35,0}
\definecolor{dkviolet}{rgb}{0.3,0,0.5}
\definecolor{dkred}{rgb}{0.5,0,0}
\NewDocumentCommand{\optionalParens}{s m m}{
    \IfBooleanTF{#2}{\left(#3\right)}{\IfBooleanTF{#1}{~#3}{#3}}
}
\NewDocumentCommand{\apply}{m s O{} m}{
    #1#3 \optionalParens*{#2}{#4}
}
\NewDocumentCommand{\applytwo}{m O{} s m s m}{ 
   #1#2~\optionalParens{#3}{#4}~\optionalParens{#5}{#6}
}
\newcommand{\ShortVersion}
    { \includecomment{ShortVersion}
      \excludecomment{ExtendedVersion}
    }
\renewcommand{\textsf}{\text}
\newcommand{\registered}{$^{\text{\textregistered}}$\xspace}
\newcommand{\PCAST}{\text{PCOAST}\xspace}
\newcommand{\tket}{t$\ket{\text{ket}}$\xspace}
\newcommand{\PCOAST}{\text{PCOAST}\xspace}
\newcommand{\PCOASTone}{\text{PCOAST}\ensuremath{_1}\xspace}
\newcommand{\PCOASTFT}{\text{PCOAST}\ensuremath{_{\text{FT}}}\xspace}
\newcommand{\tketone}{\textit{tket}\ensuremath{_1}\xspace}
\newcommand{\tkettwo}{\textit{tket}\ensuremath{_2}\xspace}
\newcommand{\qiskitthree}{\textit{qiskit}\ensuremath{_3}\xspace}
\newcommand{\qiskittwo}{\textit{qiskit}\ensuremath{_2}\xspace}
\newcommand{\companion}{\citep{Schmitz2023PCOAST}\xspace}
\newcommand{\ifExtended}[2]{}
    \renewcommand{\ifExtended}[2]{#1}
    \renewcommand{\ifExtended}[2]{#2}
\newcommand{\PrepZ}{\texttt{PrepZ}}
\newcommand{\MeasZ}{\texttt{MeasZ}}
\newcommand{\RX}[1]{\texttt{RX}(#1)}
\newcommand{\support}{\textsf{supp}}
\newcommand{\Pauli}{\mathcal{P}}
\newcommand{\poprgate}{\gate[style={rounded corners}]}
\newcommand{\startingpoprgate}{\gate[nwires={1},style={rounded corners}]}
\newcommand{\startingframegate}[2]{\gate[nwires={1,2}, wires=#1, style={rounded corners}]{#2} }
\newcommand{\poprgrouping}[2]{\gategroup[#1, steps=#2, style={dashed, rounded corners, fill=blue!20, inner xsep=2pt}, background]{}}
\newcommand{\pcastgraph}[2]{\gategroup[#1, steps=#2, style={rounded corners, fill=blue!20, inner xsep=2pt}, background]{}}
\newcommand{\cqstate}[2]{#1 \mapsto #2}
\NewDocumentCommand{\CQSTATE}{O{}}{\textsf{CQ}^{#1}}
\newcommand{\cLookup}[2]{#1[#2]} % lookup for classical states
\newcommand{\cAssign}[2]{#1 \gets #2}
\newcommand{\hold}{\textit{hold}\xspace}
\newcommand{\release}{\textit{release}\xspace}
\NewDocumentCommand\interp{sm}
        {\IfBooleanTF{#1}
                {\conjugate{[ #2 ]}}
                {[ #2 ]}
        }
\NewDocumentCommand\interpM{m}{\llbracket #1 \rrbracket}
\NewDocumentCommand\commute{smm}
        {#2 \IfBooleanT{#1}{\not}\upmodels #3}
\NewDocumentCommand\anticommute{}{\commute*}
\NewDocumentCommand{\commutativity}{mm}{\lambda(#1,#2)}
\newcommand{\CZ}{\texttt{CZ}}
\newcommand{\CX}{\texttt{CNOT}}
\newcommand{\CNOT}{\textsf{CNOT}}
\newcommand{\RZ}{\texttt{RZ}}
\newcommand{\RY}{\texttt{RY}}
\newcommand{\RXY}{\texttt{RXY}}
\newcommand{\X}{\text{X}}
\newcommand{\eff}[1]{\textsf{eff}#1}
\newcommand{\effX}{\eff{X}}
\newcommand{\effZ}{\eff{Z}}
\newcommand{\frameof}[1]{F^{#1}}
\newcommand{\unitaryof}[1]{U^{#1}}
\newcommand{\termof}[1]{t^{#1}}
\newcommand{\tr}{\textsf{tr}}
\newcommand{\dom}{\textsf{dom}}
\newcommand{\fv}{\textsf{freevars}}
\newcommand{\hmult}{\odot}
\newcommand{\fwdAction}[1]{\overrightarrow{#1}}
\newcommand{\bwdAction}[1]{\overleftarrow{#1}}
\NewDocumentCommand{\Rot}{t2 m m o o}
    {\IfBooleanTF{#1}
        {\textsf{Rot}^2(#2, #3, #4, #5)}
        {\textsf{Rot}(#2,#3)}}
\newcommand{\Prep}[2]{\textsf{Prep}(#1,#2)}
\NewDocumentCommand{\Meas}{O{} m}{
        \textsf{Meas}^{#1}(#2)
}
\NewDocumentCommand{\conjugate}{sm}{{\optionalParens{#1}{#2}}^\ast}
\NewDocumentCommand{\pauliToFrame}{t2 m}
        {\IfBooleanTF{#1}
                {F^{#2, \tfrac{\pi}{2}}}
                {F^{#2}}
        }
\NewDocumentCommand\cosBy{t2 m}
        {\IfBooleanTF{#1}
                {\cos{(\tfrac{#2}2)}}
                {\cos{#2}}
        }
\NewDocumentCommand\sinBy{t2 m}
        {\IfBooleanTF{#1}
                {\sin{(\tfrac{#2}2)}}
                {\sin{#2}}
        }
\NewDocumentCommand{\commuteTTRule}{o}
        {\IfNoValueTF{#1}
                {\ensuremath{\bot^t}}
                {\textsc{#1-\ensuremath{\upmodels^t}}}}
\NewDocumentCommand{\commuteTPRule}{o}
        {\IfNoValueTF{#1}
                {\ensuremath{\bot^p}}
                {\textsc{#1-\ensuremath{\upmodels^p}}}}
\NewDocumentCommand{\mergestep}{s}
        {\longrightarrow_M \IfBooleanT{#1}{^\ast}}
\NewDocumentCommand{\framestep}{s}
        {\longrightarrow_F \IfBooleanT{#1}{^\ast}}
\newcommand\ADDTERM{\todo{change to addNode}\xspace}
\newcommand{\addTerm}[2]{\ADDTERM(#1,#2)}
\newcommand\ADDNODE{\textsc{addNode}\xspace}
\newcommand{\addNode}[2]{\ADDNODE(#1,#2)}
\newcommand\ADDVERTEX{\textsc{addVertex}\xspace}
\newcommand\REMOVEVERTEX{\textsc{removeVertex}\xspace}
\newcommand\ADDEDGE{\textsc{addEdge}\xspace}
\newcommand{\circuitToPoPR}[1]{\textsc{circToGraph}\xspace(#1)}
\newcommand{\nodeCost}{\textsc{nodeCost}\xspace}
\newcommand{\reduceNode}{\textsc{reduceNode}\xspace}
\newcommand{\gateCost}{\textsc{gateCost}\xspace}
\newcommand{\addGate}{\textsc{addGate}\xspace}
\newcommand{\BEGIN}{\textsf{BEGIN}\xspace}
\newcommand{\END}{\textsf{END}\xspace}
\newcommand{\gateof}[1]{g^{#1}}
\DeclareMathOperator*{\argmin}{arg\,min} % Packages and macros specific to this project
\title{\PCOAST: A Pauli-based Quantum Circuit Optimization Framework
        %(Extended Version) % Comment for submission
}
\author{
    \IEEEauthorblockN{
        Jennifer Paykin\IEEEauthorrefmark{1}\IEEEauthorrefmark{3},
        Albert T. Schmitz\IEEEauthorrefmark{1}\IEEEauthorrefmark{3},
        Mohannad Ibrahim\IEEEauthorrefmark{1},
        Xin-Chuan Wu\IEEEauthorrefmark{2},
        and
        A. Y. Matsuura\IEEEauthorrefmark{1}
        }
    \IEEEauthorblockA{\IEEEauthorrefmark{1}
    \textit{Intel Labs, Intel Corporation}, 
    Hillsboro, OR, USA \\
    }
    \IEEEauthorblockA{\IEEEauthorrefmark{2}
    \textit{Intel Labs, Intel Corporation},
    Santa Clara, CA, USA
    }
    \IEEEauthorblockA{
        \IEEEauthorrefmark{3}
        These authors contributed equally. Email: \{jennifer.paykin, albert.schmitz\}@intel.com
    }
}
\begin{document}

\maketitle

\begin{DraftVersion}
% Add page numbers, do not include on submission
\thispagestyle{plain}
\pagestyle{plain}
\end{DraftVersion}

\begin{abstract}
    This paper presents the Pauli-based Circuit Optimization, Analysis, and Synthesis Toolchain (\PCOAST), a framework for quantum circuit optimizations based on the commutative properties of Pauli strings. Prior work has demonstrated that commuting Clifford gates past Pauli rotations can expose opportunities for optimization in unitary circuits. \PCOAST extends that approach by adapting the technique to mixed unitary and non-unitary circuits via generalized preparation and measurement nodes parameterized by Pauli strings. The result is the \PCOAST graph, which enables novel optimizations based on whether a user needs to preserve the quantum state after executing the circuit, or whether they only need to preserve the measurement outcomes. Finally, the framework adapts a highly tunable greedy synthesis algorithm to implement the \PCOAST graph with a given gate set.

\PCOAST is implemented as a set of compiler passes in the Intel\registered Quantum SDK. In this paper, we evaluate its compilation performance against two leading quantum compilers, Qiskit and \tket. We find that \PCOAST reduces total gate count by 32.53\% and 43.33\% on average, compared to the best performance achieved by Qiskit and \tket respectively, two-qubit gates by 29.22\% and 20.58\%, and circuit depth by 42.02\% and 51.27\%.

%updated numbers 4/28 2:30pm
\end{abstract}

\begin{IEEEkeywords}
Quantum Compiler, Quantum Circuit Optimization, Pauli Optimization, Classical-Quantum state
\end{IEEEkeywords}

\section{Introduction}
\label{sec:intro}
Quantum circuit optimizations address an important challenge in the design of efficient quantum computing systems by reducing the number of operations required to execute quantum algorithms~\citep{chong2017programming, prasad2006data, valiron2015programming}. Optimizations fall in two main classes: local, peephole-style optimizations~\citep{kliuchnikov2013optimization, abdessaied2014quantum, nam2018automated,Qiskit, pointing2021optimizing, xu2022quartz}, where local patterns of gates are replaced by other patterns; and global optimizations, where circuits are converted to an intermediate mathematical structure that highlights some semantic equivalence, simplified according to the rules of that structure, and synthesized back into a circuit that could be significantly different from the original.
%While peephole optimizations can be effective for recognizing simple patterns, they tend to miss richer opportunities for optimizations, both because they only act locally on neighboring gates, and because of the number and complexity of patterns needed to identify such opportunities.
This paper presents a novel global optimization framework called \PCAST, a Pauli-based Circuit Optimization, Analysis, and Synthesis Toolchain, which successfully reduces total gate count, two-qubit gate count, and depth compared to the best performance of state-of-the-art optimizing compilers Qiskit~\citep{Qiskit} and \tket~\citep{Sivarajah_2021_tket}. %On applications for quantum chemistry, these improvements rise to reductions in gate count by 79\%, two-qubit gates by 77\%, and depth by 85\%.

%Quantum circuit optimizations address an important challenge in the design of efficient quantum computing systems by reducing the number of operations required to execute quantum algorithms~\citep{chong2017programming, prasad2006data, valiron2015programming}. In this paper, we present a novel optimization framework called \PCAST, a Pauli-based Circuit Optimization, Analysis, and Synthesis Toolchain, which successfully reduces the number of gates in our benchmarks by an average of \todo{average} compared to top competitors.

Recently, a class of global optimizations based on Pauli rotations have proved successful in reducing gate count for unitary circuits beyond the reach of more traditional local peephole optimizations~\citep{Zhang2019,cowtan2019phase,Schmitz2021}. These optimizations take advantage of the fact that unitary circuits can be decomposed into Clifford gates (generated by the Hadamard gate $H$, the phase gate $S$ and the controlled-not gate $\CNOT$); and non-Clifford gates represented by Pauli rotations $\Rot{P}{\theta}=e^{-i \theta/2 P}$, where $P$ is one the Pauli matrices $X$, $Y$, or $Z$. Because Cliffords always map Paulis to Paulis by conjugation, it is possible to push Clifford unitaries $U$ past the non-Clifford Pauli rotations $\Rot{P}{\theta}$ to produce a new rotation $\Rot{U P U^\dagger}{\theta}$. Doing so can expose the fact that some rotations can be merged, reducing the number of gates required in the final circuit, even if the original gates did not appear directly next to each other in the original circuit~\citep{Zhang2019}. The implementation of Pauli gadgets (the equivalent of Pauli rotations) in the \tket compiler~\citep{Sivarajah_2021_tket} has been shown to reduce both the number of 2-qubit entangling gates, relevant for Noisy Intermediate Scale Quantum (NISQ) workflows, and single-qubit non-Clifford gates, relevant for fault-tolerant workflows~\citep{cowtan2019phase}.

These optimizations are quite powerful, but they fall short in two main ways. First, they focus on unitary circuits, so equivalences enabled by non-unitary operations like preparations and measurements are not taken into account. For example, if a $Z$ rotation occurs on a qubit that was prepared in a $Z$ eigenstate, that rotation can be eliminated. Incorporating non-unitary gates enables several powerful optimizations, such as the ability to drop all unitary gates after measurement when coherence on the quantum device doesn't need to be preserved after measurement. Second, while optimizations reduce the number of rotations and put an upper bound on the size of the resulting circuit, the number of two-qubit gates required can be expensive if the rotations are synthesized in a suboptimal order. \citet{Schmitz2021} and \citet{li2022paulihedral} both explore synthesis in the context of Hamiltonian simulation, with the goal of automatically synthesizing a Hamiltonian, described as a product of Pauli rotations, into a circuit with a designated gate set. In particular, \citeauthor{Schmitz2021} describe a greedy search algorithm  that reduces this search problem to a variation of the traveling salesman problem~\citep{Schmitz2021}.

\begin{figure}
  \centering
  \subfloat[Example circuit. The $c$ argument in $\MeasZ^c$ indicates the classical variable that the measurement outcome is written to.]{
    \begin{tikzpicture} \node[scale=0.6]{
    \begin{quantikz}[row sep=0.1cm]
    & \gate{\PrepZ}
        & \gate{\RX{\theta_1}} & \targ{}
        & \gate{\RX{\theta_2}} & \ctrl{1}
        & \gate{\RX{\theta_3}} & \gate{\MeasZ^{c_0}} & \qw \\
    & \gate{\PrepZ}
        & \gate{H}              & \ctrl{-1}
        & \qw                   & \targ{}
        & \qw                   & \gate{\MeasZ^{c_1}} & \qw
    \end{quantikz}
    }; \end{tikzpicture}
    \label{fig:demo-circuit}
  }
  \newline
  \subfloat[The \PCAST graph generated by the example circuit. The groupings indicate nodes to be merged together, where $\Rot{X_0}{\theta_1}$ and $\Rot{X_0}{\theta_2}$ combine to $\Rot{X_0}{\theta_1 + \theta_2}$, and $\Rot{Z_1}{\theta_3}$ is absorbed by $\Prep{Z_1}{X_1}$. Note that the measurement to variable $c_1$ has been transformed into a measurement of the first qubit, $Z_0$, due to the permutations of Clifford gates ($F$) past the measurement.]{
    \begin{tikzpicture} \node[scale=0.6]{
    \begin{tikzcd}[row sep=0.2cm]
    \poprgate{\Prep{Z_0}{X_0}} \arrow[end anchor={[xshift=-1.25ex]}]{r}
        & \startingpoprgate{\Rot{X_0}{\theta_1}} \poprgrouping{2}{1}
            \arrow[start anchor={[xshift=1ex]}]{r}
            \arrow[start anchor={[xshift=5ex, yshift=-4ex]}, end anchor={north}]{ddr}
        %& \poprgate{\Rot{X_0}{\theta_1 + \theta_2}}
            %\arrow[rd, dash]
            %\vqw{1}
        &  \startingpoprgate{\Meas[c_1]{Z_0}} \arrow[r]
        %\slice{}
        & \startingframegate{3}{
        F = \begin{pmatrix}
                Z_0 X_1 & Z_1 \\
                Z_0 & X_0 Z_1
            \end{pmatrix}
        }
    \\
    \poprgate{\Rot{Z_1}{\theta_3}} \poprgrouping{2}{1}
        & \startingpoprgate{\Rot{X_0}{\theta_2}}
        &
        &
    \\
    \startingpoprgate{\Prep{Z_1}{X_1}} 
            \arrow[start anchor={[xshift=1.5ex]}]{rr}
        & 
        & \startingpoprgate{\Meas[c_0]{Z_0 X_1}} \arrow[r]
        &
    \end{tikzcd}
    }; \end{tikzpicture}
    \label{fig:demo-graph}
  }
  \newline
    \subfloat[The optimized \PCAST graph obtained when specifying a release outcome. The optimization has reduced the support of the measurement node $\Meas{Z_0 X_1}$ to $\Meas{X_1}$ by recognizing that the $Z_0 X_1$ measurements can be reconstructed by measuring $X_1$ and combining the outcomes with the measurement of $Z_0$ classically. The measurement results of the optimized graph is guaranteed to produce the same probability distribution as \cref{fig:demo-graph}.]{
    \begin{tikzpicture} \node[scale=0.68]{
    \begin{tikzcd}[row sep=0.2cm]
    \poprgate{\Prep{Z_0}{X_0}} \arrow[r]
        &\poprgate{\Rot{X_0}{\theta_1 + \theta_2}} \arrow[r]
        & \poprgate{\Meas[c_1']{Z_0}}
            \arrow[rd]
        & \startingframegate{1}{F' = \begin{pmatrix}
                Z_0 & X_0 \\
                Y_1 & X_1
            \end{pmatrix}}
    \\
    \poprgate{\Prep{Z_1}{X_1}} \arrow[rr]
        & \qw
        & \poprgate{\Meas[c_0']{X_1}}
            \arrow[r]
        & \startingframegate{1}{ \begin{aligned}
            \mu =~&\cAssign{c_0}{c_0'}; \\
            &\cAssign{c_1}{c_0' + c_1'}
        \end{aligned} }
    \end{tikzcd}
    }; \end{tikzpicture}
    \label{fig:demo-optimized-graph}
  }
  \newline
  \subfloat[Optimized released circuit synthesized from \cref{fig:demo-optimized-graph}, along with assignments to classical variables to account for the release outcome optimization.]{
    \begin{tikzpicture} \node[scale=0.7]{
    \begin{tikzcd}[row sep=0.2cm]
    & \gate{\PrepZ} & \gate{\RX{\theta_1 + \theta_2}} & \gate{\MeasZ^{c_1'}} & \qw
    \\
    & \gate{\PrepZ} & \gate{\RY(-\tfrac{\pi}{2})} & \gate{\MeasZ^{c_0'}} & \qw &
    \end{tikzcd}
    }; \end{tikzpicture}
    \begin{tikzpicture} \node[scale=0.7]{
    \begin{tikzcd}
    \startingframegate{1}{\begin{aligned}
            \mu =~
            &\cAssign{c_0}{c_0'}; \\
            &\cAssign{c_1}{c_0' + c_1'}
    \end{aligned}}
    \end{tikzcd}
    }; \end{tikzpicture}
    \label{fig:demo-optimized-circuit}
  }
  \caption{\PCAST optimization on an example circuit.}
  \label{fig:demo}
\end{figure}

In this paper, we show that by addressing these deficits, Pauli strings can be used not only as a standalone optimization pass, but as a cohesive optimization framework, \PCAST. \cref{fig:demo} shows an example \PCOAST workflow. First, the circuit in \cref{fig:demo-circuit} is converted into the \PCAST graph in \cref{fig:demo-graph}. The graph highlights the commutativity of rotation nodes in relation to each other---if there is no dependency between two nodes, they commute with each other. In addition to the unitary Pauli rotations, our \PCAST graph contains non-unitary gates---preparation and measurement---that are parameterized by Pauli rotations and are subject to the same commutativity rules as rotations. To represent Cliffords, we use a compact representation as a Pauli frame $F$, otherwise known as a Pauli tableau~\citep{Aaronson2004}, which emphasizes the behavior of the Clifford on Pauli arguments (\cref{sec:frame}).
A summary of different types of nodes is shown in \cref{fig:nodesCategories}%
\ifExtended{ and an overview of the notations used is shown in \cref{fig:notations}}{}.

\begin{figure}
    \centering
    \resizebox{0.48\textwidth}{!}{%
    \begin{tikzpicture}
    
        \draw[fill=blue!5, very thick, rounded corners] (-1.25, -2.25) rectangle (10.5, 2.75);
        \node[align=center] (nodes) at (5.25, 2.5)
            {\PCAST nodes};

        \node[align=center, draw, fill=white, rectangle, rounded corners] (msf)
            at (9.0,0.65)
            {Measurement \\ space \\ functions \\
            $\mu : \mathcal{M}_1 \rightarrow \mathcal{M}_2$};

        \draw[fill=blue!10, very thick, rounded corners] (-1, -2.0) rectangle (7.5, 2.25);
        \node[align=center] (gates) at (3.75, 2.0)
            {Quantum gates};

        \node[align=center, draw, fill=white, rectangle, rounded corners] (prep)
            at (1.0,-1.25)
            {Pauli preparations \\
            $\Prep{P_Z}{P_X}$};
        \node[align=center, draw, fill=white, rectangle, rounded corners] (prep)
            at (5.0,-1.25)
            {Pauli measurements \\
            $\Meas[c]{P}$};

        \draw[fill=blue!15, very thick, rounded corners]
            (-0.75, -0.5) rectangle (7.2, 1.75);
        \node[align=center] (unitary) at (3.75, 1.5)
            {Unitary gates};

        \node[align=center, draw, fill=white, rectangle, rounded corners] (rotations) at (5.5, 0.5) { Pauli rotations \\ $\Rot{P}{\theta} = e^{-i \frac{\theta}{2} P}$ };

        \draw[fill=blue!20, very thick, rounded corners] (-0.5,-0.35) rectangle (3.75, 1.25);
        \node[align=center] (clifford) at (1.75, 1.0)
            {Clifford gates};

        \node[align=center, draw, fill=white, rectangle, rounded corners] (pauli) at (0.5, 0.25) {Paulis \\ $X_0 Z_1 Y_3$ };
        
        \node[align=center, draw, fill=white, rectangle, rounded corners] (frames) at (2.5, 0.25) {Pauli frames \\ $F : \Pauli \rightarrow \Pauli$ };

    \end{tikzpicture}
    }
    \caption{Types of \PCAST nodes in relation to each other. Note that Paulis themselves are not nodes, but are represented in \PCAST as Pauli frames.}
    \label{fig:nodesCategories}
\end{figure}

\begin{ExtendedVersion}
\begin{table}
    \centering
    \begin{tabular}{lll} \toprule
    Category & Notation & Definition \\
    \midrule % Constants
    Preliminaries
    %& $\theta \in \mathbb{R}$ & real rotation angle \\
    %& $\alpha \in \mathbb{C}$ & complex number \\
    & $g$ & Quantum gate \\
    & $C$ & Quantum circuit \\
    & $U$ & Unitary transformation \\
    %& $\ket{\phi}$ & quantum state \\
    & $\rho$ & Density matrix \\
    \midrule % Semantics
    Semantics
    & $\mathcal{M}$ & Measurement space \\
    & $c$ & Classical variable \\
    & $b \in \{0,1\}$ & Boolean value \\
    & $m \in \mathcal{M}$ & Classical state \\
    & $\gamma \in \CQSTATE[\mathcal{M}]$ & Classical-quantum state \\
    & $\mu : \mathcal{M}_1 \rightarrow \mathcal{M}_2$ 
        & Measurement space function \\
    %& \jennifer{?} & classical-quantum channel \\
    %& v & Pauli vector \\
    %& \jennifer{?} & Pauli map \\
    \midrule % Paulis
    \PCAST
    & $p \in \{I, X, Y, Z\}$ & Single-qubit Pauli \\
    & $P \in \Pauli{n}$ & $n$-qubit Pauli \\
    & $F$ & Pauli frame \\
    & $n$ & \PCAST node \\
    & $t$ & \PCAST term \\
    & $o \in \{\text{hold}, \text{release}\} $ & Measurement outcome \\
    & $G$ & \PCAST graph \\
    \bottomrule
\end{tabular}
    \caption{Notations used in this paper.}
    \label{fig:notations}
\end{table}

\end{ExtendedVersion}

The addition of these non-unitary nodes enables a host of additional internal optimizations on \PCAST graphs. Users can choose between two optimization outcomes: either a \emph{hold outcome}, where the optimizations preserve the semantics of the original circuit precisely; or a \emph{release outcome}, where more aggressive optimizations can be applied as long as they produce the same measurement results. A release outcome  will drop unitary gates that can be delayed until after measurement, with the guarantee that the measurement results will always be statistically equivalent. To achieve this, we introduce classical remappings of measurement variables via what we call measurement space functions. For example, if a \release outcome is specified for \cref{fig:demo}, it will be optimized to the \PCAST graph as shown in \cref{fig:demo-optimized-graph} using the measurement space function given in the bottom right.

Finally, we adapt \citet{Schmitz2021}'s synthesis algorithm to determine both how to order commuting nodes, and how to decompose multi-qubit nodes using sequences of two-qubit entangling gates. The result is shown in \cref{fig:demo-optimized-circuit}. We customize the synthesis algorithm with a number of heuristics to minimize cost according to a given cost model, map into a target gate set, and reduce the number of measurements required for a stabilizer search. Currently, synthesis primarily aims to minimize algorithm-level resource requirements like circuit depth, although the design allows for customization to prioritize other search criteria.\footnote{The full implications of such customization, including hardware-aware layout, routing, and scheduling, are beyond the scope of this paper.}

This work makes the following contributions:
\begin{itemize}
\item We develop a semantics in which to describe the behavior of \PCAST nodes, terms, and graphs that incorporates both classical and quantum states.
\item We introduce the key \PCAST data structures, including Pauli frames and the \PCAST graph. 
\item We present three major components of the \PCAST optimization: compiling a circuit to a \PCAST graph, optimizing the graph, and synthesizing a circuit back out. 
\item We implement \PCAST in C++ as a sequence of compiler passes in the Intel$^{\text{\textregistered}}\xspace$ Quantum Software Development Kit (SDK)\footnote{\url{https://developer.intel.com/quantumsdk}}~\citep{Khalate2022}, and evaluate its compilation performance against two state-of-the-art optimizing quantum compilers, Qiskit~\citep{Qiskit} and \tket~\citep{Sivarajah_2021_tket}. Our experimental results show that \PCOAST reduces total gate count by 32.53\% and 33.33\% on average, compared to the best performance achieved by Qiskit and \tket respectively, two-qubit gates by 29.22\% and 20.58\%, and circuit depth by 42.02\% and 51.27\%.
%\item We compare \PCAST to related work (\cref{sec:related}) and conclude with a discussion of future work (\cref{sec:conclusion}).
\end{itemize}

%\footnotetext{Full details of the internal optimizations and proofs of correctness can be found in a companion paper~\companion.}

\begin{ShortVersion}
An extended version of this paper gives full proofs for all lemmas and theorems~\citep{PaykinSchmitz2023PCOAST}.
%Full proofs of the lemmas and theorems in this paper can be found in the extended version~\citep{PaykinSchmitz2023PCOAST}.
\end{ShortVersion}

%This work makes the following contributions. \cref{sec:background} introduces necessary background about quantum circuits and Pauli matrices, while \cref{sec:semantics} develops a semantics in which to describe the behavior of \PCAST nodes, terms, and graphs that incorporates both classical and quantum states. \cref{sec:structures} introduces the main structures of the \PCAST optimization, including Pauli frames and non-unitary gates for preparation and measurement parameterized by arbitrary Paulis. Key theorems throughout these sections are given proofs in an extended version of this paper~\extendedref.

%\cref{sec:optimization} describes the three major components of the \PCAST optimization: compiling a circuit to a \PCAST graph, optimizing the graph, and synthesizing a circuit back out. Further details of the internal optimizations and proofs of correctness are given in a companion paper \companion.

%We have implemented \PCAST in C++ as a sequence of compiler passes in the Intel Quantum Software Development Kit (SDK)\footnote{\url{https://developer.intel.com/quantumsdk}}~\citep{Khalate2022}, and we evaluate its behavior in \cref{sec:results}. We compare the behavior of the \PCAST optimizations to other implementations implemented in Qiskit and \tket, and find that \todo{findings}.

%Finally, in \cref{sec:related} we discuss related work, and in \cref{sec:conclusion} we conclude with a discussion of future work, including plans to adapt the synthesis algorithm to include hardware-aware optimizations such as topology and scheduling.

\section{Background}
\label{sec:background}
\begin{comment}
\subsection{Principles of Quantum Computation} \label{sec:quantumbackground}
% qubit -> quantum gate -> pauli group

An $n$-qubit quantum state is a linear combination of $2^n$ basis states $\ket{b_0 \ldots b_{n-1}}=(b_0 ~\cdots ~b_{n-1})^T$ with norm $1$.
%
A quantum circuit is a series of quantum gates---operations applied sequentially to transform one quantum state into another. Unitary gates correspond to unitary matrices---$2^n \times 2^n$ complex matrices $U$ whose conjugate transpose is its own inverse---that transform a state $\ket{\varphi}$ into $U \ket{\varphi}$.  %The set of single-qubit gates and two-qubit controlled gates is universal, in that they can be combined to construct any quantum gate~\cite{nielsen2002quantum}.

Non-unitary operations, namely preparation and measurement, act probabilistically on quantum states.
\end{comment}
Quantum states are represented as density matrices $\rho$: positive semi-definite, Hermitian complex matrices with trace $1$. Unitary transformations act on them via conjugation: $U \rho U^\dagger$. A density matrix is called a pure state if it can be written as the outer product of two state vectors $\ket{\phi}\bra{\phi}$.
If not, it is a mixed state and can be written as the weighted sum of pure states, representing a probability distribution over pure states.
The behavior of a quantum circuit can be described as a function over density matrices known as a quantum channel~\cite{nielsen2002quantum}.
\subsection{The Pauli group}

A single-qubit Pauli $p$ is one of $X$, $Y$, $Z$, or $I$, where%, as shown in \cref{fig:PauliMultiplication}. 
%The figure shows that $p_1 \cdot p_2$ is always equal to another Pauli, scaled by a constant factor $\alpha \in \{1,-1,i,-i\}$.
\begin{align}
    X &= \begin{pmatrix}
        0 & 1 \\
        1 & 0
    \end{pmatrix} \qquad
    Y = \begin{pmatrix}
        0 & -i \\
        i & 0
    \end{pmatrix} \qquad
    Z = \begin{pmatrix}
        1 & 0 \\
        0 & -1
    \end{pmatrix}
\end{align}
$X$, $Y$, and $Z$ are Hermitian, meaning that $p p = I$, and satisfy $X Y Z = i$.
A consequence is that any two single-qubit Paulis either commute (written
$\commute{p_1}{p_2}$), meaning $p_1 \cdot p_2 = p_2 \cdot p_1$; or
anticommute ($\anticommute{p_1}{p_2}$), meaning %
$p_1 \cdot p_2 = -p_2 \cdot p_1$. We write
\begin{align}
    \lambda(p_1,p_2) &= \begin{cases}
        0 & \commute{p_1}{p_2} \\
        1 & \anticommute{p_1}{p_2}
    \end{cases}
\end{align}

\begin{comment}
\begin{figure}
    \centering
\begin{align*}
    \begin{aligned}
    X &= \begin{pmatrix}
        0 & 1 \\
        1 & 0
    \end{pmatrix} \\
    Y &= \begin{pmatrix}
        0 & -i \\
        i & 0
    \end{pmatrix}
    \end{aligned}
    \qquad
    \begin{aligned}
    Z &= \begin{pmatrix}
        1 & 0 \\
        0 & -1
    \end{pmatrix} \\
    I &= \begin{pmatrix}
        1 & 0 \\
        0 & 1
    \end{pmatrix}
    \end{aligned}
    \qquad
    \begin{aligned}
    %I \cdot p &= p = p \cdot I \\
    p \cdot p &= I \\
    X \cdot Y &= iZ = -Y \cdot X \\
    Z \cdot X &= i Y = -X \cdot Z \\
    Y \cdot Z &= i X = -Z \cdot Y
    \end{aligned}
\end{align*}    
    \caption{Multiplicative identities of the single-qubit Pauli group}
    \label{fig:PauliMultiplication}
\end{figure}
\end{comment}

An $n$-qubit Pauli $P =\alpha(p_0,\ldots,p_{n-1}) \in \Pauli_n$
is the tensor product of $n$ single-qubit Paulis scaled by 
$\alpha \in \{1,-1,i,-i\}$.
Its support, $\support(P)$, is the set of indices for which $p_i \neq
I$. We write $X_i$, $Y_i$, and $Z_i$ for Paulis with support $\{i\}$,
and thus the Pauli string $X_0 Z_2$ refers to $(X, I, Z)$.
%We also overload the notation $I$ for the Pauli with no support: $(I,\cdots,I)$.

Multiplication can be lifted to $n$-qubit Paulis as follows:
\begin{align}
    P_1 \cdot P_2 &= \alpha_1 \alpha_2 \sigma_0 \cdots \sigma_{n-1} (q_0,\ldots,q_{n-1})
\end{align}
where $P_i = \alpha_i (p_0^i,\ldots,p_{n-1}^i)$ and $p_i^1 \cdot p_i^2 = \sigma_i q_i$.
As a result, $n$-qubit Paulis form a group with identity $I=(I,\ldots,I)$.

Commutativity can be lifted to $n$-qubit Paulis via a binary function
$\commutativity{P}{P'} \in \{0,1\}$ such that
$P \cdot P' = (-1)^{\commutativity{P}{P'}} P' \cdot P$:
{\small \begin{align}
    \commutativity{\alpha(p_0,\ldots,p_{n-1})}{\alpha'(p_0',\ldots,p_{n-1}')}
        &= \sum_{i=0}^{n-1} \commutativity{p_i}{p_i'}
            \mod{2}
\end{align} }%
We write $\commute{P}{P'}$ if $\lambda(P,P')=0$
and $\anticommute{P}{P'}$ if $\lambda(P,P')=1$.
\begin{ExtendedVersion}
    Note that $\commutativity{P}{Q}=\commutativity{Q}{P}$.
\end{ExtendedVersion}

We are most interested in Hermitian Paulis where $P P = I$. Since single-qubit Paulis are all Hermitian, an $n$-qubit Pauli is Hermitian
if and only if its coefficient is $\pm1$.
\begin{ExtendedVersion}
\begin{lemma} \label{lem:HermitianPauliFacts}
Let $P_1$ and $P_2$ both be Hermitian. Then
\begin{itemize}
    \item $-P_1$ is Hermitian.
    \item If $\commute{P_1}{P_2}$, then $P_1 \cdot P_2$ is Hermitian.
    \item If $\anticommute{P_1}{P_2}$ then $i P_1 \cdot P_2$ is Hermitian.
\end{itemize}
\end{lemma}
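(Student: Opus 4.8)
The plan is to work from the characterization of Hermitian Paulis stated just above the lemma, namely that a Pauli $P$ is Hermitian exactly when $P^\dagger = P$, equivalently when its scalar coefficient lies in $\{1,-1\}$, combined with the behaviour of the conjugate transpose on products. The first bullet is immediate in either formulation: if $P_1^\dagger = P_1$ then $(-P_1)^\dagger = -P_1^\dagger = -P_1$; equivalently, negating a coefficient in $\{1,-1\}$ keeps it in $\{1,-1\}$.

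For the second and third bullets I would start from $(P_1 \cdot P_2)^\dagger = P_2^\dagger \cdot P_1^\dagger = P_2 \cdot P_1$, using that both $P_1$ and $P_2$ are Hermitian. Now invoke the defining property of $\lambda$, which gives $P_1 \cdot P_2 = (-1)^{\commutativity{P_1}{P_2}} P_2 \cdot P_1$, hence $(P_1 \cdot P_2)^\dagger = (-1)^{\commutativity{P_1}{P_2}} P_1 \cdot P_2$. When $\commute{P_1}{P_2}$ this says $(P_1 \cdot P_2)^\dagger = P_1 \cdot P_2$, so $P_1 \cdot P_2$ is Hermitian, proving the second bullet. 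When $\anticommute{P_1}{P_2}$ it says $(P_1 \cdot P_2)^\dagger = -\, P_1 \cdot P_2$, and therefore $(i P_1 \cdot P_2)^\dagger = -i (P_1 \cdot P_2)^\dagger = -i(-\,P_1 \cdot P_2) = i P_1 \cdot P_2$, proving the third bullet.

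As an alternative that stays inside the paper's algebraic presentation, one can track coefficients directly: writing $P_j = \alpha_j(p^j_0,\dots,p^j_{n-1})$ with $\alpha_j \in \{1,-1\}$ and $p^1_i \cdot p^2_i = \sigma_i q_i$, the positions $i$ contributing a non-real $\sigma_i$ are precisely the single-qubit anticommuting positions, each contributing $\sigma_i \in \{i,-i\}$, while commuting positions contribute $\sigma_i = 1$. Hence the coefficient of $P_1 \cdot P_2$ is $\alpha_1\alpha_2\prod_i \sigma_i = \pm\, i^{k}$ where $k$ is the number of anticommuting positions, and $k \equiv \commutativity{P_1}{P_2} \bmod 2$. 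If $P_1,P_2$ commute then $k$ is even and the coefficient is $\pm 1$; if they anticommute then $k$ is odd, the coefficient is $\pm i$, and multiplying by $i$ brings it back into $\{1,-1\}$. There is no serious obstacle in this lemma; the only delicate point is the scalar bookkeeping in this second approach — confirming that $n$-qubit (anti)commutativity as defined by $\lambda$ really does control the parity of the power of $i$ in the product's coefficient — which is exactly why I would lead with the adjoint argument, since it reuses the already-established identity $P_1 \cdot P_2 = (-1)^{\commutativity{P_1}{P_2}} P_2 \cdot P_1$ and never touches the coefficient explicitly.
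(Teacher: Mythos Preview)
Your proof is correct. The paper takes a slightly different but equally short route: instead of verifying $P^\dagger = P$, it uses the (for Paulis equivalent) self-inverse criterion $PP = I$, computing $P_1 P_2 P_1 P_2 = (-1)^{\lambda(P_1,P_2)} P_1 P_1 P_2 P_2 = (-1)^{\lambda(P_1,P_2)} I$ and reading off Hermiticity of $P_1 P_2$ (or of $i P_1 P_2$) from the sign. Your adjoint computation is essentially the mirror image of this and has the mild advantage that it is the literal definition of Hermiticity and would work for arbitrary Hermitian operators; the paper's version stays closer to its own stated characterization of Hermitian Paulis via $PP = I$ and never needs the antihomomorphism property of $\dagger$. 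Your coefficient-tracking alternative is also sound, and your assessment of it is right: the parity bookkeeping is the only place one could slip, which is why either clean argument is preferable.
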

\begin{proof}
    The first bullet is trivial.
    The second and third follow because 
    \begin{align}
        P_1 P_2 P_1 P_2 = (-1)^{\commutativity{P_1}{P_2}} P_1 P_1 P_2 P_2 = (-1)^{\commutativity{P_1}{P_2}} I.
    \end{align}
    If $\commute{P_1}{P_2}$ this is $I$, and if $\anticommute{P_1}{P_2}$ it is $-I$, in which case $(i P_1 P_2)\cdot(i P_1 P_2) = -i^2 I = I$.
\end{proof}
\end{ExtendedVersion}
The \emph{Hermitian product} of Hermitian Paulis is $P_1 \hmult P_2 = (-i)^{\commutativity{P_1}{P_2}}P_1 \cdot P_2$: if $P_1$ and $P_2$ are both Hermitian, then so is $P_1 \hmult P_2$.

\section{Semantics} \label{sec:semantics}
As quantum channels, \PCAST nodes could be seen as transformations on quantum states. However, because \PCAST deals with mixed unitary and non-unitary circuits, it also must account for transformations on \emph{classical states}, for example when writing measurement outcomes a classical registers. A classical state $m$ is a finite sequence of assignments of classical variables $c$ to boolean values $b \in \{0,1\}$, written
$
    \cAssign{c_0}{b_0}; \cdots; \cAssign{c_{n-1}}{b_{n-1}}.
$
The boolean value associated with a variable is written $\cLookup{m}{c}$, and 
a finite set of classical states is referred to as a \emph{measurement space} $\mathcal{M}$.
\begin{ExtendedVersion}
    The domain of a classical state $\dom(m)=\{c_0,\ldots,c_{n-1}\}$ is the set of variables it affects.
\end{ExtendedVersion}

\subsection{Classical-quantum states}

Instead of working with density matrices directly, we will operate over mixed \emph{classical-quantum states}~ \citep{feng2021quantum,selinger2004towards}. A cq-state $\gamma \in \CQSTATE[\mathcal{M}] = \mathcal{M} \rightarrow \mathcal{C}^n \times \mathcal{C}^n$, sometimes written $m \mapsto \gamma_m$, is a function from a classical state $m \in \mathcal{M}$ to the quantum state of the system after the measurement outcome $m$ is observed. The quantum state $\gamma_m$ is represented as a \emph{partial} density matrix $\gamma_m$, whose trace $0 \le \tr(\gamma_m) \le 1$ corresponds to the probability of observing $m$. 
The sum of all the partial density matrices in the image of a cq-state is a full density matrix with trace $1$.
%$ \sum_{m \in \mathcal{M}} \tr(\gamma_m) = 1$.

%, and will write $\bigoplus_{m \in M} \cqstate{m}{\rho_m}$ for the function that maps each $m$ to $\rho_m$. 
As an example, the cq-state obtained from executing the circuit $\PrepZ(0); H(0); \MeasZ^c(0)$ is $m \mapsto \tfrac{1}{2}\ket{\cLookup{m}{c}}\bra{\cLookup{m}{c}}$.
%\begin{align}
%    \cqstate{c \mapsto 0}{\frac{1}{2}\ket{0}\bra{0}}
%    \oplus
%    \cqstate{c \mapsto 1}{\frac{1}{2}\ket{1}\bra{1}}.
%\end{align}

When it is clear from context, we write $\rho$ for the constant cq-state $\_ \mapsto \rho$.
Scaling and summation of cq-states over the same measurment space is defined pointwise.

\PCAST utilizes two different equivalence relations on cq-states. The \emph{hold} relation completely preserves the quantum state corresponding to every classical state, while the \emph{release} relation only requires that the probability of being in the same state, $\tr(\gamma^i_m)$, is the same for each classical state $m$.%\footnote{If an outcome is not specified, as in $\gamma^1 \equiv \gamma^2$, a hold outcome is assumed.}
\begin{align}
    \gamma^1 \equiv^{\text{hold}} \gamma^2 &\iff \forall m \in \mathcal{M},~ \gamma^1_m=\gamma^2_m \\
    \gamma^1 \equiv^{\text{release}} \gamma^2 &\iff \forall m \in \mathcal{M},~ \tr(\gamma^1_m) = \tr(\gamma^2_m)
\end{align}

In this context, the semantics of a quantum circuit $C$ can be described as a classical-quantum channel---a linear function $\interpM{C} : \CQSTATE[\mathcal{M}_1] \rightarrow \CQSTATE[\mathcal{M}_2]$ between cq-states, where $\mathcal{M}_1$ is the set of states exectuion may be in before $C$, and $\mathcal{M}_2$ contains the states the program may be in after executing $C$.
\begin{ExtendedVersion}
    The set of variables that $C$ writes to is referred to as $\fv(C)$, and it will always be the case that
    \begin{align}
        \mathcal{M}_2 \subseteq \{m_1; \{\cAssign{c_i}{b_i}\} \mid m_1 \in \mathcal{M}_1 \wedge \fv(C)=\{c_i\} \}.
    \end{align}
\end{ExtendedVersion}

\subsection{Sum-of-Pauli semantics}

Classical-quantum channels will be used to describe the behavior of circuits and \PCAST nodes on classical variables. However, the majority of nodes do not affect classical states at all.
%Even measurement, which outputs results to a classical variable, does not read from the classical state.
In that case, their behavior can be naturally described as a \emph{Pauli map}, a function from $n$-qubit Paulis to cq-states. 
Intuitively, the quantum component of the input cq-state will be decomposed into a sum of Pauli operators scaled by arbitrary complex values, which we call \emph{Pauli vectors}. 

\begin{lemma} \label{lem:PauliVectorDecomposition}
Every $2^n \times 2^n$ complex matrix $A$ can be decomposed into a Pauli vector
$A = \sum_{i} \alpha_i P_i$.
\end{lemma}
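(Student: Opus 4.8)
The plan is to exhibit the $4^n$ Hermitian Pauli strings---those of the form $(p_0,\ldots,p_{n-1})$ with coefficient $1$ and each $p_i \in \{I,X,Y,Z\}$---as a basis of the complex vector space $M_{2^n}(\mathbb{C})$ of $2^n \times 2^n$ matrices, which has dimension $4^n$. Since this candidate set already has cardinality $4^n$, it suffices to prove it is linearly independent; the existence of a decomposition $A = \sum_i \alpha_i P_i$ then follows immediately, with the $\alpha_i \in \mathbb{C}$ the (unique) coordinates of $A$ in that basis. The general Pauli vectors allowed by the statement are a superset of these combinations, so existence for the Hermitian-string case is enough.

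To prove linear independence I would use the Hilbert--Schmidt inner product $\langle A, B \rangle = \tr(A^\dagger B)$. The key computation is at the single-qubit level: since every single-qubit Pauli is Hermitian, $\tr(p^\dagger q) = \tr(pq)$, which equals $\tr(I) = 2$ when $p = q$, and equals $0$ when $p \neq q$ --- because then $pq$ is either a non-identity Pauli (if one of $p,q$ is $I$) or $\pm i$ times the third non-identity Pauli (using $XYZ = i$), and in every case it is traceless. Using $\tr(A \otimes B) = \tr(A)\tr(B)$ together with the fact that two distinct Pauli strings $P = (p_0,\ldots,p_{n-1})$ and $Q = (q_0,\ldots,q_{n-1})$ must differ in at least one tensor factor, this lifts multiplicatively to $\tr(P^\dagger Q) = \prod_{i=0}^{n-1} \tr(p_i^\dagger q_i) = 2^n \delta_{P,Q}$. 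Hence the Hermitian Pauli strings are pairwise orthogonal and nonzero, so linearly independent.

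Combining the two points: a linearly independent family of $4^n$ vectors in the $4^n$-dimensional space $M_{2^n}(\mathbb{C})$ is a basis, so every $A$ decomposes as $A = \sum_{P} \alpha_P P$. One can even write the coefficients explicitly as $\alpha_P = 2^{-n}\tr(P^\dagger A) = 2^{-n}\tr(PA)$ (using Hermiticity of $P$), though the lemma only asserts existence. If one prefers, the sum can be reindexed over all of $\Pauli_n$ by absorbing the coefficients $\{1,-1,i,-i\}$ into the scalars $\alpha_P$.

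The main obstacle---really the only step requiring any care---is the tracelessness of products of distinct single-qubit Paulis and its clean propagation through the tensor product via $\tr(A\otimes B)=\tr(A)\tr(B)$; everything else is elementary dimension counting. An alternative proof avoiding the inner product is induction on $n$: the base case $n=1$ is the direct verification that $\{I,X,Y,Z\}$ spans the $2\times 2$ matrices, and the inductive step writes a $2^{n+1}\times 2^{n+1}$ matrix as a $2\times 2$ array of $2^n\times2^n$ blocks, decomposes the outer $2\times2$ structure into single-qubit Paulis on the new qubit and each block into an $n$-qubit Pauli vector by the induction hypothesis, then distributes and collects terms.
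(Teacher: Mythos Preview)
Your proof is correct but takes a genuinely different route from the paper's. The paper proceeds constructively: it writes $A=\sum_{i,j}A_{i,j}\ket{i}\bra{j}$ and then shows each elementary matrix $\ket{i}\bra{j}$ is a sum of Paulis, first checking the four $1$-qubit cases explicitly (e.g.\ $\ket{0}\bra{0}=\tfrac12(I+Z)$, $\ket{0}\bra{1}=\tfrac12(X+iY)$) and then inducting on $n$ via the $2\times 2$ block/quadrant decomposition---essentially the alternative you sketch in your final sentence. Your primary argument instead establishes that the $4^n$ Hermitian Pauli strings are Hilbert--Schmidt orthogonal, hence a basis by dimension count. Your approach is slicker and yields both uniqueness and the explicit coefficient formula $\alpha_P=2^{-n}\tr(PA)$ for free; the paper's approach is more hands-on and makes the recursive tensor structure of the decomposition visible without invoking an inner product.
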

\begin{ExtendedVersion}
\begin{proof}
    Notice that we can break down every any complex matrix $A$
    into a sum of its individual elements:
    \begin{align}
        A = \sum_{i,j} A_{i,j} \times \ket{i}\bra{j}
    \end{align}
    where $\ket{i}\bra{j}$ is $1$ at index $(i,j)$ and $0$ elsewhere.
    ~
    It suffices to
    show that each $\ket{i}\bra{j}$ can be represented as a sum of Paulis.
    For the 1-qubit case ($0 \le i,j \le 1$) this is easy to check:
    \begin{align}
        \ket{0}\bra{0} &= \begin{pmatrix} 1 & 0 \\ 0 & 0 \end{pmatrix}
            = \frac{1}{2}(I + Z) \\
        \ket{0}\bra{1} &= \begin{pmatrix} 0 & 1 \\ 0 & 0 \end{pmatrix}
            = \frac{1}{2}(X + i Y) \\
        \ket{1}\bra{0} &= \begin{pmatrix} 0 & 0 \\ 1 & 0 \end{pmatrix}
            = \frac{1}{2}(X - i Y) \\
        \ket{1}\bra{1} &= \begin{pmatrix} 0 & 0 \\ 0 & 1 \end{pmatrix}
            = \frac{1}{2}(I - Z)
    \end{align}

    Now we can prove the property by induction: suppose the property holds for matrices of size $n$, which we write $\ket{i} \cdot^n\bra{j}$, and consider $\ket{i} \cdot^{n+1}\bra{j}$. If $i,j$ is in the first 
    quadrant of the matrix e.g. $0 \le i,j < 2^{n}$,
    then 
    \begin{align}
        \ket{i} \cdot^{n+1} \bra{j}
        &= \begin{pmatrix} 1 & 0 \\ 0 & 0 \end{pmatrix} \otimes \ket{i} \cdot^{n} \bra{j}
        = \frac{1}{2} (I + Z) \otimes \ket{i} \cdot^{n} \bra{j}.
    \end{align}
    This can itself be written as a sum of scaled Paulis by the distributivity of
    the tensor product over addition. Similarly, if $0 \le i < 2^n$ and $2^n \le j < 2^{n+1}$
    then 
    \begin{align}
        \ket{i} \cdot^{n+1} \bra{j} &= \begin{pmatrix} 0 & 1 \\ 0 & 0 \end{pmatrix} \otimes \ket{i} \cdot^{n} \bra{j}
        = \frac{1}{2} (X + iY) \otimes \ket{i} \cdot^{n} \bra{j}
    \end{align}
    and similarly for the rest of the quadrants.
\end{proof}
\end{ExtendedVersion}

\begin{ExtendedVersion}
A complex matrix is one basis for describing a linear operator, and a Pauli vector another. In general, one might use one basis over another because it is more sparse (has fewer no-zero terms) or because an action on a vector is easier to understand in that basis. So it is for a Pauli vector: because Paulis are natural for both unitaries and measurables, it tends to be a sparser representation, and many gate actions (Cliffords, for example) are naturally understood in this basis.
\end{ExtendedVersion}

Multiplication of Pauli vectors, written $v_1 \cdot v_2$, and conjugate
transpose, $v^\dagger$, are defined in the expected way.
\begin{ExtendedVersion}
\begin{align}
    \left( \sum_i \alpha_i P_i \right) \cdot \left( \sum_j \alpha_j' P_j' \right)
        &= \sum_{i,j} \alpha_i \alpha_j' P_i P_j' \\
    \left( \sum_i \alpha_i P_i \right)^\dagger
        &= \sum_i \alpha_i^\ast P_i^\dagger
\end{align}
A Pauli vector $v$ has an inverse $v^{-1}$ when $v^{-1} \cdot v = v \cdot v^{-1}
= 1$.
    As expected, a Pauli vector $v$ is called \emph{unitary} when
$v^{-1} = v^\dagger$, and \emph{pure} when $v v = v$.
\end{ExtendedVersion}
%
%The decomposition can be used as an alternative to cq-channels.

\begin{definition} \label{defn:PauliMap}
A \emph{Pauli map} is a function $f : \Pauli_n \rightarrow \CQSTATE[\mathcal{M}]$ from $n$-qubit Paulis to cq-states. It can be lifted to a cq-channel
$\interp{f} : \CQSTATE[\mathcal{M}_0] \rightarrow \CQSTATE[\mathcal{M}_0 + \mathcal{M}]$ as\footnote{$\mathcal{M}_1 + \mathcal{M}_2 = \{m_1; m_2 \mid m_i \in \mathcal{M}_i\}.$}
    \begin{align}
        \interp{f}(\gamma) = m_0;m \mapsto \sum_{i} \alpha_i \gamma_i(m)
    \end{align}
    where
    $\gamma(m_0)=\sum_i \alpha_i P_i$ and
    $\gamma_i = f(P_i) \in \CQSTATE[M]$.
\end{definition}

A Pauli vector $v$ can be lifted to a Pauli map
$\conjugate{v}(P) = v P v^\dagger$, called
the conjugation action of $v$. Scaling ($\alpha m$) and addition ($v_1 + v_2$) of Pauli maps are defined pointwise, and we say a Pauli map
\emph{acts on} a Pauli vector by mapping it over every Pauli in the vector:
$
    m(\sum_i \alpha_i P_i) = \sum_i \alpha_i m(P_i)
$.
%We can then define composition as $m_2 \circ m_1(v) = m_2(m_1(v))$.

\begin{ExtendedVersion}
Note that conjugation distributes over composition:
    \begin{align}\label{lem:ConjugateDist}
        \conjugate{v}(\conjugate{v'}(P))
        &= \conjugate*{v' v}(P).
    \end{align}

\begin{definition}
A \emph{monoidal} Pauli map is a map $f$ that respects the Pauli multipliciative
identities:
\begin{align}
    f(I) &= I \\
    f(P_1 \cdot P_2) &= f(P_1) \cdot f(P_2)
\end{align}
\end{definition}

\begin{lemma} \label{lem:MonoidalMapsNatural}
    Monoidal Pauli maps respect multiplication of Pauli vectors:
    $
        f(v_1 \cdot v_2) = f(v_1) \cdot f(v_2)
    $.
\end{lemma}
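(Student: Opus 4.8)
The plan is to reduce the claim to the monoidal identity on individual Paulis, via the linear action of a Pauli map. By \cref{lem:PauliVectorDecomposition} write $v_1 = \sum_i \alpha_i P_i$ and $v_2 = \sum_j \beta_j Q_j$ with $P_i, Q_j \in \Pauli_n$ and $\alpha_i, \beta_j \in \mathbb{C}$. By the definition of multiplication of Pauli vectors, $v_1 \cdot v_2 = \sum_{i,j} \alpha_i \beta_j\,(P_i \cdot Q_j)$, where each $P_i \cdot Q_j$ is again an element of $\Pauli_n$ (carrying a phase in $\{1,-1,i,-i\}$). After reindexing by the pair $(i,j)$ this is just a finite Pauli-vector expression, hence a legitimate argument for the action of $f$.

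First I would push $f$ through the sum using the fact that a Pauli map acts on a Pauli vector by mapping over its Pauli terms: $f(v_1 \cdot v_2) = \sum_{i,j} \alpha_i \beta_j\, f(P_i \cdot Q_j)$. Then I would apply the monoidal hypothesis termwise, $f(P_i \cdot Q_j) = f(P_i) \cdot f(Q_j)$, to obtain $f(v_1 \cdot v_2) = \sum_{i,j} \alpha_i \beta_j\, f(P_i) \cdot f(Q_j)$. Finally, by bilinearity of the product on the codomain over scalars and sums, the double sum factors as $\bigl(\sum_i \alpha_i f(P_i)\bigr) \cdot \bigl(\sum_j \beta_j f(Q_j)\bigr)$, which is exactly $f(v_1) \cdot f(v_2)$ by the definition of the action of $f$ on $v_1$ and $v_2$. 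Conceptually this is nothing more than the standard fact that a multiplicative map on the Pauli group, extended linearly, is multiplicative on Pauli vectors; no induction or case analysis is required.

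The one place I would be careful is the scalar phases produced by the Pauli-group products $P_i \cdot Q_j$, e.g. $X \cdot Y = iZ$: the hypothesis $f(P_1 \cdot P_2) = f(P_1) \cdot f(P_2)$ must be read with $P_1 \cdot P_2$ the true group product \emph{including its phase}, and the linear action of $f$ must treat such a phased Pauli consistently with scalar multiplication, so that the termwise rewriting in the middle step is unambiguous. Under the definitions as stated this is automatic, so there is no substantive obstacle — the proof is a short chain of definitions plus one distributivity step.
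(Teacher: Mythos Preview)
Your proof is correct and follows essentially the same approach as the paper's own: write each $v_k$ as a Pauli vector, expand the product, push $f$ through the sum by linearity, apply the monoidal hypothesis $f(P_i \cdot Q_j)=f(P_i)\cdot f(Q_j)$ termwise, and refactor. Your extra remark about the phases in $P_i \cdot Q_j$ is a nice sanity check that the paper leaves implicit.
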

\begin{proof}
    {\small \begin{align}
        &f\left(
            \left( \sum_i \alpha_i P_i \right)
            \cdot
            \left( \sum_j \alpha_j' P_j' \right)
        \right)
        = f\left( \sum_{i,j} \alpha_i \alpha_j' P_i P_j' \right) \\
        &= \left( \sum_{i,j} \alpha_i \alpha_j' f\left(P_i P_j'\right) \right)
        = \left( \sum_{i,j} \alpha_i \alpha_j' f\left(P_i\right) m\left(P_j'\right) \right) \\
        &= \left( \sum_i \alpha_i f(P_i) \right)
          \cdot
          \left( \sum_j \alpha_j' f(P_j') \right)
    \end{align} }%
\end{proof}

\begin{lemma} \label{lem:conjugateMonoidal}
    If $v$ is unitary, then $\conjugate{v}$ is a monoidal Pauli map.
\end{lemma}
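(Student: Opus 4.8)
The plan is to check the two defining equations of a monoidal Pauli map directly from the definition $\conjugate{v}(P) = v \cdot P \cdot v^\dagger$. The only hypothesis available is that $v$ is unitary, which by definition means $v^{-1} = v^\dagger$ and hence $v^\dagger \cdot v = v \cdot v^\dagger = 1$; combined with associativity of Pauli-vector multiplication this is all that is required. The argument is just the familiar observation that conjugation by an invertible element is an algebra endomorphism, transported to the Pauli-vector setting.

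First I would verify the unit law. We have
\begin{align}
    \conjugate{v}(I) = v \cdot I \cdot v^\dagger = v \cdot v^\dagger = 1,
\end{align}
and $1$ is precisely the Pauli vector corresponding to the identity Pauli $I$, so $\conjugate{v}(I) = I$ as required.

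Next I would verify multiplicativity. For arbitrary $P_1, P_2 \in \Pauli_n$, insert $1 = v^\dagger \cdot v$ between the two factors and regroup using associativity:
\begin{align}
    \conjugate{v}(P_1 \cdot P_2)
        &= v \cdot P_1 \cdot P_2 \cdot v^\dagger
         = v \cdot P_1 \cdot (v^\dagger \cdot v) \cdot P_2 \cdot v^\dagger \\
        &= (v \cdot P_1 \cdot v^\dagger) \cdot (v \cdot P_2 \cdot v^\dagger)
         = \conjugate{v}(P_1) \cdot \conjugate{v}(P_2).
\end{align}
Together with the unit law this shows $\conjugate{v}$ is monoidal.

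There is essentially no obstacle here; the only bit of bookkeeping is that a Pauli map by definition lands in cq-states rather than bare Pauli vectors, so strictly one should note that the implicit embedding of a Pauli vector as a constant cq-state commutes with scaling, addition, and multiplication --- which is immediate because all of these operations on cq-states are defined pointwise over the measurement space. Once that is observed the proof is the two-line computation above, and as a bonus it supplies exactly the ingredient needed to invoke \cref{lem:MonoidalMapsNatural}, showing that $\conjugate{v}$ also respects multiplication of Pauli vectors.
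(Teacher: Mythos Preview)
Your proof is correct and matches the paper's own argument essentially line for line: verify $\conjugate{v}(I) = v I v^\dagger = I$ using $v v^\dagger = 1$, then insert $v^\dagger v$ between $P_1$ and $P_2$ to split $\conjugate{v}(P_1 P_2)$ into $\conjugate{v}(P_1)\cdot\conjugate{v}(P_2)$. The extra remark about the cq-state embedding is harmless bookkeeping not present in the paper, but the core computation is identical.
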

\begin{proof}
\begin{align}
    \conjugate{v}(I) &= v I v^\dagger = I \\
    \conjugate{v}(P_1 P_2) &= v P_1 P_2 v^\dagger
        = v P_1 v^\dagger v P_2 v^\dagger = \conjugate{v}(P_1) \cdot \conjugate{v}(P_2)
\end{align}
\end{proof}
\end{ExtendedVersion}

\section{\PCAST data types}
\label{sec:structures}
  \PCAST uses five types of nodes for different sorts of gates:
\begin{itemize}[left=5pt]
    \item Pauli rotations $\Rot{P}{\theta}$ for non-Clifford unitaries $e^{-i \theta/2 P}$
    \item Pauli preparations $\Prep{P_1}{P_2}$
    \item Pauli measurements $\Meas[c]{P}$
    \item Measurement space functions $\mu$
    \item Pauli frames $F$ for Clifford unitaries
\end{itemize}
The semantics of \PCAST nodes are defined as Pauli maps $\interpM{n} : \Pauli_k \rightarrow \CQSTATE$, except for measurement space functions, which are defined as cq-channels directly.
%The semantics of a \PCAST node is given as a cq-channel $\interpM{n} : \CQSTATE \rightarrow \CQSTATE$,
%though rotations, measurements, and preparations can be more directly described in terms of Pauli maps. 

\begin{definition} \label{defn:rot}
A Pauli rotation $\Rot{P}{\theta}$ consists of a Hermitian Pauli $P$ on $k$
qubits and a real number $\theta \in \mathbb{R}$. The semantics of a Pauli
rotation $\interpM{\Rot{P}{\theta}}$ is given by the conjugation action of its corresponding unitary, $\interp{\Rot{P}{\theta}}$:\footnote{Recall, if the classical state is not specified, it is assumed to be the constant cq-state $\cqstate{\_}{\rho}$.}
  \begin{align}
    \interpM{\Rot{P}{\theta}}(Q) &= \interp{\Rot{P}{\theta}} Q \interp{\Rot{P}{-\theta}} \\
    \interp{\Rot{P}{\theta}} &= e^{-i \theta / 2 P}
    = \cosBy2{\theta} I + i \sinBy2{\theta} P
  \end{align}
\end{definition}

% \citet{Zhang2019} define an optimization technique that commutes Cliffords past rotations of the form $\Rot{P}{\tfrac{\pi}{2}}$, since they were primarily concerned with the Clifford + T gateset. In this work we are focused on NISQ architectures, many of which support arbitrary single-qubit rotations. \todo{Move this discussion to introduction?} In addition, \PCAST optimizations support not just unitaries, but also preparations and measurements. Generalizing these components by their Pauli axis as well, we can use the same technique to describe how Cliffords commute past them.

\begin{definition} \label{defn:prep}
    A Pauli preparation
    $\Prep{P_Z}{P_X}$ consists of a pair of non-commutative Hermitian Paulis. 
    \begin{align}        
        &\interpM{\Prep{P_Z}{P_X}}(Q) 
        \nonumber \\
        &=
        \frac{1}{4} \left( (I + P_Z)Q(I + P_Z) + P_X(I - P_Z)Q(I-P_Z)P_X \right)
        \nonumber \\
        &= \begin{cases}
            Q (I + P_Z) & \commutativity{Q}{P_Z} = \commutativity{Q}{P_X} = 0 \\
            0           & \text{otherwise}
        \end{cases}
    \end{align}
    %where $\conjugate{v}(Q) = v Q v^\dagger$ corresponds to conjugation by
    %a unitary $v$. \albert{I don't think you want to define this for only unitaries}
\end{definition}

Intuitively, $\Prep{P_Z}{P_X}$ collapses the state in the eigen-subspaces of $P_Z$ and, if -1 is obtained, applies $P_X$.
As an example, $\Prep{Z_i}{X_i}$ prepares the $i$th qubit in the $Z$ basis.

\begin{definition} \label{defn:meas}
    A Pauli measurement $\Meas[c]{P}$ 
    consists of a Hermitian Pauli acting as a measurement operator.
    \begin{align}
        &\interpM{\Meas[c]{P}}(Q)
        = m \mapsto \begin{cases}
        \tfrac{1}{4} (I + P) Q (I + P)
            & \cLookup{m}{c}=0 \\
        \tfrac{1}{4} (I - P) Q (I - P)
            & \cLookup{m}{c}=1 
        \end{cases} \nonumber \\
        &\qquad= \cAssign{c}{b} \mapsto \begin{cases}
        \tfrac{1}{2} Q (I + (-1)^b P)
            & \commutativity{P}{Q}=0 \\
        % \tfrac{1}{2} Q (I + P)
        %     & \commutativity{P}{Q}=0 \wedge \cLookup{m}{c}=0 \\
        % \tfrac{1}{2} Q (I - P)
        %     & \commutativity{P}{Q}=0 \wedge \cLookup{m}{c}=1 \\
        0
            & \commutativity{P}{Q}=1
        \end{cases}
    \end{align}
\end{definition}

As an example, $\Meas[c]{Z_i}$ measures qubit $i$ in the $Z$ basis.
$c$ is the classical register in which measurement is
recorded.
%\footnote{\jennifer{Instead of having $c$ as a list of registers, can we just use measurement space functions?}}
%We write $c \plusplus c'$ for the list concatenation of $c$ with $c'$.

\begin{definition}
    A measurement space function $\mu : \mathcal{M}_1 \rightarrow \mathcal{M}_2$ is a function between two measurement spaces. It is interpreted as a cq-channel $\interpM{\mu} : \CQSTATE[\mathcal{M}_1] \rightarrow \CQSTATE[\mathcal{M}_2]$ as follows:
    \begin{align}
        \interpM{\mu}(\gamma) &= m_2 \mapsto \sum_{m_1 \in \mu^{-1}(m_2)} \gamma_{m_1}.
    \end{align}
\end{definition}

As an example, in the circuit $\Prep{Z_i}{X_i}; \Meas[c]{Z_i}$, the measurement can be optimized away, yielding $\Prep{Z_i}{X_i}; \mu$
% \begin{align}
%     \Prep{P_Z}{P_X}; \Meas[c]{P_Z} &\equiv \Prep{P_Z}{P_X}; \mu
% \end{align}
where $\mu(m)=m; \cAssign{c}{0}$.%
\begin{ExtendedVersion}
\footnote{For conciseness, we sometimes write $\mu=\cAssign{c_0}{b_0};\cdots;\cAssign{c_{n-1}}{b_{n-1}}$ instead of $\mu(m)=m; \cAssign{c_0}{b_0};\cdots;\cAssign{c_{n-1}}{b_{n-1}}$, for example in \cref{fig:demo}.}
\end{ExtendedVersion}

  \subsection{Pauli Frames} \label{sec:frame}

 Clifford unitaries satisfy the property that when acting on a Pauli $P$ by conjugation, the result $U P U^\dagger$ is still a Pauli.
A \emph{Pauli frame}\footnotemark\xspace is a compact representation of a Clifford unitary defined by that conjugation---or more precisely, its inverse conjugation $U^\dagger P U$---on every base Pauli string $Z_j$, $X_j$, and $Y_j$.
%The conjugation action of the Clifford respects multiplication, in that $ U^\dagger (P_1 P_2) U = (U^\dagger P_1 U) (U^\dagger P_2 U) $, and identity, in that $U^\dagger I U = I$.
%This means that to specify how it acts on a Pauli $\alpha (p_0,\ldots,p_{n-1})$, it suffices to show how it acts on each single-qubit Pauli at index $j$. 
For example, the inverse conjugation of $U=\CNOT_{0,1}$ is
\begin{align} \begin{array}{c | c c c}
    j & U^\dagger Z_j U & U^\dagger X_j U & U^\dagger Y_j U  \\
    \hline
    0 & Z_0 & X_0 X_1 & Y_0 X_1 \\
    1 & Z_0 Z_1 & X_1 & Z_0 Y_1
\end{array} \end{align}
It suffices to store only the first two columns of this table: we can derive $U^\dagger Y_j U=-i(U^\dagger Z_j U) (U^\dagger X_j U)$ since $Y=-i Z X$.
%Because $Y = -i Z X$, we can derive $U^\dagger Y_j U$ as $-i (U^\dagger Z_j U) (U^\dagger X_j U)$; so it suffices to store only the first two columns of this table.
Thus, an $n$-qubit Clifford is represented by an $n\times2$ Pauli frame, where the first column stores $U^\dagger Z_j U$ and the second column stores $U^\dagger X_j U$:
\begin{align}
    \begin{pmatrix}
        Z_0 & X_0 X_1 \\
        Z_0 Z_1 & X_1
    \end{pmatrix}
\end{align}

\footnotetext{Pauli tableaus~\cite{Aaronson2004} were first introduced as a way to simulate stabilizer states generated entirely from Clifford gates and single-qubit measurements. Since then, Pauli tableaus have been used to represent Clifford circuits in general, not just for the purposes of stabilizer simulation. Following \citep{Schmitz2021,Schmitz2023PCOAST}, in this work we refer to Pauli tableaus as \emph{Pauli frames} to emphasize their linear algebraic structure with regards to the Pauli group.}

Note that whether we store $U^\dagger P U$ or $U P U^\dagger$ in the entries of the Pauli frame is a matter of style---the inverse of a Pauli frame can always be calculated to obtain one from the other~\citep{PaykinSchmitz2023PCOAST}.
However, the choice affects the efficiency of the lookup operation $\fwdAction{F}$ (\cref{defn:frameaction} below). For efficiency in \PCAST, the lookup operation should implement the opposite of the semantic interpretation $\interpM{F}(Q)=U Q U^\dagger$ (\cref{defn:framesemantics}). 
%Thus these two interpretations will correspond to opposite conjugations:
%\begin{align}
%    \interpM{F}(Q) &= U Q U^\dagger 
%    \qquad\text{and}\qquad
%    \fwdAction{F}(Q) = U^\dagger Q U
%\end{align}

\begin{definition}
    A \emph{Pauli frame} $F$ on $k$ qubits is a $k \times 2$ array of Hermitian $k$-qubit Paulis:
    \begin{align}
        F = \begin{pmatrix}
            \effZ_0 & \effX_0 \\
            %\effZ_1 & \effX_1 \\
            \vdots & \vdots \\
            \effZ_{n-1} & \effX_{n-1}
        \end{pmatrix}
    \end{align}
    The arguments in the first column are called the \emph{effective $Z_j$} Paulis,
    and the arguments in the second column the \emph{effective $X_j$} Paulis,
    and they must respect all the same commutativity relations as the corresponding $Z_j$ and $X_j$ Paulis:
    %$\effZ_j$ anticommutes with $\effX_j$, but commutes with all other Paulis in the frame, and vice versa.
    \begin{align}
        \commutativity{\effZ_i}{\effZ_j} = \commutativity{\effX_i}{\effX_j}
        &= 0 \\
        \commutativity{\effZ_i}{\effX_j} = \commutativity{\effX_j}{\effZ_i}
        &= \delta_{i,j}\footnotemark 
    \end{align}
    \end{definition}

    \footnotetext{$\delta_{i,j}$ is $0$ if $i=j$ and $1$ otherwise.}

    \begin{definition} \label{defn:frameaction}
    The \emph{lookup action} of $F$ on $P=\alpha(p_0,\ldots,p_{k-1})$, written $\fwdAction{F}(P)$, is the product of the effective entry of each $p_j$:
    \begin{align}
        \fwdAction{F}(\alpha (p_0,\ldots,p_{n-1})) = \alpha \prod_{j} \eff{p_j}
    \end{align}
    where $\eff{I_j} = I$ and $\eff{Y_j} = \effZ_j \hmult \effX_j$.
    \end{definition}

%We can show that every Pauli frame does in fact correspond to a Clifford unitary.
\begin{lemma} \label{lem:FrameUnitary}
    For every Pauli frame $F$ there is a Clifford unitary $\unitaryof{F}$, unique up to
    overall phase, satisfying $\fwdAction{F}(P) = (\unitaryof{F})^\dagger P \unitaryof{F}$ for
    any Pauli $P$.
\end{lemma}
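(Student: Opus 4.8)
The plan is to construct the Clifford unitary $\unitaryof{F}$ explicitly as a product of elementary Clifford gates that realize the prescribed conjugation action, and then argue uniqueness up to phase via the fact that a unitary is determined (up to phase) by its conjugation action on a generating set of the Pauli group. First I would recall that the base Paulis $\{Z_j, X_j\}_{j=0}^{k-1}$ generate the full Pauli group $\Pauli_k$ up to phase, and that the commutation relations among them — $\commutativity{Z_i}{Z_j} = \commutativity{X_i}{X_j} = 0$ and $\commutativity{Z_i}{X_j} = \delta_{i,j}$ — are exactly the relations that the frame entries $\effZ_j, \effX_j$ are required to satisfy by the definition of a Pauli frame. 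This is the structural fact that makes the construction possible: a map on generators that preserves all commutation relations (and Hermiticity, so that squares are $I$) extends to a group automorphism of $\Pauli_k$ fixing phases, and every such automorphism is realized by conjugation by some Clifford.

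For the existence half I would give the construction inductively on $k$, or equivalently cite the standard ``tableau synthesis'' procedure: given the target pair $(\effZ_0, \effX_0)$, one first finds a Clifford $C_1$ (a product of $H$, $S$, $\CNOT$ gates) mapping $(Z_0, X_0)$ to $(\effZ_0, \effX_0)$ — this is possible precisely because $\effZ_0$ is Hermitian with $\effZ_0 \neq I$, $\effX_0$ is Hermitian, and $\anticommute{\effZ_0}{\effX_0}$; a sequence of conjugations can move any such anticommuting Hermitian pair to the standard $(Z_0, X_0)$. One then observes that the remaining entries $\effZ_j, \effX_j$ for $j \geq 1$, having trivial commutator with both $\effZ_0$ and $\effX_0$, lie (after conjugating by $C_1^{-1}$) in the Pauli group on qubits $1, \ldots, k-1$, so the inductive hypothesis supplies a Clifford $C_2$ on those qubits, and $\unitaryof{F} = C_1 C_2$ (suitably composed) does the job. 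Throughout one uses Lemma~\ref{lem:conjugateMonoidal} and Eq.~\eqref{lem:ConjugateDist} to check that composing the conjugation actions composes correctly, and the definition of $\eff{Y_j} = \effZ_j \hmult \effX_j$ together with $Y = -iZX$ to confirm that $\fwdAction{F}$ agrees with $(\unitaryof{F})^\dagger (\cdot) \unitaryof{F}$ on the $Y_j$ generators as well, hence on all of $\Pauli_k$ by multiplicativity (monoidality of the conjugation action).

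For uniqueness up to phase, suppose $U$ and $U'$ both satisfy $U^\dagger P U = U'^\dagger P U' = \fwdAction{F}(P)$ for every Pauli $P$. Then $V = U' U^\dagger$ commutes with every Pauli $P$ (since $V^\dagger P V = P$), so $V$ lies in the center of the unitary group relative to $\Pauli_k$; because the Paulis span the full matrix algebra (Lemma~\ref{lem:PauliVectorDecomposition}), $V$ commutes with every $2^k \times 2^k$ matrix, hence $V = e^{i\phi} I$ by Schur's lemma. Thus $U' = e^{i\phi} U$, which is the claimed uniqueness.

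The main obstacle is the existence construction: one must be careful that the ``standardizing'' Clifford $C_1$ genuinely exists for an \emph{arbitrary} anticommuting pair of Hermitian Paulis $(\effZ_0, \effX_0)$ on $k$ qubits — this is the crux and requires a small sub-argument (essentially: single-qubit Cliffords fix up the Pauli type on each qubit, $\CNOT$s consolidate the support, and one $H$/$S$ normalizes the leading qubit), after which the reduction to $k-1$ qubits must be justified by checking that the leftover frame entries really do have support disjoint from qubit $0$ once $(\effZ_0,\effX_0)$ has been standardized — which follows from the commutation constraints but deserves an explicit line. The rest (uniqueness, bookkeeping of phases via the Hermitian product $\hmult$) is routine given the lemmas already established.
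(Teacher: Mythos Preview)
Your proposal is correct, but it takes a different route from the paper. The paper argues existence more directly: it shows that $\fwdAction{F}$ sends each computational-basis projector $\ket{i}\bra{i} = \tfrac{1}{2^n}\prod_j (I \pm Z_j)$ to a rank-one projector $\tfrac{1}{2^n}\prod_j (I \pm \effZ_j)$ (using that the $\effZ_j$ are mutually commuting Hermitian Paulis), and then defines $\unitaryof{F}$ by $\unitaryof{F}\ket{i} = \ket{\phi_i}$ where $\ket{\phi_i}\bra{\phi_i}$ is that projector. Your argument instead \emph{synthesizes} $\unitaryof{F}$ as a product of elementary Clifford gates via the standard tableau reduction (normalize the first anticommuting pair, observe that the remaining rows drop to the smaller register by the commutation constraints, recurse). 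Both are valid; the paper's approach is shorter and purely algebraic, whereas yours is constructive and yields an explicit circuit decomposition---which in fact is closer in spirit to the synthesis machinery of \cref{sec:synthesis}. Your uniqueness argument via Schur's lemma and \cref{lem:PauliVectorDecomposition} is also more explicit than what appears in the paper's proof, which focuses on existence.
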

\begin{ExtendedVersion}
\begin{proof}
    To show that there exists such a unitary, it suffices to show that
    $\fwdAction{F}(\ket{i}\bra{i})$ is a pure state $\ket{\phi}\bra{\phi}$, in which case we can define $U(\ket{i})=\ket{\phi}$. To prove it is a pure state, it suffices to show that $\fwdAction{F}(\ket{i}\bra{i})\fwdAction{F}(\ket{i}\bra{i})=I$.
    From the proof of \cref{lem:PauliVectorDecomposition}, we can observe that
    \begin{align}
        \ket{i}\bra{i} = (I \pm Z_0) \cdots (I \pm Z_{n-1})
    \end{align}
    and thus
    \begin{align}
        \fwdAction{F}(\ket{i}\bra{i}) =  (I \pm \effZ^F_0) \cdots (I \pm \effZ^F_{n-1})
    \end{align}
    where $i \neq j$ implies $\commute{\effZ^F_i}{\effZ^F_j}$. Thus
    \begin{align}
        &\fwdAction{F}(\ket{i}\bra{i})\fwdAction{F}(\ket{i}\bra{i}) \\
        %&= \left( (I \pm \effZ^F_0) \cdots (I \pm \effZ^F_{n-1}) \right)
        %    \left( (I \pm \effZ^F_0) \cdots (I \pm \effZ^F_{n-1}) \right) \\
        &= (I \pm \effZ^F_0) (I \pm \effZ^F_0) \cdots (I \pm \effZ^F_{n-1}) (I \pm \effZ^F_{n-1}) \\
        &= I 
                 \qedhere
    \end{align}
\end{proof}
\end{ExtendedVersion}

\begin{ExtendedVersion}
\begin{lemma} \label{conj:FrameUnitaryCompleteness}
    If $\unitaryof{F_1} = \unitaryof{F_2}$ then $F_1 = F_2$.
\end{lemma}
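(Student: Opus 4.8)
The plan is to exploit the fact, established in \cref{lem:FrameUnitary}, that the unitary $\unitaryof{F}$ attached to a frame implements exactly the lookup action $\fwdAction{F}$ by (inverse) conjugation. So I would start by recording that conjugation is insensitive to an overall phase: since \cref{lem:FrameUnitary} only pins down each $\unitaryof{F}$ up to a global phase, the hypothesis $\unitaryof{F_1} = \unitaryof{F_2}$ should be read as equality up to a phase $e^{i\phi}$, and then for every Pauli $P$ we have $(\unitaryof{F_1})^\dagger P \unitaryof{F_1} = (\unitaryof{F_2})^\dagger P \unitaryof{F_2}$, because the $e^{\pm i\phi}$ factors cancel.

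Next I would combine this with \cref{lem:FrameUnitary} itself to conclude $\fwdAction{F_1}(P) = \fwdAction{F_2}(P)$ for every Pauli $P$. Then I would specialize to the base Paulis: from \cref{defn:frameaction}, $\fwdAction{F}(Z_j) = \effZ_j$ and $\fwdAction{F}(X_j) = \effX_j$ for each qubit index $j$ (these are the single-factor products in the definition). Plugging $P = Z_j$ and $P = X_j$ into the equality above yields that the effective $Z_j$ columns of $F_1$ and $F_2$ agree, and likewise their effective $X_j$ columns agree, for all $j$.

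Finally, since a Pauli frame is by definition nothing more than the $k \times 2$ array of its effective $Z_j$ and effective $X_j$ entries, having all of those entries coincide means $F_1 = F_2$, which completes the argument.

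I do not expect a genuine obstacle: this is essentially the injectivity half of the frame/Clifford correspondence, and it follows by unfolding \cref{defn:frameaction} once we know $\unitaryof{F}$ realizes $\fwdAction{F}$. The only point that deserves a moment's care is the phase ambiguity in $\unitaryof{F}$, which is harmless because the conjugation action $P \mapsto U^\dagger P U$ does not detect global phases; everything else is routine.
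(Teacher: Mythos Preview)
Your proposal is correct and takes essentially the same approach as the paper: use \cref{lem:FrameUnitary} to turn equality of the unitaries into equality of the lookup actions, then specialize to the base Paulis $Z_j$ and $X_j$ to conclude that all frame entries coincide. Your extra remark about phase ambiguity is a reasonable caution (though the lemma's hypothesis is stated as literal equality) and does not affect the argument.
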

\begin{proof}
    For each $j$, we have
    \begin{align}
        \eff{p}_j^{F_1} &= (\unitaryof{F_1})^\dagger p_j (\unitaryof{F_1})
        =  (\unitaryof{F_2})^\dagger p_j (\unitaryof{F_2}) = \eff{p}_j^{F_2}
    \end{align}
\end{proof}
\end{ExtendedVersion}

\begin{definition} \label{defn:framesemantics}
    The semantics of a Pauli frame is given by the Pauli map
    $\interpM{F}(Q) = \unitaryof{F} Q (\unitaryof{F})^\dagger$.
\end{definition}

% \begin{ExtendedVersion}
% \begin{lemma}
%     For all Paulis $P$, $F^0(P)=P$. As a corollary, $\unitaryof{F^0} = I$.
% \end{lemma}
% \begin{proof}
% Let $P=(p_0,\ldots,p_{n-1})$ and let $F^0(P)=P_0' \cdot \ldots \cdot P_{n-1}'$. If
% $p_i=X$ then $p_i' = \effX^F_i = X_i$, and in general, $p_i' = p_i$, that is,
% the Pauli with the single support $i$. Therefore $P_0' \cdots P_{n-1}'$ is just
% $P$.
% \end{proof}
% \end{ExtendedVersion}

\begin{ExtendedVersion}
\subsubsection{Cliffords}
\end{ExtendedVersion}

%\cref{fig:CliffordFrame} shows the Pauli frames corresponding to the unitaries $H$, $\CNOT$, and $S$; in general, 
The frame $\frameof{U}$ associated with a Clifford unitary $U$ is
\begin{align}
        %\effZ^{\frameof{U}}_i &= U^\dagger Z_i U
    %&
        %\effX^{\frameof{U}}_i &= U^\dagger X_i U.
    \eff{p_j}^{\frameof{U}} = U^\dagger p_j U,
\end{align}
where $\eff{p_j}^F$ is entry of $F$ corresponding to $p_j$.
We refer to $\frameof{I}$ as the \emph{origin frame}.

\begin{ExtendedVersion}
\begin{lemma}
    $\frameof{U}$ is a well-formed Pauli frame on $n$ qubits
    and $\unitaryof{\frameof{U}} = U$ up to a constant phase.
\end{lemma}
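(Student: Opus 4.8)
The plan is to prove the two assertions separately, both resting on the elementary observation that conjugation by a unitary is an automorphism of the Pauli group that preserves commutativity. For that preservation fact: if $P Q = (-1)^{\commutativity{P}{Q}} Q P$, then inserting $U U^\dagger = I$ gives $(U^\dagger P U)(U^\dagger Q U) = U^\dagger (P Q) U = (-1)^{\commutativity{P}{Q}} U^\dagger (Q P) U = (-1)^{\commutativity{P}{Q}} (U^\dagger Q U)(U^\dagger P U)$, so $\commutativity{U^\dagger P U}{U^\dagger Q U} = \commutativity{P}{Q}$.

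\emph{Well-formedness.} Each entry $\eff{p_j}^{\frameof{U}} = U^\dagger p_j U$ is an $n$-qubit Pauli because $U$ is Clifford, and it is Hermitian since $(U^\dagger p_j U)^\dagger = U^\dagger p_j^\dagger U = U^\dagger p_j U$ (hence its coefficient is $\pm 1$, as noted after the definition of Hermitian Paulis). Letting $P$ and $Q$ range over the base Paulis $Z_i, X_j$ in the identity above, $\frameof{U}$ inherits exactly the commutativity relations among the $Z_j$ and $X_j$, which are precisely the relations demanded of a Pauli frame. Thus $\frameof{U}$ is a well-formed $n\times 2$ Pauli frame.

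\emph{Identification of the associated unitary.} By \cref{lem:FrameUnitary}, $\unitaryof{\frameof{U}}$ is the Clifford unitary, unique up to overall phase, characterized by $\fwdAction{\frameof{U}}(P) = (\unitaryof{\frameof{U}})^\dagger P \, \unitaryof{\frameof{U}}$ for every Pauli $P$. Since $U$ is itself Clifford, it suffices to check that $U$ satisfies this same characterization, i.e.\ $\fwdAction{\frameof{U}}(P) = U^\dagger P U$ for all $P$. The crucial step is to upgrade $\eff{p_j}^{\frameof{U}} = U^\dagger p_j U$ from $p_j \in \{Z_j, X_j\}$ to \emph{all} single-qubit Paulis: the cases $I_j$, $Z_j$, $X_j$ are immediate, and for $Y_j$ I would use $\commutativity{U^\dagger Z_j U}{U^\dagger X_j U} = \commutativity{Z_j}{X_j} = 1$ together with $Y = -iZX$ to get $\eff{Y_j}^{\frameof{U}} = \effZ_j^{\frameof{U}} \hmult \effX_j^{\frameof{U}} = (-i)(U^\dagger Z_j U)(U^\dagger X_j U) = U^\dagger(-iZ_j X_j)U = U^\dagger Y_j U$. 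Then for a general $P = \alpha(p_0,\ldots,p_{n-1}) = \alpha \prod_j p_j$, \cref{defn:frameaction} gives $\fwdAction{\frameof{U}}(P) = \alpha \prod_j \eff{p_j}^{\frameof{U}} = \alpha \prod_j U^\dagger p_j U = U^\dagger \bigl(\alpha \prod_j p_j\bigr) U = U^\dagger P U$, since the factors act on disjoint qubits. Uniqueness up to phase in \cref{lem:FrameUnitary} then forces $\unitaryof{\frameof{U}} = U$ up to a constant phase.

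The argument is essentially bookkeeping once conjugation-invariance of $\lambda$ is established; the one spot demanding care is the $Y_j$ entry, where the sign conventions of the Hermitian product $\hmult$ and of $Y = -iZX$ must be tracked together, along with the small point that $\prod_j U^\dagger p_j U = U^\dagger(\prod_j p_j)U$ because the single-qubit factors occupy disjoint registers.
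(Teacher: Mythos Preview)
Your proof is correct and follows essentially the same approach as the paper: verify that conjugation by $U$ preserves the commutativity function $\lambda$ to get well-formedness, then invoke the uniqueness clause of \cref{lem:FrameUnitary} after checking that $U$ itself satisfies the defining relation $\fwdAction{\frameof{U}}(P)=U^\dagger P U$. You are more explicit than the paper in extending from the $Z_j,X_j$ entries to $Y_j$ via the Hermitian product and then to arbitrary $P$; the paper compresses this into a single sentence. One minor remark: the identity $\prod_j U^\dagger p_j U = U^\dagger\bigl(\prod_j p_j\bigr)U$ follows by telescoping $UU^\dagger=I$, not from the factors acting on disjoint qubits, though the conclusion is unaffected.
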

\begin{proof}
    To show $\frameof{U}$ is well-formed, it suffices to show that, for all entries $\eff{p}_i$ and $\eff{q}_j$ in the frame, $\commutativity{\eff{p}_i}{\eff{q}_j}=\commutativity{p_i}{q_j}$:
    \begin{align}
        \eff{p}_i \eff{q}_j &= U^\dagger p_i U U^\dagger q_j U = U^\dagger p_i q_j U \\
        &= U^\dagger (-1)^{\commutativity{p_i}{q_j}} q_j p_i U
        = (-1)^{\commutativity{p_i}{q_j}} \eff{q}_j \eff{p}_i
    \end{align}
    Since $\unitaryof{\frameof{U}}$ satisfies $\fwdAction{F}(P)=U^\dagger P U$ by \cref{lem:FrameUnitary}, it is equivalent to $U$up to an overall phase.
\end{proof}
\end{ExtendedVersion}

\begin{ExtendedVersion}
\subsubsection{Commutativity}
\end{ExtendedVersion}

A consequence of the commutativity rules for frames is that the lookup action preserves commutativity and composition.
\begin{ShortVersion}
\begin{align}
    \commutativity{\fwdAction{F}(P)}{\fwdAction{F}(Q)} &= \commutativity{P}{Q}
        \label{lem:FrameCommutativityPreservation} \\
    \fwdAction{F}(P Q) &= \fwdAction{F}(P) \fwdAction{F}(Q)
        \label{lem:FrameCompositionNatural}
\end{align}
\end{ShortVersion}

\begin{ExtendedVersion}
\begin{lemma} \label{lem:FrameCompositionNatural}
    $\fwdAction{F}(P_2 \cdot P_1) = \fwdAction{F}(P_2) \cdot \fwdAction{F}(P_1)$
\end{lemma}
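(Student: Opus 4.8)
The plan is to reduce the claim to the trivial fact that conjugation by a fixed unitary is multiplicative, exploiting the Clifford unitary that \cref{lem:FrameUnitary} attaches to every Pauli frame.

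First I would invoke \cref{lem:FrameUnitary} to fix a Clifford unitary $U = \unitaryof{F}$ (unique up to global phase) with $\fwdAction{F}(P) = U^\dagger P U$ for every Pauli $P$. Since the $n$-qubit Paulis are closed under multiplication, $P_2 \cdot P_1$ is again a Pauli and hence lies in the domain of $\fwdAction{F}$, so all three terms of the asserted identity are covered. The computation is then immediate from $U^\dagger U = I$:
\[
    \fwdAction{F}(P_2 \cdot P_1) = U^\dagger (P_2 \cdot P_1) U = (U^\dagger P_2 U)(U^\dagger P_1 U) = \fwdAction{F}(P_2) \cdot \fwdAction{F}(P_1).
\]
The same two-line argument, applied to $P \cdot P' = (-1)^{\commutativity{P}{P'}} P' \cdot P$, also yields the companion fact that $\fwdAction{F}$ preserves $\lambda$; and combining multiplicativity with \cref{lem:MonoidalMapsNatural} immediately upgrades the identity from Paulis to arbitrary Pauli vectors. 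I would also note, to rule out circularity, that the proof of \cref{lem:FrameUnitary} only ever expands $\fwdAction{F}$ across products of pairwise-\emph{commuting} base Paulis $Z_j$, for which $\fwdAction{F}(\prod_{j \in S} Z_j) = \prod_{j \in S} \effZ_j$ is read straight off \cref{defn:frameaction} and does not presuppose the present lemma.

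Should a direct proof from \cref{defn:frameaction} be preferred, the route is to write $P_1 = \alpha(p_0,\dots,p_{k-1})$ and $P_2 = \beta(q_0,\dots,q_{k-1})$, expand the product factorwise as $P_2 \cdot P_1 = \beta\alpha\bigl(\prod_j \sigma_j\bigr)(r_0,\dots,r_{k-1})$ with $q_j p_j = \sigma_j r_j$, and then reconcile $\bigl(\prod_j \eff{q_j}\bigr)\bigl(\prod_j \eff{p_j}\bigr)$ with $\bigl(\prod_j \sigma_j\bigr)\prod_j \eff{r_j}$ by sorting the left-hand product qubitwise. The main obstacle is exactly this sign bookkeeping: commuting each $\eff{p_i}$ leftward past every $\eff{q_j}$ with $j \neq i$ must contribute no sign, while the $j = i$ factor must convert $\eff{q_i}\,\eff{p_i}$ into $\sigma_i\,\eff{r_i}$. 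This is where one must use that the frame's defining commutativity conditions force $\commutativity{\eff{p_i}}{\eff{q_j}} = \commutativity{p_i}{q_j}$ for all single-qubit $p,q$ (with $\lambda$ bilinear, this also covers the $Y$ entries $\eff{Y}_j = \effZ_j \hmult \effX_j$), together with the fact that $\hmult$ reproduces single-qubit Pauli multiplication on the effective entries. Verifying those cancellations is the real content of a self-contained argument, which is why I would route the proof through \cref{lem:FrameUnitary} instead.
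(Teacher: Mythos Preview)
Your proposal is correct and matches the paper's proof essentially line for line: the paper also invokes \cref{lem:FrameUnitary} to set $U=\unitaryof{F}$ and then inserts $U U^\dagger = I$ to split $U^\dagger P_2 P_1 U$ into $\fwdAction{F}(P_2)\fwdAction{F}(P_1)$. Your additional remarks on non-circularity and the alternative direct route are sound and go beyond what the paper spells out.
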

\begin{proof}
    Let $U=\unitaryof{F}$. By \cref{lem:FrameUnitary},
    \begin{align} 
        \fwdAction{F}(P_2 P_1)
        &= U^\dagger P_2 P_1 U \\
        &= U^\dagger P_2 U U^\dagger P_1 U
        = \fwdAction{F}(P_2) \fwdAction{F}(P_1). 
    \end{align}
    % For $i=0,1$, let $P_i = \alpha_i (p_0^i,\ldots,p_{n-1}^i)$, meaning that
    % $P_2 \cdot P_1 = \alpha_2 \alpha_1 \sigma_0 \cdots \sigma_{n-1} (q_0,\ldots,q_{n-1})$
    % where $p_j^2 p_j^1 = \sigma_j q_j$.
    % Let us write out $\fwdAction{F}(P_i) = \alpha_i Q_0^i \cdots Q_{n-1}^i$ where $Q^i_j = \eff{p^i_j}$,
    % and similarly $\fwdAction{F}(P_2 \cdot P_1) = \alpha_2 \alpha_1 \sigma_0 \cdots \sigma_{n-1} R_0 \cdots R_{n-1}$ where $R_j = \eff{q}_j$.
    % By the commutativity rules of Pauli frames, we know that for $j \neq k$,
    % we have $\commute{Q_j^i}{Q_k^i}$, in which case we can rewrite
    % \begin{align}
    %     \fwdAction{F}(P_2) \cdot \fwdAction{F}(P_1) &= \alpha_2 \alpha_1 Q_0^2 Q_0^1 \cdots Q_{n-1}^2 Q_{n-1}^1
    % \end{align}
    % Now it suffices to check that for each $j$ that $Q_j^2 Q_j^1 = \sigma_j R_j$,
    % which can be done by case analysis on $p_j^2$ and $p_j^1$.
\end{proof}
\end{ExtendedVersion}

\begin{ExtendedVersion}
\begin{lemma} \label{lem:FrameCommutativityPreservation}
Let $F$ be a Pauli frame over $n$ qubits and $P^1, P^2$ be $n$-qubit
Paulis. Then $\commutativity{P^1}{P^2} = \commutativity{\fwdAction{F}(P^1)}{\fwdAction{F}(P^2)}$.
\end{lemma}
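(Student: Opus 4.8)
The plan is to reduce the statement to the unitary characterization of the lookup action already established in \cref{lem:FrameUnitary}. Write $U = \unitaryof{F}$, so that $\fwdAction{F}(P) = U^\dagger P U$ for every Pauli $P$. Recalling that $\lambda$ is characterized by the identity $P^1 \cdot P^2 = (-1)^{\commutativity{P^1}{P^2}} P^2 \cdot P^1$, I would compute directly
\[
  \fwdAction{F}(P^1)\,\fwdAction{F}(P^2)
  = U^\dagger P^1 U U^\dagger P^2 U
  = U^\dagger P^1 P^2 U
  = (-1)^{\commutativity{P^1}{P^2}} U^\dagger P^2 P^1 U
  = (-1)^{\commutativity{P^1}{P^2}} \fwdAction{F}(P^2)\,\fwdAction{F}(P^1).
\]
Since $\fwdAction{F}(P^i) = U^\dagger P^i U$ is a conjugate of a Pauli, it is itself a nonzero (indeed unitary) Pauli, so this displayed identity is exactly the assertion $\commutativity{\fwdAction{F}(P^1)}{\fwdAction{F}(P^2)} = \commutativity{P^1}{P^2}$.

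As an alternative that stays inside the frame formalism without explicitly invoking $\unitaryof{F}$, I would instead apply $\fwdAction{F}$ to both sides of $P^1 P^2 = (-1)^{\commutativity{P^1}{P^2}} P^2 P^1$, using scalar-linearity $\fwdAction{F}(\alpha P) = \alpha\,\fwdAction{F}(P)$ together with \cref{lem:FrameCompositionNatural}, i.e.\ $\fwdAction{F}(P^1 P^2) = \fwdAction{F}(P^1)\fwdAction{F}(P^2)$. This yields the same identity $\fwdAction{F}(P^1)\fwdAction{F}(P^2) = (-1)^{\commutativity{P^1}{P^2}} \fwdAction{F}(P^2)\fwdAction{F}(P^1)$, and one concludes in the same way.

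The only step requiring a sentence of care is the final reading-off: one must know that $\fwdAction{F}(P^1)$ and $\fwdAction{F}(P^2)$ are genuine Paulis (nonzero scalar multiples of tensor products of single-qubit Paulis), so that they are forced either to commute or to anticommute and the sign $(-1)^{\commutativity{P^1}{P^2}}$ determines $\lambda$ of the pair. In the first approach this is immediate from \cref{lem:FrameUnitary}; in the second it follows because each effective entry $\eff{p_j}$ is Hermitian and the defining commutativity relations of $F$ ensure, via the Hermitian product $\hmult$ used in \cref{defn:frameaction}, that $\prod_j \eff{p_j}$ is again Hermitian, hence a Pauli. I do not anticipate any real obstacle: the lemma is a direct corollary of results already in hand.
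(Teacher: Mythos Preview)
Your proposal is correct, and your first approach is essentially identical to the paper's own proof: invoke \cref{lem:FrameUnitary} to write $\fwdAction{F}(P)=U^\dagger P U$, then compute $\fwdAction{F}(P^1)\fwdAction{F}(P^2)=U^\dagger P^1 P^2 U=(-1)^{\commutativity{P^1}{P^2}}\fwdAction{F}(P^2)\fwdAction{F}(P^1)$. Your alternative via \cref{lem:FrameCompositionNatural} is also valid and amounts to the same computation packaged differently.
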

\begin{proof}
    %Let $P_1=\alpha_1 (p_0,\ldots,p_{n-1})$ and $P_2 = \alpha_2 (q_0,\ldots,q_{n-1})$.
    Notice that for $i \neq j$, $\commute{\eff{p}_i}{\eff{q}_j}$.
    Then
    \begin{align}
        \fwdAction{F}(P_1) \fwdAction{F}(P_2)
        &= U^\dagger P_1 U U^\dagger P_2 U
        = U^\dagger P_1 P_2 U \\
        &= (-1)^{\commutativity{P_1}{P_2}} U^\dagger P_2 P_1 U \\
        &= (-1)^{\commutativity{P_1}{P_2}} \fwdAction{F}(P_2) \fwdAction{F}(P_1)
        % &\fwdAction{F}(P_1) \fwdAction{F}(P_2) \notag \\
        % &= \left( \alpha_1 \eff{p}_0 \hmult \cdots \hmult \eff{p}_{n-1}\right)
        %    \left( \alpha_2 \eff{q}_0 \hmult \cdots \hmult \eff{q}_{n-1}\right) \\
        % &= \alpha_1 \alpha_2 \prod_{j=0}^{n-1} \eff{p}_j \hmult \eff{q}_j \\
        % &= \alpha_1 \alpha_2 \prod_{j=0}^{n-1} (-1)^{\commutativity{p_j}{q_j}} \eff{q}_j \hmult \eff{p}_j \\
        % &= \left( \prod_{j=0}^{n-1} (-1)^{\commutativity{p_j}{q_j}}\right) \fwdAction{F}(P_2) \fwdAction{F}(P_1) \\
        % &= (-1)^{\sum_{j=0}^{n-1} \commutativity{p_j}{q_j}} \fwdAction{F}(P_2) \fwdAction{F}(P_1)
        %     \\
        % &= (-1)^{\lambda(P_1,P_2)} \fwdAction{F}(P_2) \fwdAction{F}(P_1)
    \end{align}
\end{proof}
\end{ExtendedVersion}

\begin{ExtendedVersion}
Note that Pauli frames will always map Hermitian Paulis to Hermitian Paulis.
    \begin{align} \label{lem:FrameHermitian}
        \fwdAction{F}(P) \fwdAction{F}(P)
        &= \fwdAction{F}(P P) = \fwdAction{F}(I) = I.
    \end{align}
    % We know $\fwdAction{F}(P) = \alpha \eff{p}_0 \hmult \cdots \hmult \eff{p}_{n-1}'$ where $\commute{\eff{p}_i'}{\eff{p}_j'}$
    % for $i \neq j$. Therefore by \cref{lem:HermitianPauliFacts} 
    % it suffices to show that each $\eff{p}_i'$ is itself Hermitian, which is
    % true by construction.
\end{ExtendedVersion}

\begin{ExtendedVersion}
    \subsubsection{Composition of Pauli frames}

    Pauli frames can be composed with each other by mapping the action of the first over each element of the second.
\end{ExtendedVersion}

\begin{definition}
    Composition of Pauli frames $F_2 \circ F_1$ is defined as
    $
        \eff{p}^{F_2 \circ F_1}_i = \fwdAction{F_1}(\eff{p}^{F_2}_i)
        %\effX^{F_2 \circ F_1}_i &= \fwdAction{F_1}(\effX^{F_2}_i) 
    $.
    \ifExtended{}{It satisfies $\unitaryof{F_2 \circ F_1} = \unitaryof{F_2}\unitaryof{F_1}$.}
\end{definition}

\begin{ExtendedVersion}
\begin{lemma} \label{lem:FrameCompositionExpanded}
    $\unitaryof{F_2 \circ F_1}
    = \unitaryof{F_2} \unitaryof{F_1}$.
\end{lemma}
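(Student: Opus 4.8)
The plan is to pin down $\unitaryof{F_2 \circ F_1}$ by exhibiting a concrete Clifford unitary that realizes the defining lookup property of the frame $F_2 \circ F_1$, and then invoke uniqueness. Set $U := \unitaryof{F_2}\,\unitaryof{F_1}$; this is a Clifford unitary since the Clifford group is closed under composition. By \cref{lem:FrameUnitary}, the Clifford unitary associated with a given Pauli frame is unique up to overall phase, so it suffices to check that $\fwdAction{F_2 \circ F_1}(P) = U^\dagger P U$ for every Pauli $P$; that identity then forces $\unitaryof{F_2 \circ F_1} = U$ up to phase, which is the claim. (As a preliminary, one should note that $F_2 \circ F_1$ is a genuine Pauli frame: its entries $\eff{p}^{F_2 \circ F_1}_i = \fwdAction{F_1}(\eff{p}^{F_2}_i)$ are Hermitian by \cref{lem:FrameHermitian}, and they satisfy the same commutativity relations as the base $Z_j,X_j$ Paulis because those relations hold among the entries of $F_2$ and $\fwdAction{F_1}$ preserves commutativity by \cref{lem:FrameCommutativityPreservation}, so $\unitaryof{F_2 \circ F_1}$ is well defined.)

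For the main verification it is enough to check the lookup action on the base Paulis $Z_j$ and $X_j$: both $\fwdAction{F_2 \circ F_1}$ and conjugation $P \mapsto U^\dagger P U$ are multiplicative over the Pauli group — the former by \cref{lem:FrameCompositionNatural}, the latter trivially — and every Pauli is a scalar times a product of the $Z_j$ and $X_j$ (using $Y_j = -i Z_j X_j$), so agreement on generators and scalars gives agreement everywhere. On a base Pauli $p_j$, unfolding the definition of frame composition gives $\fwdAction{F_2 \circ F_1}(p_j) = \eff{p_j}^{F_2 \circ F_1} = \fwdAction{F_1}\!\bigl(\eff{p_j}^{F_2}\bigr) = \fwdAction{F_1}\!\bigl(\fwdAction{F_2}(p_j)\bigr)$. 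Applying \cref{lem:FrameUnitary} first to $F_2$ and then to $F_1$ rewrites each lookup as an inverse conjugation, yielding $(\unitaryof{F_1})^\dagger (\unitaryof{F_2})^\dagger\, p_j\, \unitaryof{F_2}\,\unitaryof{F_1} = (\unitaryof{F_2}\unitaryof{F_1})^\dagger\, p_j\, (\unitaryof{F_2}\unitaryof{F_1}) = U^\dagger p_j U$, as required.

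The one place to be careful is the order of composition. Because $\fwdAction{F}$ implements the \emph{inverse} conjugation $U^\dagger P U$ rather than $U P U^\dagger$, applying $\fwdAction{F_1}$ after $\fwdAction{F_2}$ stacks the daggered unitaries on the left as $(\unitaryof{F_1})^\dagger (\unitaryof{F_2})^\dagger = (\unitaryof{F_2}\unitaryof{F_1})^\dagger$, which is precisely why the resulting unitary is $\unitaryof{F_2}\unitaryof{F_1}$ and not $\unitaryof{F_1}\unitaryof{F_2}$ — matching the convention in \cref{defn:framesemantics} where the semantic interpretation composes in the opposite direction from the lookup. Apart from tracking this bookkeeping, the argument is routine: once $\fwdAction{F_2 \circ F_1}(P) = U^\dagger P U$ is established for all $P$, \cref{lem:FrameUnitary} delivers $\unitaryof{F_2 \circ F_1} = \unitaryof{F_2}\,\unitaryof{F_1}$ up to an overall phase.
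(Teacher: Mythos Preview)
Your proof is correct and follows essentially the same approach as the paper: both reduce the claim to showing $\fwdAction{F_2 \circ F_1}(P) = \fwdAction{F_1}(\fwdAction{F_2}(P))$ and then invoke the uniqueness in \cref{lem:FrameUnitary}. The paper checks this by directly decomposing an arbitrary $P$ into its single-qubit factors, while you equivalently reduce to the generators $Z_j,X_j$ via multiplicativity (\cref{lem:FrameCompositionNatural}); your additional remarks on well-formedness and the composition order are welcome bookkeeping but not a different strategy.
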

\begin{proof}
It suffices to show that, for every Pauli $P$, we have 
$\fwdAction{F_2 \circ F_1}(P)=\fwdAction{F_1}(\fwdAction{F_2}(P))$.

Let $P=\alpha(p_0,\ldots,p_{n-1})$ and $\fwdAction{F_2}(P) = \alpha \eff{p}^{F_2}_0 \hmult \cdots \hmult \eff{p}^{F_2}_{n-1}$. By \cref{lem:FrameCompositionNatural}, we know
that $\fwdAction{F_1}(\fwdAction{F_2}(P)) = \alpha \fwdAction{F_1}(\eff{p}^{F_2}_0) \hmult \cdots \hmult \fwdAction{F_1}(\eff{p}^{F_2}_{n-1})$.

Similarly, let $\fwdAction{(F_2 \circ F_1)}(P) = \alpha \eff{p}^{F_2 \circ F_1}_0 \hmult \cdots \hmult \eff{p}^{F_2 \circ F_1}_{n-1}$; the result follows from the fact that $\eff{p}^{F_2 \circ F_1}_j = \fwdAction{F_1}(\eff{p}^{F_2}_j)$.
\end{proof}
\end{ExtendedVersion}

  \subsection{\PCAST Terms} \label{sec:PoPRMonoid}
    A \emph{\PCAST term} $t$ is a sequence of \PCAST nodes $n_0;\ldots;n_{k-1}$ with $1$ indicating the empty sequence.
The semantics of nodes can be lifted to terms $\interpM{t} : \CQSTATE \rightarrow \CQSTATE$.
\begin{ExtendedVersion}
\begin{align}
    \interpM{1}(\gamma) &= \gamma \\
    \interpM{t_1; t_2}(\gamma) &= \interpM{t_2}(\interpM{t_1}(\gamma))
\end{align}
\end{ExtendedVersion}

\begin{ExtendedVersion}
Because the action of the \PCAST nodes' semantics as Pauli maps
is linear on their Paulis, they can be described by how they operate on
arbitrary Pauli vectors.\footnote{This applies to all \PCAST nodes except Pauli frames and measurement space functions.}
\begin{lemma} \label{lem:PauliNodeSubstitution}
    \begin{align}
        &\interpM{\Rot{P}{\theta}}(\gamma) 
            = m \mapsto \interp{\Rot{P}{\theta}} \gamma(m) \interp{\Rot{P}{-\theta}} 
            \label{eqn:PauliNodeSubstitutionRot}\\
        &\interpM{\Prep{P_Z}{P_X}}(\gamma) = m \mapsto \\
            &(I + P_Z) \gamma(m) (I + P_Z) + P_X (I - P_Z) \gamma(m) (I - P_Z) P_X 
            \nonumber \\
        &\interpM{\Meas[c]{P}}(\gamma) = (m, \cAssign{c}{b}) \mapsto 
            \label{eqn:PauliNodeSubstituionMeas} \\
            &\begin{cases}
                \frac{1}{4} (I + P) \gamma(m) (I + P) & b=0 \\
                \frac{1}{4} (I - P) \gamma(m) (I - P) & b=1
            \end{cases} \nonumber
    \end{align}
\end{lemma}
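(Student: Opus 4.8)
The plan is to compute $\interpM{n}(\gamma)$ directly from \cref{defn:PauliMap} for each of the three node kinds $n \in \{\Rot{P}{\theta},\, \Prep{P_Z}{P_X},\, \Meas[c]{P}\}$, using the fact that the per-Pauli semantics given in \cref{defn:rot}, \cref{defn:prep}, and \cref{defn:meas} is already a fixed linear function of its Pauli argument. Once that is observed, the equations in the statement are nothing more than \cref{defn:PauliMap} with the coefficients of a Pauli decomposition pushed back through that linear map.

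Concretely, fix such a node $n$. Its semantics is a Pauli map $\interpM{n} : \Pauli_k \rightarrow \CQSTATE[\mathcal{M}_n]$, where $\mathcal{M}_n$ is the trivial one-element measurement space for $\Rot{P}{\theta}$ and $\Prep{P_Z}{P_X}$, and is $\{\cAssign{c}{0},\, \cAssign{c}{1}\}$ for $\Meas[c]{P}$. The key structural observation, immediate by inspection of the three definitions, is that for every classical state $m \in \mathcal{M}_n$ the partial density matrix $\interpM{n}(Q)(m)$ has the form $\sum_j A_{j,m}\, Q\, B_{j,m}$ for matrices $A_{j,m}, B_{j,m}$ that depend only on $n$ and $m$, and not on $Q$ --- for the measurement node the dependence on $m$ enters only through $\cLookup{m}{c}$.

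Now apply \cref{defn:PauliMap}. Given $\gamma \in \CQSTATE[\mathcal{M}_0]$ and $m_0 \in \mathcal{M}_0$, use \cref{lem:PauliVectorDecomposition} to write $\gamma(m_0) = \sum_i \alpha_i P_i$. Then for every $m \in \mathcal{M}_n$,
\begin{align*}
    \interpM{n}(\gamma)(m_0; m)
    &= \sum_i \alpha_i\, \interpM{n}(P_i)(m)
     = \sum_i \alpha_i \sum_j A_{j,m}\, P_i\, B_{j,m} \\
    &= \sum_j A_{j,m} \Bigl( \sum_i \alpha_i P_i \Bigr) B_{j,m}
     = \sum_j A_{j,m}\, \gamma(m_0)\, B_{j,m},
\end{align*}
where the middle equality is simply bilinearity of matrix multiplication; in particular the result is independent of the chosen Pauli decomposition. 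Substituting the explicit matrices $A_{j,m}, B_{j,m}$ read off from \cref{defn:rot}, \cref{defn:prep}, and \cref{defn:meas} then yields exactly \cref{eqn:PauliNodeSubstitutionRot}, the preparation equation, and \cref{eqn:PauliNodeSubstituionMeas}, with the output classical states of the measurement being $m_0; \cAssign{c}{b}$ for $b \in \{0,1\}$.

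I expect no substantive obstacle: the lemma is essentially a repackaging of \cref{defn:PauliMap} once linearity of the node definitions is noticed. The only points requiring mild care are (i) verifying, in the measurement case, that the per-Pauli outputs $\interpM{\Meas[c]{P}}(P_i)$ all live over the same two-element measurement space, so that the pointwise sum in \cref{defn:PauliMap} is well defined, and (ii) keeping track of the scalar normalization factors when matching the computed expressions to the displayed formulas.
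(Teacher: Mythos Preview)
Your proposal is correct and follows essentially the same approach as the paper's proof: both apply \cref{defn:PauliMap} to a Pauli decomposition of $\gamma(m_0)$ and then use distributivity of matrix multiplication over addition to recombine the sum into the closed form. Your presentation is slightly more uniform in that you abstract the common $\sum_j A_{j,m}\,Q\,B_{j,m}$ shape once and instantiate it, whereas the paper spells out the rotation and measurement cases separately and remarks that preparation is analogous, but the underlying argument is identical.
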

\begin{proof}
    The semantics of these three nodes are given as Pauli vectors, and are lifted to cq-channels as in \cref{defn:PauliMap}. For example, in the case of rotations,
    \begin{align}
        \interpM{\Rot{P}{\theta}}(\gamma) &= m \mapsto 
            \sum_i \alpha_i \interpM{\Rot{P}{\theta}}(P_i)(m) \\
        &=  m \mapsto 
            \sum_i \alpha_i \interp{\Rot{P}{\theta}}(P_i)\interp{\Rot{P}{-\theta}}
    \end{align}
    where $\gamma(m) = \sum_i \alpha_i P_i$. But by distributivity of matrix multiplication and addition, this can be written as
    \begin{align}
        m \mapsto \interp{\Rot{P}{\theta}} \left( \sum_i \alpha_i P_i\right) \interp{\Rot{P}{-\theta}}
    \end{align}
    which is exactly \cref{eqn:PauliNodeSubstitutionRot}.

    Similarly, for measurement the definition as a cq-channel depends on its definition as a Pauli map from \cref{defn:prep}:
    \begin{align}
        \interpM{\Meas[c]{P}}(\gamma) &= (m, \cAssign{c}{b}) \mapsto
            \sum_i \alpha_i \interpM{\Meas[c]{P}}(P_i)(\cAssign{c}{b}) \\
        &= (m, \cAssign{c}{b}) \mapsto \sum_i \alpha_i (I \pm P) P_i (I \pm P)
    \end{align}
    where $\gamma(m) = \sum_i \alpha_i P_i$. Note that the sign of $(I \pm P)$ depends only on the value of $b$, so it doesn't change across the terms of the summation. Therefore it can be rewritten as
    \begin{align}
        (m,\cAssign{c}{b}) \mapsto (I \pm P) \left( \sum_i \alpha_i P_i \right) (I \pm P)
    \end{align}
    as in \cref{eqn:PauliNodeSubstituionMeas}.

    The same reasoning holds for $\Prep{P_Z}{P_X}$.
\end{proof}
\end{ExtendedVersion}

\begin{ExtendedVersion}
\subsubsection{Equivalence relation}
\end{ExtendedVersion}

We define an equivalence relation on \PCAST terms parameterized by a \emph{hold} or \emph{release} outcome $o$, corresponding to the two equivalence relations on cq-states.
\begin{align}
    t_1 \equiv^o t_2 
    &\iff
    \forall \gamma,~ \interpM{t_1}(\gamma) \equiv^o \interpM{t_2}(\gamma)
\end{align}
If not specified, we assume $o$ is the stronger \hold outcome.
%\jennifer{TODO: note that $\equiv^r$ only respects composition on the right, not left, but this is okay because of how the release outcome is used.}

\begin{ExtendedVersion}
It is trivial to see that sequential composition $t_1; t_2$ forms a monoid with
$1$ as the unit.
\begin{lemma}
    \begin{align}
        1; t &\equiv t;1 \equiv t \\
        (t_1 ; t_2); t_3 &\equiv t_1 ; (t_2 ; t_3)
    \end{align}
\end{lemma}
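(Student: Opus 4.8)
The plan is to prove the stronger statement that the two sides of each equation have \emph{identical} semantic interpretations as cq-channels, from which $\equiv$ follows immediately. Recall that $\gamma^1 \equiv^{\text{hold}} \gamma^2$ unfolds to the pointwise equality $\gamma^1_m = \gamma^2_m$, so whenever $\interpM{t_1}(\gamma) = \interpM{t_2}(\gamma)$ for every $\gamma$ we obtain $t_1 \equiv t_2$ under the default \hold outcome (and hence also $t_1 \equiv^{\text{release}} t_2$, since the release relation is coarser). Thus it suffices to reason about $\interpM{-}$ directly.

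For the unit laws I would simply unfold the two defining clauses $\interpM{1}(\gamma) = \gamma$ and $\interpM{t_1; t_2}(\gamma) = \interpM{t_2}(\interpM{t_1}(\gamma))$, obtaining
\[
  \interpM{1; t}(\gamma) = \interpM{t}(\interpM{1}(\gamma)) = \interpM{t}(\gamma),
  \qquad
  \interpM{t; 1}(\gamma) = \interpM{1}(\interpM{t}(\gamma)) = \interpM{t}(\gamma).
\]
For associativity, two applications of the composition clause (extended from single nodes to arbitrary terms, if needed, by a one-line induction on the length of the first argument) give
\[
  \interpM{(t_1; t_2); t_3}(\gamma)
  = \interpM{t_3}\bigl(\interpM{t_2}(\interpM{t_1}(\gamma))\bigr)
  = \interpM{t_1; (t_2; t_3)}(\gamma),
\]
so the two interpretations are literally the same function. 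Both displayed equivalences follow.

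The argument is entirely routine; there is no real obstacle, only one piece of bookkeeping worth flagging. Each $\interpM{t}$ carries a type $\CQSTATE[\mathcal{M}] \to \CQSTATE[\mathcal{M}']$ where $\mathcal{M}'$ is determined by $\mathcal{M}$ and $\fv(t)$, so the juxtaposition $t_1; t_2$ only typechecks when the output measurement space of $t_1$ coincides with the input space of $t_2$. Reassociating, and inserting or deleting the unit $1$, does not change the set of variables written, so the source and target measurement spaces agree on both sides of each equation and the equalities above are between genuinely composable channels. In essence the lemma just records that $\interpM{-}$ is a strict monoid homomorphism from the free monoid of node sequences (concatenation, empty sequence) into the monoid of cq-channels under function composition, and it inherits associativity and unitality from those of function composition and the identity function.
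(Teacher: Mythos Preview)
Your proposal is correct and matches the paper's own treatment: the paper simply remarks that ``it is trivial to see that sequential composition $t_1;t_2$ forms a monoid with $1$ as the unit'' and states the lemma without further proof, which is exactly the direct unfolding of $\interpM{1}$ and $\interpM{t_1;t_2}$ you give.
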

\end{ExtendedVersion}

\begin{ExtendedVersion}
Release equivalence only holds on the right-hand-side of composition, which is appropriate because the release outcome specifies the behavior at the end of a circuit.
    \begin{lemma}
        If $t_1 \equiv^\hold t_2$ and $t_1' \equiv^o t_2'$, then $t_1;t_1' \equiv^o t_2;t_2'$.
    \end{lemma}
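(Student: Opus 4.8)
The plan is to unfold the definition of sequential composition together with the two equivalence relations, and then chain an equality with an instance of the release/hold hypothesis. First I would fix an arbitrary cq-state $\gamma$ (over the appropriate input measurement space) and use $\interpM{t;t'}(\gamma) = \interpM{t'}(\interpM{t}(\gamma))$ to rewrite both sides, so that the goal $t_1;t_1' \equiv^o t_2;t_2'$ reduces to showing $\interpM{t_1'}(\interpM{t_1}(\gamma)) \equiv^o \interpM{t_2'}(\interpM{t_2}(\gamma))$.

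The crux is the observation that $\equiv^\hold$ is literally pointwise equality of cq-states: by definition $\gamma^1 \equiv^\hold \gamma^2$ means $\gamma^1_m = \gamma^2_m$ for every $m \in \mathcal{M}$, i.e.\ $\gamma^1 = \gamma^2$ as functions. Hence the hypothesis $t_1 \equiv^\hold t_2$ yields $\interpM{t_1}(\gamma) = \interpM{t_2}(\gamma)$ on the nose, and substituting this into the left-hand side reduces the goal to $\interpM{t_1'}(\interpM{t_2}(\gamma)) \equiv^o \interpM{t_2'}(\interpM{t_2}(\gamma))$. This is exactly an instance of the hypothesis $t_1' \equiv^o t_2'$ applied to the particular cq-state $\delta := \interpM{t_2}(\gamma)$, which closes the argument. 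No linearity or further structural facts about the nodes are needed.

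There is essentially no obstacle here; the content is entirely in getting the variances right, and the one point worth flagging in the write-up is \emph{why} hold equivalence (not merely release equivalence) is required on the left factor. If we only knew that $\interpM{t_1}(\gamma)$ and $\interpM{t_2}(\gamma)$ agreed on the traces $\tr(\gamma_m)$, they could still be genuinely different cq-states, and post-composing with $\interpM{t_1'}$ (say a later measurement that conditions on one of these branches) could turn that discrepancy into one that $\equiv^{\text{release}}$ no longer tolerates. I would also note the implicit well-formedness assumption that the two sides share the same measurement spaces, so that writing $\equiv^o$ between them makes sense; this is guaranteed because $t_1,t_2$ write to the same classical variables and likewise for $t_1',t_2'$.
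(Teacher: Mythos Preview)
Your proposal is correct and follows essentially the same approach as the paper: fix an arbitrary $\gamma$, use that $\equiv^\hold$ is pointwise equality so $\interpM{t_1}(\gamma)=\interpM{t_2}(\gamma)$, and then instantiate the hypothesis $t_1' \equiv^o t_2'$ at the resulting cq-state. The paper's proof is terser but identical in structure; your additional remarks on why \hold (not \release) is required on the left factor and on measurement-space compatibility are helpful elaborations rather than a different argument.
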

    \begin{proof}
        We want to show that for all $\gamma$, 
        \begin{align}
            \interpM{t_1'}(\interpM{t_1}(\gamma)) \equiv^o \interpM{t_2'}(\interpM{t_2}(\gamma)).
        \end{align}
        By the induction hypothesis for $t_1' \equiv^o t_2'$, it suffices to show that
        $\interpM{t_1}(\gamma) = \interpM{t_2}(\gamma)$. But for two functions to be equal, it suffices to show they are the same on every input, which is the precise definition of $t_1 \equiv^\hold t_2$.
    \end{proof}
\end{ExtendedVersion}

\begin{ExtendedVersion}
    \subsubsection{Commutativity}
\end{ExtendedVersion}

Intuitively, two \PCAST nodes commute exactly when their underlying
Paulis commute. Formally, \cref{fig:CommuteTermTerm} defines a
commutativity relation $\commute{n}{n'}$ between nodes, aided by a helper relation
$\commute{Q}{n}$ (\cref{fig:CommutePauliNode}) that indicates when a Pauli commutes with a \PCAST node.

\begin{figure}
{\small \begin{align*} \begin{array}{c}
    \inferrule*%[right=Rot-\commuteTTRule]
        {\commute{P}{n}}
        {\commute{\Rot{P}{\theta}}{n}}
    \quad
    \inferrule*%[right=Prep-\commuteTTRule]
        {\commute{P_1}{n} \\ \commute{P_2}{n}}
        {\commute{\Prep{P_1}{P_2}}{n}}
    \quad
    \inferrule*%[right=Meas-\commuteTTRule]
        {\commute{P}{n}}
        {\commute{\Meas[c]{P}}{n}}
    \\ \\
    \inferrule*
        {F \circ F' = F' \circ F}
        {\commute{F}{F'}}
    \quad
    \inferrule*%[right=F-\commuteTTRule]
        {n~\text{not a Pauli frame} \\ \commute{n}{F}}
        {\commute{F}{n}}
    \\
    \inferrule*
        {~}
        {\commute{\mu}{\Rot{P}{\theta}}}
    \quad
    \inferrule*
        {~}
        {\commute{\mu}{F}}
    \quad
    \inferrule*
        {~}
        {\commute{\mu}{\Prep{P_1}{P_2}}}
    % \\ \\
    % \text{where}~\fv(n) = \begin{cases}
    %     \{c\} & n = \Meas[c]{P} \\
    %     \{c \mid \cLookup{m}{c} \neq \cLookup{\mu(m)}{c} \}
    %         & n = \mu \\
    %     \emptyset &\text{otherwise}
    % \end{cases}
\end{array} \end{align*}}%
\caption{Definition of $\commute{n_1}{n_2}$ indicating when two \PCAST nodes commute.
These definitions, given as inference rules, are understood as follows: the conclusion below the line holds if and only if all the hypotheses above the line hold.
\ifExtended{The bottom three rules are vacuously true as they have no hypotheses.}{}
%In the final rule, $\fv(n)$ is the set of classical variables are affected by $n$.
}
\label{fig:CommuteTermTerm}
\end{figure}

\begin{figure}
{\small \begin{align*} \begin{array}{c}
    \inferrule*%[right=Rot-\commuteTPRule]
        {\commute{Q}{P}}
        {\commute{Q}{\Rot{P}{\theta}}}
    \qquad
    \inferrule*%[right=F-\commuteTPRule]
        {\fwdAction{F}(Q)=Q}
        {\commute{Q}{F}}
    \\ \\
    \inferrule*%[right=Prep-\commuteTPRule]
        {\commute{Q}{P_1} \\ \commute{Q}{P_2}}
        {\commute{Q}{\Prep{P_1}{P_2}}}
    \quad
    \inferrule*%[right=Meas-\commuteTPRule]
        {\commute{Q}{P}}
        {\commute{Q}{\Meas[c]{P}}}
    \quad
    \inferrule*
        {~}
        {\commute{Q}{\mu}}
\end{array} \end{align*} }%
\caption{Definition of $\commute{Q}{n}$, when a Pauli $Q$ commutes with a \PCAST node $n$. Recall that for two Paulis, we write $\commute{P_1}{P_2}$ if and only if
$\commutativity{P_1}{P_2} = 0$.}
\label{fig:CommutePauliNode}
\end{figure}

\begin{ExtendedVersion}
A unitary node $n$ is either a rotation or Pauli frame, and unitary nodes
distribute over multiplication of Pauli vectors. Note that unitary nodes are all monoidal (\cref{lem:conjugateMonoidal}).

\begin{lemma} \label{lem:unitaryNodeCommutePauli}
    If $n$ is unitary and $\commute{Q}{n}$, then $\interpM{n}(Q)=Q$.
\end{lemma}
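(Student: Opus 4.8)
The plan is to argue by cases on the unitary node $n$, which by definition is either a Pauli rotation $\Rot{P}{\theta}$ or a Pauli frame $F$. In each case the hypothesis $\commute{Q}{n}$ unpacks, via \cref{fig:CommutePauliNode}, into a concrete algebraic fact about $Q$, and the conclusion then follows from a short computation with the node's semantics. No induction is needed, since ``unitary node'' has exactly these two constructors.

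For the rotation case, $\commute{Q}{\Rot{P}{\theta}}$ means $\commute{Q}{P}$, i.e. $QP = PQ$. I would expand $\interpM{\Rot{P}{\theta}}(Q) = \interp{\Rot{P}{\theta}}\, Q\, \interp{\Rot{P}{-\theta}}$ using the closed form $\interp{\Rot{P}{\theta}} = \cosBy2{\theta}\, I + i\sinBy2{\theta}\, P$ from \cref{defn:rot} (and hence $\interp{\Rot{P}{-\theta}} = \cosBy2{\theta}\, I - i\sinBy2{\theta}\, P$). Multiplying out gives four terms: the $I$--$I$ term contributes $\cosBy2{\theta}^2\, Q$; the two cross terms are $\pm i\cosBy2{\theta}\sinBy2{\theta}$ times $PQ$ and $QP$, which cancel because $P$ and $Q$ commute; and the $P$--$P$ term is $\sinBy2{\theta}^2\, PQP = \sinBy2{\theta}^2\, QPP = \sinBy2{\theta}^2\, Q$ since $P$ is Hermitian. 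Summing and using $\cos^2 + \sin^2 = 1$ leaves exactly $Q$. (Equivalently, commuting $P$ past $Q$ lets the factors $e^{-i\theta/2\,P}$ and $e^{i\theta/2\,P}$ annihilate directly.)

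For the frame case, $\commute{Q}{F}$ is by definition $\fwdAction{F}(Q) = Q$. By \cref{lem:FrameUnitary} this equation reads $(\unitaryof{F})^\dagger Q\, \unitaryof{F} = Q$; conjugating both sides by $\unitaryof{F}$ — that is, multiplying on the left by $\unitaryof{F}$ and on the right by $(\unitaryof{F})^\dagger$ — gives $\unitaryof{F}\, Q\, (\unitaryof{F})^\dagger = Q$, and the left-hand side is exactly $\interpM{F}(Q)$ by \cref{defn:framesemantics}.

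The only place one can slip is keeping the two conjugation directions straight: the lookup action $\fwdAction{F}$ implements the \emph{inverse} conjugation $(\unitaryof{F})^\dagger(-)\,\unitaryof{F}$, whereas the semantics $\interpM{F}$ implements $\unitaryof{F}(-)(\unitaryof{F})^\dagger$, so one must route through \cref{lem:FrameUnitary} rather than reading $\fwdAction{F}(Q)=Q$ off as an equation about $\interpM{F}$ directly. Beyond that bookkeeping, both cases are routine.
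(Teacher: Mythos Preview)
Your proof is correct and follows the same approach as the paper: a case split on whether the unitary node is a rotation or a frame, unpacking $\commute{Q}{n}$ via \cref{fig:CommutePauliNode} in each case and then computing directly from the semantics. The paper's version is terser (it just cites \cref{fig:CommutePauliNode} for frames and \cref{defn:rot} for rotations), but your explicit expansion of the rotation conjugation and your careful handling of the two conjugation directions for frames are exactly what is being left implicit there.
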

\begin{proof}
    By \cref{fig:CommutePauliNode} for frames and \cref{defn:rot} for rotations.
\end{proof}
\end{ExtendedVersion}

\begin{theorem} \label{thm:PoPRNodeCommuteCorrect}
    If $\commute{n_1}{n_2}$ then $n_1; n_2 \equiv n_2; n_1$.\footnotemark
\end{theorem}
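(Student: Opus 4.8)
The plan is to expand the sequential-composition semantics, $\interpM{n_1;n_2}=\interpM{n_2}\circ\interpM{n_1}$ and $\interpM{n_2;n_1}=\interpM{n_1}\circ\interpM{n_2}$, and show the two resulting cq-channels are equal (which, for the \hold outcome, is exactly equality on every input cq-state). I would proceed by case analysis on which rule of \cref{fig:CommuteTermTerm} witnesses $\commute{n_1}{n_2}$, with a single recursive step for the rule that derives $\commute{F}{n}$ from $\commute{n}{F}$. In each case the side conditions of the applicable rule unwind, via \cref{fig:CommutePauliNode}, into statements of the form $\commute{P}{P'}$ between base Paulis of the two nodes, $\fwdAction{F}(P)=P$, or $F\circ F'=F'\circ F$; everything after that is a routine rearrangement.

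For two ``pure-Pauli'' nodes (rotations, preparations, measurements) I would first record the common normal form that follows from \cref{defn:rot,defn:prep,defn:meas} and linearity of the Pauli-map lifting: such a node $n$ acts as $\interpM{n}(\gamma)=m;\cAssign{\vec c}{\vec b}\mapsto\sum_k A^n_k\,\gamma(m)\,(A^n_k)^\dagger$, where $\vec c$ is the (possibly empty) tuple of classical variables written by $n$ and each $A^n_k$ is a polynomial in the base Paulis of $n$---namely $e^{-i\theta P/2}$ for $\Rot{P}{\theta}$; the pair $I+P_Z$ and $P_X(I-P_Z)$ for $\Prep{P_Z}{P_X}$; and $\tfrac12(I+(-1)^bP)$ for $\Meas[c]{P}$, which depends on the bit written to $c$. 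Unwinding the commutativity hypothesis shows that every base Pauli of $n_1$ commutes, as a matrix, with every base Pauli of $n_2$, hence $A^{n_1}_j$ commutes with $A^{n_2}_k$ for all $j,k$. Composing the two interpretations in either order then produces $\sum_{j,k}A^{n_1}_jA^{n_2}_k\,\gamma(m)\,(A^{n_2}_k)^\dagger(A^{n_1}_j)^\dagger$ once the $A$'s are slid past one another, which is symmetric in $n_1,n_2$; on the classical side both orders append the assignments to $\vec c_1$ and to $\vec c_2$, and these agree because the written variables are disjoint---part of the well-formedness condition recorded in the theorem's footnote.

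The frame cases use the unitary picture. If both nodes are frames, $\commute{F_1}{F_2}$ is $F_1\circ F_2=F_2\circ F_1$, hence $\unitaryof{F_1}\unitaryof{F_2}=\unitaryof{F_2}\unitaryof{F_1}$ up to phase (using $\unitaryof{F\circ F'}=\unitaryof{F}\unitaryof{F'}$), and conjugation by commuting unitaries commutes; by \cref{defn:framesemantics} this settles the case. If one node is a frame $F$ and the other a pure-Pauli node $n$, the hypothesis $\fwdAction{F}(P)=P$ for each base Pauli $P$ of $n$ reads, by \cref{lem:FrameUnitary}, as $(\unitaryof{F})^\dagger P\,\unitaryof{F}=P$, i.e.\ $\unitaryof{F}$ commutes with every base Pauli of $n$ and therefore with every $A^n_k$, so conjugation by $\unitaryof{F}$ slides through $\gamma\mapsto\sum_k A^n_k\,\gamma\,(A^n_k)^\dagger$. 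Finally, when one node is a measurement-space function $\mu$ (matched against $\Rot{P}{\theta}$, $F$, $\Prep{P_1}{P_2}$, and---under the footnote's freshness condition---$\Meas[c]{P}$), I would use that $\interpM{\mu}$ merely recombines classical fibers, $\interpM{\mu}(\gamma)(m_2)=\sum_{m_1\in\mu^{-1}(m_2)}\gamma(m_1)$, whereas the other node acts fiberwise and linearly on the quantum part (possibly also appending a fresh classical variable); linearity then lets that quantum operation pass through the sum over each fiber, in either order.

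The step I expect to be the main obstacle is not any single matrix identity---every case collapses to the fact that commuting matrices may be freely reordered inside a conjugation sum---but the accounting of classical state: checking that the two composites genuinely have the same output measurement space, that appending the two nodes' assignments in either order yields identical classical states (which needs distinctness of the written variables), and that the one asymmetric situation, $\mu$ against a measurement, is licensed by whatever freshness hypothesis the footnote supplies. Making those side conditions precise is where the care lies.
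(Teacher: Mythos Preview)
Your quantum-side argument is correct and essentially the same as the paper's: the paper also proceeds by case analysis on the derivation of $\commute{n_1}{n_2}$, writes each non-frame node in the Kraus-like form you describe (via \cref{lem:PauliNodeSubstitution}), and reduces each case to sliding commuting operators past one another. Your uniform packaging of rotations, preparations, and measurements into a single ``polynomial in the base Paulis'' template is a bit tidier than the paper's case-by-case treatment, but the content is the same.

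Two concrete points, however, rest on a misreading of the paper. First, the footnote does \emph{not} supply any freshness or disjointness hypothesis; it merely observes that the implication is one-directional and that $\commute{-}{-}$ is not reflexive on preparations. So your appeal to ``the well-formedness condition recorded in the theorem's footnote'' for the classical-variable bookkeeping is unfounded. Second, there is no rule $\commute{\mu}{\Meas[c]{P}}$ in \cref{fig:CommuteTermTerm}, so the $\mu$-versus-measurement case you worry about simply never arises in the induction; you should drop it rather than try to license it. With those two corrections, your proof sketch goes through. (As an aside, the paper's own proof treats the $\Meas{}/\Meas{}$ case by analogy and does not spell out the classical-state reordering either; if you want to be fully rigorous there, you need to argue that appending $\cAssign{c_1}{b_1}$ and $\cAssign{c_2}{b_2}$ in either order yields the same element of the output measurement space, which is a property of how classical states are identified, not of the footnote.)
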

\footnotetext{Note that this property does not hold in the other direction; the rules for $\commute{n_1}{n_2}$ are strictly more
restrictive. In particular, the relation is not reflexive on $\Prep{P_Z}{P_X}$ since $\anticommute{P_Z}{P_X}$.
}
\begin{ExtendedVersion}
\begin{proof}
    By induction on the derivation of $\commute{n_1}{n_2}$.

    \proofpart[$n_1=\Prep{P_Z^1}{P_X^1}$ and $n_2=\Prep{P_Z^2}{P_X^2}$] \label{case:PrepPrep}
    Let $n_1=\Prep{P_Z^1}{P_X^1}$ such that
    $\commute{P_Z^1}{n_2}$ and $\commute{P_X^1}{n_2}$, and
    let $n_2$ be $\Prep{P_Z^2}{P_X^2}$, in which it
    will be the case that
    $P_Z^1$ and $P_X^1$ each commute with both $P_Z^2$ and $P_X^2$.
    Observe that $\interpM{n_1; n_2}(Q)$ can be written compactly as $\tfrac{1}{16} \sum_{b_1,b_2 \in \{0,1\}} \conjugate*{v_1 v_2}(Q)$, where
    \begin{align} \label{eqn:PrepTermCompact}
        v_j = (P_X^j)^{b_j} (I \pm^{b_j} P_Z^j)
    \end{align}
    and where $v \pm^b v' = v + (-1)^b v'$.
    But because of the commutativity constraints, $v_1 v_2 = v_2 v_1$, and so
    $n_1;n_2 \equiv n_2;n_1$.

    \proofpart[$n_1=\Prep{P_Z}{P_X}$ and $n_2=\Meas{P}$] \label{case:PrepMeas}

    Next assume $n_1=\Prep{P_Z}{P_X}$ and $n_2=\Meas[c]{P}$, in which case
    $\commute{P_Z}{P}$ and $\commute{P_X}{P}$. Then 
    \begin{align} \label{eqn:InternalPrepMeas}
        \interpM{n_1}\left( \interpM{n_2}(Q) \right)
        &= \cAssign{c}{b_2} \mapsto \sum_{b_1 \in \{0,1\}} v_1 (I \pm^{b_2} P) Q (I \pm^{b_2} P) v_1
    \end{align}
    where $v_1$ is defined in \cref{eqn:PrepTermCompact}. But again, $v_1$ commutes with $(I \pm^{b_2} P)$,
    which means that \cref{eqn:InternalPrepMeas} is equal to
    \begin{align}
        \cAssign{c}{b_2} \mapsto \sum_{b_1 \in \{0,1\}} (I \pm^{b_2} P) v_1 Q v_1 (I \pm^{b_2} P)
        &= \interpM{n_2}\left( \interpM{n_1}(Q) \right).
    \end{align}
    
    % \begin{align}
    %     &\interpM{n_1}\left(
    %         \interpM{n_2}(Q)
    %     \right) \\
    %     &= 
    %     \frac{1}{4} (I + P_Z) \interpM{n_2}(Q) (I + P_Z) \nonumber \\
    %     &+ \frac{1}{4} P_X (I - P_Z) \interpM{n_2}(Q) (I - P_Z) P_X \\
    %     &= \frac{1}{16} (I + P_Z) (I \pm P) Q (I \pm P) (I + P_Z) \nonumber \\
    %     &+ \frac{1}{16} P_X (I - P_Z) (I \pm P) Q (I \pm P) (I - P_Z) P_X
    % \end{align}
    % Now, because $\commute{P_Z}{P}$ and $\commute{P_X}{P}$, we can commute the $(I \pm P)$ factors to the outside of each term to obtain
    % \begin{align}
    %     &\frac{1}{16}  (I \pm P) (I + P_Z) Q (I + P_Z) (I \pm P) \nonumber \\
    %     &+ \frac{1}{16}  (I \pm P)  P_X (I - P_Z) Q (I - P_Z) P_X  (I \pm P) \\
    %     &= \interpM{n_2}\left(
    %         \interpM{n_1}(Q)
    %     \right)
    % \end{align}

    \proofpart[$n_1=\Prep{P_Z}{P_X}$ and $n_2$ is unitary] \label{case:PrepUnitary}

    \begin{align}
        \interpM{n_2}(\interpM{n_1}(Q))
        = \frac14 \interpM{n_2} (
            &(I + P_Z) Q (I + P_Z) \\
            &+
            P_X (I - P_Z) Q (I - P_Z) P_X
        )
    \end{align}
    Since $n_2$ is a unitary, we know that 
    $\interpM{n_2}(P_X) = P_X$ and 
    $\interpM{n_2}(I \pm P_Z) = I \pm P_Z$, and since it is monoidal (\cref{lem:MonoidalMapsNatural}), the equation above becomes
    \begin{align}
        \frac14 (
            &(I + P_Z) \interpM{n_2}(Q) (I + P_Z) \nonumber  \\
            &+
            P_X (I - P_Z) \interpM{n_2}(Q) (I - P_Z) P_X
        ) \\
        &= \interpM{n_1}(\interpM{n_2}(Q))
    \end{align}

    \proofpart[$n_1=\Meas{P}$ and $n_2=\Meas{P'}$]
    Similar to Cases~\ref{case:PrepPrep} and \ref{case:PrepMeas}.

    \proofpart[$n_1=\Meas{P}$ and $n_2$ unitary]
    Similar to \cref{case:PrepUnitary}.

    \proofpart[$n_1=\Rot{P}{\theta}$ and $n_2$ unitary]

    Assuming $\commute{n_2}{P}$, it must be the case that
    \begin{align}
        \interpM{n_2}(\interp{\Rot{P}{\theta}})
        &= \interpM{n_2}\left(
            \cosBy2{\theta} I + i \sinBy2{\theta} P
        \right) \\
        &= \cosBy2{\theta} I + i \sinBy2{\theta} P = \interp{\Rot{P}{\theta}} \notag
    \end{align}
    Thus
    \begin{align}
        &\interpM{n_2}(\interpM{\Rot{P}{\theta}}(Q)) \\
        &= \interpM{n_2}(\interp{\Rot{P}{-\theta}} Q \interp{\Rot{P}{\theta}}) \\
        &= \interpM{n_2}(\interp{\Rot{P}{-\theta}}) (\interpM{n_2}(Q)) \interpM{n_2}(\interp{\Rot{P}{\theta}}) \\
        &= \interp{\Rot{P}{-\theta}} (\interpM{n_2}(Q)) \interp{\Rot{P}{\theta}} \\
        &= \interpM{\Rot{P}{\theta}}(\interpM{n_2}(Q)).
    \end{align}

\end{proof}
\end{ExtendedVersion}

  \subsection{The \PCAST Graph}
    
\begin{definition}
    A \PCAST graph $G=(V,E)$ is a directed acyclic graph whose vertices $V$ are
    \PCAST nodes.
    % and whose edges $E$ are antireflexive and
    % asymmetric, and satisfy
    % { \small \begin{align}
    %     \forall n_1, n_2 \in V,~ \commute*{n_1}{n_2}
    %         \iff (n_1,n_2) \in E
    %         \xor
    %         (n_2,n_1) \in E.
    % \end{align} }%
For any nodes $n_1$ and $n_2$ that do not commute, there
is either an edge from $n_1$ to $n_2$ or vice versa (but not both). There are no
edges between commuting vertices.
\end{definition}

As an example, in \cref{fig:demo-optimized-graph} there are edges from both measurement nodes to the the measurement space function node $\mu$, but there are no edges to the Pauli frame $F'$ because both $Z_0$ and $X_1$ commute with $F'$.

% Note that for any topological sort $v_0,\ldots,v_{m-1}$, there is
% a unique way to add a new vertex $v$ at index $i$; any non-commutative nodes
% $v_0,\ldots,v_{i-1}$ will be given an edge into $v$, and any non-commutative 
% nodes $v_{i},\ldots,v_{m-1}$ will be given an edge out of $v$.

%\subsection{Semantics of a \PCAST graph}
%\label{sec:PoPR-graph-semantics}

Every topological ordering of a \PCAST graph $G$ corresponds to a \PCAST term $\termof{G}$.
Even though topological orderings of a graph are not unique, they are equivalent to each other due to
\cref{thm:PoPRNodeCommuteCorrect}: if two vertices do not commute, there is an
edge between them in one direction or the other, and that edge will be
preserved by the topological ordering.

\begin{ExtendedVersion}
\begin{lemma} \label{lem:topologicalOrderEquiv}
    For any two topological orderings $(n_0,\ldots,n_{k-1})$ and
    $(n_0',\ldots,n_{k-1}')$ of the same graph $G$,
    \begin{align}
        n_0; \cdots; n_{k-1}
        \equiv
        n_0'; \cdots; n_{k-1}'.
    \end{align}
\end{lemma}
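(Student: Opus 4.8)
The plan is to reduce the statement to a single swap of adjacent, incomparable vertices and then invoke \cref{thm:PoPRNodeCommuteCorrect} together with the monoid and congruence laws for $\equiv$ established above.

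First I would isolate a purely combinatorial claim: any two topological orderings of a finite DAG $G$ are related by a finite sequence of \emph{adjacent transpositions of incomparable vertices}, i.e.\ operations swapping two consecutive entries $n_i,n_{i+1}$ of an ordering for which $G$ has no edge between $n_i$ and $n_{i+1}$. I would prove this by induction on the number of vertices $k$ (the cases $k\le 1$ being trivial). Given orderings $\sigma=(n_0,\dots,n_{k-1})$ and $\tau=(m_0,\dots,m_{k-1})$: if $n_0=m_0$, delete this common first entry and recurse on the two resulting topological orderings of the DAG obtained by removing $n_0$. Otherwise $n_0$ occurs at some position $j>0$ in $\tau$. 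Since $n_0$ is first in $\sigma$ it has no incoming edges in $G$; hence for every $l<j$ there is no edge $m_l\to n_0$, and an edge $n_0\to m_l$ would force $m_l$ after $n_0$ in every topological ordering, contradicting $l<j$ in $\tau$ — so $n_0$ and $m_l$ are incomparable. Thus $n_0$ can be ``bubbled'' to the front of $\tau$ by $j$ successive adjacent transpositions of incomparable vertices, each producing a valid topological ordering; after this the first entries agree and we recurse.

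Second, I would show each adjacent transposition preserves $\equiv$. Suppose $n_0;\cdots;n_{i-1};n_i;n_{i+1};n_{i+2};\cdots;n_{k-1}$ is transposed at positions $i,i+1$ with no edge between $n_i$ and $n_{i+1}$ in $G$. By definition of the \PCAST graph, incomparable vertices commute, so $\commute{n_i}{n_{i+1}}$, and \cref{thm:PoPRNodeCommuteCorrect} gives $n_i;n_{i+1}\equiv n_{i+1};n_i$. Writing $p=n_0;\cdots;n_{i-1}$ and $q=n_{i+2};\cdots;n_{k-1}$, associativity of sequential composition lets me present the two orderings as $p;(n_i;n_{i+1});q$ and $p;(n_{i+1};n_i);q$; then reflexivity of $\equiv^\hold$ on $p$ and $q$ together with the congruence law for composition (applied twice) yields $p;(n_i;n_{i+1});q\equiv p;(n_{i+1};n_i);q$. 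Finally, chaining: by the combinatorial claim the two given topological orderings are linked by finitely many such transpositions, each step preserves $\equiv$, and $\equiv$ is transitive, so the induced \PCAST terms are $\equiv$.

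I expect the main obstacle to be the combinatorial claim — specifically, verifying carefully that bubbling $n_0$ forward only ever swaps genuinely incomparable vertices and that every intermediate sequence remains a valid topological ordering; the equivalence bookkeeping in the second step is routine given the monoid and congruence lemmas already in hand.
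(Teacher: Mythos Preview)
Your proposal is correct and follows essentially the same approach as the paper: induct on $k$, bubble $n_0$ to the front of the second ordering via adjacent swaps of incomparable (hence commuting) vertices, invoke \cref{thm:PoPRNodeCommuteCorrect} for each swap, and recurse on $G\setminus\{n_0\}$. Your version is in fact slightly more careful than the paper's, as you explicitly verify that no edge exists in \emph{either} direction between $n_0$ and the vertices preceding it in $\tau$, whereas the paper only states one direction before concluding commutativity.
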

\begin{proof}
    By induction on $k$. If $k=0$ the property is trivial.

    Otherwise, suppose the property holds for $k-1$. Then there exists some $j$ with $n_j'=n_0$ such that there is no edge from any $n_i'$ to $n_j'$ with $i < j$, and thus $\commute{n_i'}{n_j'}$. Therefore
    \begin{align}
        n_0'; \cdots; n_{j-1}'; n_j'; n_{j+1}';\cdots; n_{k-1}'
        &\equiv n_j'; n_0'; \cdots; n_{j-1}'; n_{j+1}';\cdots; n_{k-1}'
    \end{align}
    But then $n_1;\cdots;n_{k-1}$ and $ n_0'; \cdots; n_{j-1}'; n_{j+1}';\cdots; n_{k-1}'$ are both topological orderings of the graph $G' \setminus \{n_0\}$, so they too are equivalent, completing the proof.
\end{proof}
\end{ExtendedVersion}

\begin{ExtendedVersion}
\begin{definition}
    Let $G$ be a \PCAST graph and let $(n_0,\ldots,n_{k-1})$ be any topological
    ordering of its vertices. The semantics of $G$ as a cq-channel is given by
    \begin{align}
        \interpM{G} &= \interpM{n_0;\cdots;n_{k-1}}
            = \interpM{n_{k-1}} \circ \cdots \circ \interpM{n_0}.
    \end{align}
\end{definition}
\end{ExtendedVersion}

\begin{ExtendedVersion}
Similarly, any \PCAST term $t$ can be converted into a \PCAST graph $G^t$ with vertices $\{n_0,\ldots,n_{k-1}\}$ and edges
$(n_i,n_j)$ if and only if both $\commute*{n_i}{n_j}$ and $i < j$.

\begin{lemma}
    For any \PCAST term $t$ and Pauli $Q$,
    \begin{align}
        \interpM{G^t}(Q) &\equiv \interpM{t}(Q).
    \end{align}
\end{lemma}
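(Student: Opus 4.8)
The plan is to observe that the sequence $t = n_0; \cdots; n_{k-1}$ is \emph{itself} a topological ordering of the graph $G^t$, so the claim falls out of the definition of $\interpM{G^t}$ together with \cref{lem:topologicalOrderEquiv}.

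First I would check that $G^t$ is a well-formed \PCAST graph. By construction its edges are exactly the pairs $(n_i,n_j)$ with $\anticommute{n_i}{n_j}$ and $i<j$; hence every pair of non-commuting vertices is joined by precisely one edge, no pair of commuting vertices is joined, and --- since every edge runs from a smaller to a larger index --- the graph is acyclic. Consequently the identity listing $(n_0,\ldots,n_{k-1})$ respects every edge of $G^t$ and is therefore a topological ordering of it. Now the semantics of a \PCAST graph is defined by choosing \emph{any} topological ordering (this being well-defined precisely because \cref{lem:topologicalOrderEquiv} makes all such choices $\equiv$-equivalent), so in particular $\interpM{G^t} \equiv \interpM{n_0;\cdots;n_{k-1}}$, and by the definition of term semantics $\interpM{n_0;\cdots;n_{k-1}} = \interpM{t}$. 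Evaluating both sides on the Pauli $Q$ (equivalently, on the constant cq-state $\_ \mapsto Q$) gives $\interpM{G^t}(Q) \equiv \interpM{t}(Q)$; since \cref{lem:topologicalOrderEquiv} establishes the stronger \hold equivalence, the conclusion in fact holds for every outcome $o$.

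There is essentially no computation involved; the only places needing a moment of care are the well-formedness check and the treatment of repeated nodes. For instance, if $t$ contains two occurrences of a preparation $\Prep{P_Z}{P_X}$ at positions $i<j$, those two vertices \emph{anticommute} (the node commutativity relation is not reflexive on preparations, since $\anticommute{P_Z}{P_X}$), so $G^t$ must contain the edge $(n_i,n_j)$ --- which is exactly what the index order forces, so no inconsistency arises. I would also flag that $\interpM{G^t}$ is only pinned down up to $\equiv$, so $\interpM{G^t}(Q)\equiv\interpM{t}(Q)$ is the sharpest statement available; reconciling this with the on-the-nose equality $\interpM{n_0;\cdots;n_{k-1}}=\interpM{t}$ for the fixed ordering is the single genuinely conceptual point, and it is settled by \cref{lem:topologicalOrderEquiv}. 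I do not anticipate any real obstacle beyond this bookkeeping.
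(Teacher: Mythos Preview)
Your proposal is correct and matches the paper's approach exactly: the paper's proof is the one-liner ``Follows from \cref{lem:topologicalOrderEquiv} and the fact that $t$ is a topological ordering of $G^t$.'' Your additional care about well-formedness and repeated nodes is more detail than the paper provides, but the argument is the same.
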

\begin{proof}
    Follows from \cref{lem:topologicalOrderEquiv} and the fact that $t$ is a topological ordering of $G^t$.
\end{proof}
\end{ExtendedVersion}

\begin{comment}
Under this semantics, a \PCAST graph essentially describes an equivalence class of \PCAST terms.
Transformations on \PCAST graphs can then be thought of as transformations on
\PCAST terms: (1) convert a \PCAST graph into a linear \PCAST term corresponding to a
topological ordering of its vertices; (2) apply the transformation on the \PCAST
term; and (3) convert the \PCAST term back into a \PCAST graph.

Going into further detail, to apply a particular optimization like $n_1;n_2 \rightarrow t$
where $n_1$ and $n_2$ correspond to vertices in the \PCAST graph, we first need to
find a topological ordering whereby $n_1$ and $n_2$ are next to each other. This is possible
exactly when either (1) $n_1$ and $n_2$ commute and have exact same set of edges; or
(2) $n_1$ and $n_2$ do not commute and have distance $1$.
\end{comment}

\PCAST graphs satisfy three key invariants: frame-terminal, measurement-space terminal, and fully merged.
\begin{comment}
\begin{enumerate}
    \item \emph{Frame-terminal:} A well-formed \PCAST graph contains a single Pauli frame node, which occurs at the very end of the circuit;
    \item \emph{Measurement-space terminal:} A well-formed \PCAST graph contains at most one measurement space function node, which occurs at the very end of the circuit next to the terminating Pauli frame; and
    \item \emph{Fully merged:} A well-formed \PCAST graph contains no mergeable nodes.
\end{enumerate}
\end{comment}

\subsubsection{Frame-terminal graphs}

A \PCAST graph is called \emph{frame-terminal} when it contains a single Pauli frame node, and that frame has no outgoing edges. It is always possible to construct a frame-terminal graph because a Pauli frame $F$ can be commuted past any other node $n$ by transforming $n$ into a new node $\fwdAction{F}(n)$ that satisfies $F; n \equiv \fwdAction{F}(n); F$.
\begin{align} \label{lem:frameTerminal}
    \fwdAction{F}(n) &= \begin{cases}
        F^{-1} \circ F' \circ F & n=F' \\
        \Rot{\fwdAction{F}(P)}{\theta} & n = \Rot{P}{\theta} \\
        \Prep{\fwdAction{F}(P_Z)}{\fwdAction{F}(P_X)} & n=\Prep{P_Z}{P_X} \\
        \Meas[c]{\fwdAction{F}(P)} & n=\Meas[c]{P} \\
        \mu & n = \mu
    \end{cases}
\end{align}

\begin{ExtendedVersion}
\begin{lemma} \label{lem:extendedFrameTerminal}
    $F; n \equiv \fwdAction{F}(n); F$.
\end{lemma}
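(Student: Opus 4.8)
The plan is to proceed by cases on the node $n$, exactly matching the case split in the definition \eqref{lem:frameTerminal} of $\fwdAction{F}(n)$. In each case I must show that, for every cq-state $\gamma$, the cq-channels $\interpM{F;n}$ and $\interpM{\fwdAction{F}(n);F}$ agree on $\gamma$. Writing $U = \unitaryof{F}$, recall $\interpM{F}(Q) = U Q U^\dagger$ (Defn.~\ref{defn:framesemantics}), so $\interpM{F}$ is a monoidal Pauli map by Lemma~\ref{lem:conjugateMonoidal}, and $\fwdAction{F}(P) = U^\dagger P U$ by Lemma~\ref{lem:FrameUnitary}. The central algebraic fact I will use repeatedly is that conjugation by $U$ on the outside cancels against the $U^\dagger \cdot U$ appearing inside $\fwdAction{F}(P)$: $U (\fwdAction{F}(P)) U^\dagger = U U^\dagger P U U^\dagger = P$, and more generally $\interpM{F}$ commutes with the Pauli-vector arithmetic ($I + P_Z$, products, etc.) because it is monoidal and linear.

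First, for a rotation $n = \Rot{P}{\theta}$: using Defn.~\ref{defn:rot}, $\interp{\Rot{P}{\theta}} = \cosBy2{\theta} I + i\sinBy2{\theta} P$, so $\interpM{F}(\interp{\Rot{P}{\theta}}) = \cosBy2{\theta} I + i\sinBy2{\theta}\, UPU^\dagger = \interp{\Rot{UPU^\dagger}{\theta}}$; the conjugation actions then compose correctly, which is essentially the computation already done in the last case of Thm.~\ref{thm:PoPRNodeCommuteCorrect}'s proof but run in the ``new-node'' direction. For $n = \Prep{P_Z}{P_X}$ and $n = \Meas[c]{P}$, I use the Pauli-vector forms from Lemma~\ref{lem:PauliNodeSubstitution}: e.g. for preparation, $\interpM{F;n}(\gamma) = m \mapsto \interpM{F}\bigl((I+P_Z)\gamma(m)(I+P_Z) + P_X(I-P_Z)\gamma(m)(I-P_Z)P_X\bigr)$ is wrong order — rather $\interpM{\Prep{P_Z}{P_X}}(\interpM{F}(\gamma))$, and since $\interpM{F}$ is monoidal it pushes through to give $(I + UP_ZU^\dagger)\,\interpM{F}(\gamma(m))\,(I+UP_ZU^\dagger) + \cdots$, which is exactly $\interpM{\Prep{\fwdAction{F}(P_Z)}{\fwdAction{F}(P_X)}}(\interpM{F}(\gamma))$ after recognizing $U(U^\dagger P_Z U)U^\dagger = $ is not needed — I should instead expand $\interpM{F;n}$ as $\interpM{n}\circ\interpM{F}$ and $\interpM{\fwdAction{F}(n);F} = \interpM{F}\circ\interpM{\fwdAction{F}(n)}$, then check $\interpM{n}(\interpM{F}(Q)) = \interpM{F}(\interpM{\fwdAction{F}(n)}(Q))$ for each Pauli $Q$; the measurement case also needs that the classical-variable bookkeeping ($\cAssign{c}{b}$) is untouched by $\interpM{F}$, which is immediate since frames act only on the quantum component. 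For $n = F'$, the claim $F; F' \equiv (F^{-1}\circ F'\circ F); F$ reduces to $\unitaryof{F^{-1}\circ F'\circ F}\, \unitaryof{F} = \unitaryof{F'}\unitaryof{F}$, i.e. $\unitaryof{F}\,\unitaryof{F'} = \unitaryof{F}\,\unitaryof{F'}$ after applying Lemma~\ref{lem:FrameCompositionExpanded} three times (and that $\unitaryof{F^{-1}} = U^\dagger$). For $n = \mu$, $\fwdAction{F}(\mu) = \mu$ and the two sides commute because $\mu$ acts purely classically while $F$ acts purely quantumly, so the tensor-like independence gives $\interpM{F}\circ\interpM{\mu} = \interpM{\mu}\circ\interpM{F}$ directly.

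The main obstacle I anticipate is bookkeeping rather than conceptual: getting the composition order right ($\interpM{t_1;t_2} = \interpM{t_2}\circ\interpM{t_1}$) so that the ``$\fwdAction{F}$ vs.\ $\bwdAction{F}$'' / ``$U$ vs.\ $U^\dagger$'' conventions line up, and handling the lifting from Pauli maps to full cq-channels via Defn.~\ref{defn:PauliMap} uniformly so I do not have to re-derive linearity in each case. The $n = F'$ case also requires the fact that $F^{-1}$ is a well-formed frame with $\unitaryof{F^{-1}} = (\unitaryof{F})^\dagger$; if that has not been formally established earlier I would need a one-line argument that $\fwdAction{F^{-1}} = (\fwdAction{F})^{-1}$ by construction of the inverse frame. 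Once the conventions are fixed, every case collapses to ``monoidal/linear maps push through Pauli-vector expressions,'' which is short.
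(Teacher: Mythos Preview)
Your proposal is correct and follows essentially the same approach as the paper: a case analysis on $n$ that reduces each case to the identity $\interpM{F}\bigl(\interp{\fwdAction{F}(n)}\bigr) = \interp{n}$ (equivalently $U\,\fwdAction{F}(P)\,U^\dagger = P$), using monoidality of $\interpM{F}$ to push conjugation through the Pauli-vector expressions for $\Rot{}{}$, $\textsf{Prep}$, and $\textsf{Meas}$, handling $n=F'$ via \cref{lem:FrameCompositionExpanded}, and noting $\mu$ commutes with $F$ trivially. Your stream-of-consciousness self-corrections land on exactly the right formulation---check $\interpM{n}(\interpM{F}(Q)) = \interpM{F}(\interpM{\fwdAction{F}(n)}(Q))$---and your flagged obstacles (composition-order conventions, well-formedness of $F^{-1}$) are precisely the bookkeeping the paper glosses over as well.
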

\begin{proof}
    By case analysis on $n$.

    \proofpart[$n=F'$] A corollary of \cref{lem:FrameCompositionExpanded} is that $F; F' \equiv F' \circ F$. Therefore, $F^{-1}\circ F' \circ F; F \equiv F \circ F^{-1}\circ F' \circ F \equiv F' \circ F$, as required.

    \proofpart[$n=\Rot{P}{\theta}$] \label{proof:frameTerminalRot} Notice that
    \begin{align}
        &\interpM{F}(\interp{\Rot{\fwdAction{F}(P)}{\theta}}) \nonumber \\
        &= \unitaryof{F} \left(\cosBy2{\theta}I + i \sinBy2{\theta}\fwdAction{F}(P)\right) (\unitaryof{F})^\dagger \\
        &= \unitaryof{F} \left(\cosBy2{\theta}I + i \sinBy2{\theta} (\unitaryof{F})^\dagger P \unitaryof{F} \right) (\unitaryof{F})^\dagger \\
        &= \cosBy2{\theta}I + i \sinBy2{\theta} P = \interp{\Rot{P}{\theta}}.
    \end{align}
    Because $\interpM{F}$ is monoidal (\cref{lem:conjugateMonoidal}), we have that
    \begin{align}
        &\interpM{F}\left( \interpM{\Rot{\fwdAction{F}(P)}{\theta}}(Q)\right) \nonumber \\
        %&= \interpM{F}\left( \interp{\Rot{\fwdAction{F}(P)}{\theta}} Q \interp{\Rot{\fwdAction{F}(P)}{-\theta}}\right) \\
        &= \left[\interpM{F}\left( \interp{\Rot{\fwdAction{F}(P)}{\theta}} \right)\right]
            \interpM{F}(Q)
            \left[\interpM{F}\left(\interp{\Rot{\fwdAction{F}(P)}{-\theta}}\right) \right] \\
        &= \interp{\Rot{P}{\theta}} \interpM{F}(Q) \interp{\Rot{P}{-\theta}} \\
        &= \interpM{\Rot{P}{\theta}}\left(\interpM{F}(Q)\right).
    \end{align}

    \proofpart[$n=\Prep{P_Z}{P_X}$] Similar to \cref{proof:frameTerminalRot}, it suffices to show that
    \begin{align}
        \interpM{F}\left(I \pm \fwdAction{F}(P_Z)\right) = I \pm P_Z 
        \quad\text{and}\quad
        \interpM{F} \left( \fwdAction{F}(P_X) \right) = P_X.
    \end{align}

    \proofpart[$n=\Meas{P}$] Similarly follows from 
    \begin{align}
        \interpM{F}\left(I \pm \fwdAction{F}(P)\right) &= I \pm P.
    \end{align}

    \proofpart[$n=\mu$] Trivial because $\mu$ commutes with $F$.
\end{proof}
\end{ExtendedVersion}

\subsubsection{Measurement-space terminal graph}

Similarly, a graph is called \emph{measurement-space terminal} when it contains at most one measurement space function node, and that node has no outgoing edges. Since frames and measurement space functions always commute, this does not conflict with the graph being frame-terminal. To construct a measurement-space-terminal graph, it suffices to push all measurement space nodes past measurement nodes\ifExtended{.}{ via the equivalence
\begin{align}
    \mu; \Meas[c]{P} &\equiv \Meas[c']{P}; \mu'
\end{align}
where $c'$ is fresh and $\mu'(m)=m; \cAssign{c}{m(c')}$.}

\begin{ExtendedVersion}
\begin{lemma}
    \begin{align}
    \mu; \Meas[c]{P} &\equiv \Meas[c']{P}; \mu[c \mapsto c']
    \end{align}
    where $c'$ is fresh and $\mu[c \mapsto c'](m)=m; \cAssign{c}{m(c')}$.
\end{lemma}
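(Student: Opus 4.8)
The plan is to unfold both sides of the claimed equivalence using term semantics, $\interpM{t_1; t_2}(\gamma) = \interpM{t_2}(\interpM{t_1}(\gamma))$, evaluate them on an arbitrary cq-state $\gamma \in \CQSTATE[\mathcal{M}_1]$, and check that the two resulting cq-states agree pointwise. The two ingredients I would use are the pointwise description of the measurement channel from \cref{defn:meas} and \cref{lem:PauliNodeSubstitution}, namely $\interpM{\Meas[c]{P}}(\gamma')(m; \cAssign{c}{b}) = \tfrac14 (I + (-1)^b P)\,\gamma'(m)\,(I + (-1)^b P)$, and the defining formula of a measurement space function, $\interpM{\mu}(\gamma)(m_2) = \sum_{m_1 \in \mu^{-1}(m_2)} \gamma(m_1)$.

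For the left-hand side, set $\gamma' = \interpM{\mu}(\gamma)$ and note that conjugation $Q \mapsto \tfrac14(I + (-1)^b P)\, Q\, (I + (-1)^b P)$ is linear in $Q$, so it distributes over the sum defining $\gamma'$:
\begin{align*}
    \interpM{\mu; \Meas[c]{P}}(\gamma)(m_2; \cAssign{c}{b})
    &= \tfrac14 (I + (-1)^b P)\,\gamma'(m_2)\,(I + (-1)^b P) \\
    &= \sum_{m_1 \in \mu^{-1}(m_2)} \tfrac14 (I + (-1)^b P)\,\gamma(m_1)\,(I + (-1)^b P).
\end{align*}

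For the right-hand side, set $\gamma'' = \interpM{\Meas[c']{P}}(\gamma)$, a cq-state over $\mathcal{M}_1$ extended by the fresh variable $c'$, with $\gamma''(m_1; \cAssign{c'}{b}) = \tfrac14 (I + (-1)^b P)\,\gamma(m_1)\,(I + (-1)^b P)$. The key bookkeeping step is to pin down the fiber structure of $\mu[c \mapsto c']$: because $c'$ is fresh, the preimage of a state $m_2; \cAssign{c}{b}$ under $\mu[c \mapsto c']$ should be exactly $\{\, m_1; \cAssign{c'}{b} \mid m_1 \in \mu^{-1}(m_2)\,\}$, since the remapping copies the outcome bit $b$ from $c'$ into $c$ without otherwise altering it, while $\mu$ merges precisely the fibers over $m_2$. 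Feeding this into the measurement-space-function semantics gives
\begin{align*}
    \interpM{\Meas[c']{P}; \mu[c\mapsto c']}(\gamma)(m_2; \cAssign{c}{b})
    = \sum_{m_1 \in \mu^{-1}(m_2)} \gamma''(m_1; \cAssign{c'}{b}),
\end{align*}
which is the same expression as for the left-hand side. As $\gamma$ is arbitrary, this yields $\mu; \Meas[c]{P} \equiv^{\text{hold}} \Meas[c']{P}; \mu[c\mapsto c']$, hence also release equivalence.

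I expect the middle step to be the only real obstacle: making the fiber description of $\mu[c \mapsto c']$ precise and confirming that freshness of $c'$ is exactly what lets the sum over $\mu^{-1}(m_2)$ commute with the measurement-induced split on the outcome bit $b$. Everything else is linearity of conjugation by $I \pm P$ together with the two channel definitions, and involves no quantum subtlety.
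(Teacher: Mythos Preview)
Your proposal is correct and follows essentially the same approach as the paper: both arguments unfold the term semantics and compare the two sides directly via the definitions of $\interpM{\Meas[c]{P}}$ and $\interpM{\mu}$. The paper's version is terser, evaluating on a single Pauli input $Q$ (a constant cq-state) rather than on an arbitrary $\gamma$, and it leaves the fiber description of $\mu[c\mapsto c']$ implicit; your version spells out the preimage structure explicitly and uses linearity of the conjugation $Q\mapsto \tfrac14(I+(-1)^bP)\,Q\,(I+(-1)^bP)$ to handle the sum over $\mu^{-1}(m_2)$, which is the more careful of the two.
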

\begin{proof}
    Let $\mu'=\mu[c \mapsto c']$. Then
    \begin{align}
        &\interpM{\Meas[c']{P}; \mu'}(Q)
        = \interpM{\mu'}\left( \interpM{\Meas[c']{P}}(Q) \right) \\
        &= \interpM{\mu'}\left( m, \cAssign{c'}{b} \mapsto \tfrac{1}{4} (I \pm P)Q(I \pm P) \right) \\
        &= m, \cAssign{c'}{b}, \cAssign{c}{b} \mapsto \tfrac{1}{4} (I \pm P)Q(I \pm P)
    \end{align}
    This is exactly the result of $\interpM{\mu;\Meas[c]{P}}(Q)$, as desired.
\end{proof}
\end{ExtendedVersion}

\subsubsection{Mergeable nodes}

A pair of nodes $(n_1,n_2)$ is called \emph{mergeable} if there is no path between them in the graph and there is a step $n_1;n_2 \mergestep t$ to a new Pauli term, as defined in \cref{fig:MergeVerticesStep}. For such a merge to occur, $n_1$ and $n_2$ must have the same underlying Pauli arguments modulo $\pm 1$, in which case they can be combined into a single node, or in the case of measurement, two new but still simpler nodes. 

\begin{ShortVersion}
\begin{figure}
    \centering
{\small \begin{align*} \begin{array}{c}
    \Rot{P}{\theta_1}; \Rot{\pm P}{\theta_2} \mergestep \begin{cases}
        \frameof{\Rot{P}{\theta}} & \theta=\theta_1 \pm \theta_2 ~\text{is a} \\
                            & \text{multiple of}~\tfrac{\pi}{2} \\
        \Rot{P}{\theta} &\text{otherwise}
    \end{cases} \\
    \begin{aligned}
    \Prep{P_Z}{P_X}; \Prep{\pm P_Z}{P_X'}
            &\mergestep
            \Prep{\pm P_Z}{P_X'} \\
    \Prep{P_Z}{P_X}; \Rot{\pm P_Z}{\theta}
            &\mergestep
            \Prep{P_Z}{P_X}
        \\
    \Meas[c_1]{P}; \Meas[c_2]{(-1)^b P}
            &\mergestep
            \Meas[c_1]{P}; \cAssign{c_2}{c_1 + b} \\
    \Prep{P_Z}{P_X}; \Meas[c]{(-1)^b P_Z}
            &\mergestep
            \Prep{P}{P_X}; \cAssign{c}{b} \\
    \Rot{\pm P}{\theta}; \Meas[c]{P}
            &\mergestep
            \Meas[c]{P}
    \end{aligned} \\
    F_1; F_2 \mergestep F_2 \circ F_1
    \qquad\qquad
    \mu_1; \mu_2 \mergestep \mu_2 \circ \mu_1
\end{array} \end{align*} }%
    \caption{Rules for merging vertices. When $\theta$ is a multiple of $\frac{\pi}{2}$, $\Rot{P}{\theta}$ is a Clifford unitary.
    }
    \label{fig:MergeVerticesStep}
\end{figure}
\end{ShortVersion}
\begin{ExtendedVersion}
    \begin{figure*}
    \centering
{\small \begin{align}
    \Rot{P}{\theta_1}; \Rot{\pm P}{\theta_2} &\mergestep \begin{cases}
        \frameof{\Rot{P}{\theta}} & \theta=\theta_1 \pm \theta_2 ~\text{is a} \\
                            & \text{multiple of}~\tfrac{\pi}{2} \\
        \Rot{P}{\theta} &\text{otherwise}
    \end{cases}
        \label{eqn:MergeRotRot} \\
    \Prep{P_Z}{P_X}; \Prep{\pm P_Z}{P_X'}
            &\mergestep
            \Prep{\pm P_Z}{P_X'}
        \label{eqn:MergePrepPrep}\\
    \Prep{P_Z}{P_X}; \Rot{\pm P_Z}{\theta}
            &\mergestep
            \Prep{P_Z}{P_X}
        \label{eqn:MergePrepRot} \\
    \Meas[c_1]{P}; \Meas[c_2]{(-1)^b P}
            &\mergestep
            \Meas[c_1]{P}; \cAssign{c_2}{c_1 + b}
        \label{eqn:MergeMeasMeas} \\
    \Prep{P_Z}{P_X}; \Meas[c]{(-1)^b P_Z}
            &\mergestep
            \Prep{P}{P_X}; \cAssign{c}{b}
        \label{eqn:MergePrepMeas} \\
    \Rot{\pm P}{\theta}; \Meas[c]{P}
            &\mergestep
            \Meas[c]{P}
        \label{eqn:MergeRotMeas} \\
    F_1; F_2 &\mergestep F_2 \circ F_1
        \label{eqn:MergeFrameFrame} \\
    \mu_1; \mu_2 &\mergestep \mu_2 \circ \mu_1
        \label{eqn:MergeMuMu}
\end{align} }%
    \caption{Rules for merging vertices. When $\theta$ is a multiple of $\frac{\pi}{2}$, $\Rot{P}{\theta}$ is a Clifford unitary.
    } 
    \label{fig:MergeVerticesStep}
\end{figure*}
\end{ExtendedVersion}

\begin{ExtendedVersion}
\begin{figure}
    \centering
  \begin{align}
    \Rot{P}{\theta_1}; \Rot{P}{\theta_2} &\equiv \Rot{P}{\theta_1 + \theta_2}
        \label{eqn:RotRotEquivRot} \\
    \Rot{-P}{\theta} &\equiv \Rot{P}{-\theta}
        \label{eqn:RotNegEquivRot} \\
    % \Rot{P}{0} &\equiv 1
    %     \label{eqn:RotEquivOne} \\
    \Rot{P}{k\tfrac{\pi}{2}} &\equiv \frameof{\interp{\Rot{P}{k\tfrac{\pi}{2}}}}
        \label{eqn:RotEquivClifford} \\
    \Prep{P_Z}{P_X}; \Rot{P_Z}{\theta}
            &\equiv
            \Prep{P_Z}{P_X}
            \label{eqn:PrepRotEquivPrep}
            \\
    \Rot{P}{\theta}; \Meas[c]{P}
            &\equiv
            \Meas[c]{P}
            \label{eqn:RotMeasEquivMeas}
            \\
    % \Rot{P}{\pi} &\equiv \pauliToFrame{P} \\
    % \Rot{P}{\frac{\pi}{2}} &\equiv \pauliToFrame2{P}
    \Meas[c_1]{P}; \Meas[c_2]{P}
            &\equiv
            \Meas[c_1]{P}; c_2 \gets c_1
            \label{eqn:MeasMeasEquivMeas} \\
    \Meas[c]{-P}
            &\equiv
            \Meas[c]{P}; c \gets \neg c
            \label{eqn:MeasNegEquivMeas} \\
    \Prep{P_Z}{P_X}; \Prep{P_Z}{P_X'}
            &\equiv
            \Prep{P_Z}{P_X}
            \label{eqn:PrepPrepEquivPrep} \\
    \Prep{-P_Z}{P_X} &\equiv \Prep{P_Z}{P_X}; \frameof{P_X}
            \label{eqn:PrepNegEquivPrep} \\
        % \Prep{P_Z}{P_X}
        %     &\equiv
        %     \Prep{P_Z}{-i P_Z P_X}
        %     \label{eqn:PrepEquivPrep1} \\
        % \Prep{P_Z}{P_X}
        %     &\equiv
        %     \Prep{P_Z}{-P_X}
        %     \label{eqn:PrepEquivPrep2} \\
        % \Rot{P_Z}{\theta}; \Prep{P_Z}{P_X}
        %     &\equiv
        %     \Prep{P_Z}{P_X}
        %     \label{eqn:RotPrepEquivPrep} \\
    \Prep{P_Z}{P_X}; \Meas[c]{P_Z}
            &\equiv
            \Prep{P_Z}{P_X}; \cAssign{c}{0}
            \label{eqn:PrepMeasEquivPrep}
    \end{align}
    \caption{Equivalence properties of \PCOAST nodes. In \cref{eqn:RotEquivClifford}, $k \in \mathbb{Z}$ is an integer.
    }
    \label{fig:NodeEquivProperties}
\end{figure}

To show $t_1 \mergestep t_2$ is sound, it is simpler to derive them from the equivalences in \cref{fig:NodeEquivProperties}.

\begin{lemma}
    The equivalence rules in \cref{fig:NodeEquivProperties} are sound.
\end{lemma}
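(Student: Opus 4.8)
The plan is to verify each of the ten equivalences in \cref{fig:NodeEquivProperties} separately, reducing each to a bounded amount of Pauli algebra. The common opening move is that every \PCAST node other than a Pauli frame or a measurement space function acts linearly in its Pauli argument (\cref{lem:PauliNodeSubstitution}), while frames act by conjugation and $\mu$-nodes merely reindex the classical component; hence both sides of any equivalence are completely determined by their action on a single arbitrary Pauli $Q$ together with their effect on classical variables. So for each rule I would fix $Q$, case-split on whether $Q$ commutes or anticommutes with each Pauli appearing in the nodes, and simplify using only $PP=I$ for Hermitian $P$ and the (anti)commutation relations. The operators $\tfrac12(I\pm P)$ then behave as orthogonal projectors, which makes the remaining manipulations mechanical.

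For the purely unitary rules this is immediate. \cref{eqn:RotRotEquivRot} and \cref{eqn:RotNegEquivRot} hold because $\interp{\Rot{P}{\theta}} = \cosBy2{\theta}I + i\sinBy2{\theta}P$ is a commuting one-parameter family in $P$, so $\interp{\Rot{P}{\theta_1}}\,\interp{\Rot{P}{\theta_2}} = \interp{\Rot{P}{\theta_1+\theta_2}}$ and $\interp{\Rot{-P}{\theta}} = \interp{\Rot{P}{-\theta}}$, and conjugation by equal unitaries yields equal Pauli maps. For \cref{eqn:RotEquivClifford} I would first check that $U := \interp{\Rot{P}{k\pi/2}}$ is a Clifford: conjugating $Q$ by $U$ fixes $Q$ when $\commute{P}{Q}$, and equals $Q(U^\dagger)^2 = Q\bigl(\cos(k\pi/2)I - i\sin(k\pi/2)P\bigr)$ when $\anticommute{P}{Q}$, which for integer $k$ is either $\pm Q$ or $\mp i QP$, the latter a Hermitian Pauli by \cref{lem:HermitianPauliFacts}. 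Hence $\frameof{U}$ is a well-formed frame with $\unitaryof{\frameof{U}} = U$ up to a global phase, which cancels in $\interpM{\frameof{U}}(Q) = \unitaryof{\frameof{U}}\,Q\,(\unitaryof{\frameof{U}})^\dagger = UQU^\dagger = \interpM{\Rot{P}{k\pi/2}}(Q)$.

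For the measurement rules \cref{eqn:RotMeasEquivMeas}, \cref{eqn:MeasMeasEquivMeas}, and \cref{eqn:MeasNegEquivMeas} I would case on $\commute{P}{Q}$: when $\anticommute{P}{Q}$ both sides send $Q$ to $0$, and when $\commute{P}{Q}$ both sides reduce to $\cAssign{c}{b}\mapsto \tfrac12 Q(I+(-1)^bP)$ after tracking the classical variable. The one point of care is, e.g. for \cref{eqn:MeasMeasEquivMeas}, to note that $QP$ also commutes with $P$ and that the cross terms from the second measurement carry a coefficient $(-1)^{b_1}+(-1)^{b_2}$ that vanishes exactly when $b_1\neq b_2$, so the second outcome deterministically copies the first, matching the assignment $c_2\gets c_1$; \cref{eqn:MeasNegEquivMeas} is the same computation with the sign that realizes the bit flip $c\gets\neg c$.

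The preparation rules \cref{eqn:PrepRotEquivPrep}, \cref{eqn:PrepPrepEquivPrep}, \cref{eqn:PrepNegEquivPrep}, and \cref{eqn:PrepMeasEquivPrep} are where the real work lies, and I expect this to be the main obstacle. In each, $\interpM{\Prep{P_Z}{P_X}}(Q)$ is $0$ unless $\commute{Q}{P_Z}$ and $\commute{Q}{P_X}$, in which case it equals $Q(I+P_Z)$ (using $P_XP_ZP_X=-P_Z$ to collapse the correction term). Feeding $Q(I+P_Z)=Q+QP_Z$ into the second node then requires tracking that multiplication by $P_Z$ \emph{flips} commutativity with any Pauli that anticommutes with $P_Z$ --- in particular with $P_X$, with the $P_X'$ of a second preparation, or with the Pauli defining $\frameof{P_X}$ --- so that exactly one of the two summands survives, leaving $Q(I+P_Z)$ for \cref{eqn:PrepRotEquivPrep} and \cref{eqn:PrepPrepEquivPrep}, $Q(I-P_Z)$ for \cref{eqn:PrepNegEquivPrep}, and $\cAssign{c}{0}\mapsto Q(I+P_Z)$ for \cref{eqn:PrepMeasEquivPrep}; in \cref{eqn:PrepNegEquivPrep} one also uses that $\unitaryof{\frameof{P_X}}$ is the Pauli $P_X$ up to a phase that cancels in $P_X\,Q(I+P_Z)\,P_X = Q(I-P_Z)$. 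None of the individual calculations is deep, but making the case analysis exhaustive and getting every sign right across all ten rules is the bulk of the effort.
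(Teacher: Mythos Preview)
Your strategy is sound and will go through, but it diverges from the paper's computations in the composite preparation and measurement cases. For \cref{eqn:PrepPrepEquivPrep}, \cref{eqn:MeasMeasEquivMeas}, \cref{eqn:PrepNegEquivPrep}, and \cref{eqn:PrepMeasEquivPrep} the paper does \emph{not} decompose the intermediate state into Pauli summands $Q+QP_Z$ and chase commutativity; instead it invokes \cref{lem:PauliNodeSubstitution} to write each node in its Kraus/projector form and manipulates operator products such as $v_{b_1,b_2}=(P_X')^{b_2}(I\pm^{b_2}P_Z)(P_X)^{b_1}(I\pm^{b_1}P_Z)$ or $(I\pm^{b_2}P)(I\pm^{b_1}P)$ directly, using $(I+P_Z)(I-P_Z)=0$ and $(I\pm P_Z)^2=2(I\pm P_Z)$ to kill unwanted branches. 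Your Pauli-summand approach is more elementary (it never leaves the piecewise definitions of \cref{defn:prep,defn:meas}) at the cost of extra case splits on $\commutativity{Q}{P_X'}$; the paper's projector approach handles all $Q$ uniformly but requires the substitution lemma. Either route is fine.

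One wording issue to fix: the claim that ``exactly one of the two summands survives'' is the mechanism only for \cref{eqn:PrepPrepEquivPrep}. For \cref{eqn:PrepRotEquivPrep} the second node is $\Rot{P_Z}{\theta}$, which commutes with \emph{both} summands $Q$ and $QP_Z$, so both survive unchanged and you recover $Q(I+P_Z)$ trivially. For \cref{eqn:PrepMeasEquivPrep} the measurement is again in $P_Z$, so both summands commute with it and both contribute; what actually vanishes is the entire $b{=}1$ branch, via $(I+P_Z)(I-P_Z)=0$ after you factor $Q+QP_Z=Q(I+P_Z)$. These are trivial repairs, but as written your paragraph misattributes the cancellation.
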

\begin{proof}
    \proofpart[\cref{eqn:RotRotEquivRot,eqn:RotNegEquivRot}] Follows from the unitary definition of $\interp{\Rot{P}{\theta}}$. \proofpart[\cref{eqn:RotEquivClifford}] It suffices to check that $\interp{\Rot{P}{k\tfrac{\pi}{2}}}$ is indeed a Clifford for $k \in \mathbb{Z}$.

    \proofpart[\cref{eqn:PrepRotEquivPrep}]
        To prove $\Prep{P_Z}{P_X}; \Rot{P_Z}{\theta} \equiv \Prep{P_Z}{P_X}$, it suffices to see that
        $\interpM{\Rot{P_Z}{\theta}}$  is the identity on $(Q (I + P_Z))$ when $\commutativity{Q}{P_Z}=0$.
        
    \proofpart[\cref{eqn:RotMeasEquivMeas}] To see that $\Rot{P}{\theta}; \Meas[c]{P} \equiv \Meas[c]{P}$, since $\interpM{\Rot{P}{\theta}}$ is $0$ when $\commutativity{Q}{P}=1$ and $Q$ otherwise; it suffices to observe that $\interpM{\Meas[c]{P}}(Q)$ is 0 when $\commutativity{Q}{P}=1$.
    
    \proofpart[\cref{eqn:MeasMeasEquivMeas}] 
    { \begin{align}
        &\interpM{\Meas[c_2]{P}}\left( \interpM{\Meas[c_1]{P}} (Q) \right) \\
        &\equiv \cAssign{c_1}{b_1} \mapsto
            \interpM{\Meas[c_2]{P}}\left(\frac{1}{4} (I \pm^{b_1} P) Q (I \pm^{b_1} P) \right)
            \nonumber \\
        &\equiv 
            \begin{aligned}[t]
            &\cAssign{c_1}{b_1}, \cAssign{c_2}{b_2} \mapsto \\
            &\frac{1}{16} (I \pm^{b_2} P) (I \pm^{b_1} P) Q (I \pm^{b_1} P) (I \pm^{b_2} P)
            \end{aligned}
    \end{align} }%
    by \cref{lem:PauliNodeSubstitution}. When $b_1 \neq b_2$, this term reduces to $0$, so we can ignore those measurement outcomes. On the other hand, when $b_1 = b_2$ the term becomes
    \begin{align}
        \cAssign{c_1}{b}, \cAssign{c_2}{b} \mapsto
        \frac{1}{4} (I \pm^{b} P) Q (I \pm^{b} P)
    \end{align}
    which is exactly the semantics of $\Meas[c_1]{P}; \cAssign{c_2}{c_1}$.
    
    \proofpart[\cref{eqn:MeasNegEquivMeas}]
    \begin{align}
        \interpM{\Meas[c]{-P}}(Q)
        &= \cAssign{c}{b} \mapsto \frac{1}{4} (I \mp^{b} P) Q (I \mp^{b} P) \\
        &= \cAssign{c}{b} \mapsto \frac{1}{4} (I \pm^{\neg b} P) Q (I \pm^{\neg b} P) 
    \end{align}
    
    \proofpart[\cref{eqn:PrepPrepEquivPrep}]
        Expanding out \cref{lem:PauliNodeSubstitution},
        \begin{align} \label{eqn:internalPrepPrepEquivPrep}
            &\interpM{\Prep{P_Z}{P_X'}}\left( \interpM{\Prep{P_Z}{P_X}} (Q) \right) \nonumber \\
            &= \frac{1}{16} \sum_{b_1,b_2 \in \{0,1\}} \conjugate*{v_{b_1,b_2}}(Q)
        \end{align}
        where
        \begin{align} 
            v_{b_1,b_2} &= (P_X')^{b_2} (I \pm^{b_2} P_Z) (P_X)^{b_1} (I \pm^{b_1} P_Z).
        \end{align}
        Expanding out the definition of $v_{b_1,b_2}$, we see that
        \begin{align}
            v_{0,0} &= (I + P_Z) (I + P_Z) = 2(I + P_Z) \\
            v_{0,1} &= P_X' (I - P_Z) (I + P_Z) = 0 \\
            v_{1,0} &= (I + P_Z) P_X (I - P_Z) = 2 P_X (I - P_Z) \\
            v_{1,1} &= (P_X') (I - P_Z) P_X (I - P_Z) = 0
        \end{align}
        Therefore, \cref{eqn:internalPrepPrepEquivPrep} can be rewritten as
        \begin{align}
            &\frac{1}{4} (I + P_Z) Q (I + P_Z) + \frac{1}{4} P_X (I - P_Z) Q (I - P_Z) P_X  \\
            &= \interpM{\Prep{P_Z}{P_X}}(Q).
        \end{align}

        % \begin{align}
        %     &=
        %         \frac{1}{4} (I + P_Z) \left(  \interpM{\Prep{P_Z}{P_X}} (Q) \right) (I + P_Z) \nonumber \\
        %         &+ \frac{1}{4} P_X' (I + P_Z) \left( \interpM{\Prep{P_Z}{P_X}} (Q) \right) (I - P_Z) P_X'
        %         \\
        %     &= \frac{1}{16} (I + P_Z) (I + P_Z) Q (I + P_Z) (I + P_Z) \nonumber \\
        %     &+ \frac{1}{16} (I + P_Z) P_X (I - P_Z) Q (I - P_Z) P_X (I + P_Z) \nonumber \\
        %     &+ \frac{1}{16} P_X' (I - P_Z) (I + P_Z) Q (I + P_Z) (I - P_Z) P_X' \nonumber \\
        %     &+ \frac{1}{16} P_X' (I - P_Z) P_X (I - P_Z) Q (I - P_Z) P_X (I - P_Z) P_X'
        %         \label{eqn:internalPrepPrepEquivPrep}
        % \end{align}
        % Now, the third and forth terms in that sum reduce to $0$, and furthermore $(I \pm P_Z)(I \pm P_Z) = 2 (I \pm P_Z)$, so overall \cref{eqn:internalPrepPrepEquivPrep} is equal to
        % \begin{align}
        %     \frac{4}{16} (I + P_Z) Q (I + P_Z)
        %     + \frac{4}{16} P_X (I - P_Z) Q (I - P_Z) P_X
        % \end{align}
        % which is exactly $\interpM{\Prep{P_Z}{P_X}}(Q)$.
        
    \proofpart[\cref{eqn:PrepNegEquivPrep}]
        Expanding out definitions,
        {\small \begin{align}
            &\interpM{\Prep{-P_Z}{P_X}}(Q) \nonumber \\
            &= \frac{1}{4} (I - P_Z) Q (I - P_Z) + \frac{1}{4} P_X (I + P_Z) Q (I + P_Z) P_X \\
            &= P_X \left( \frac{1}{4} P_X (I - P_Z) Q (I - P_Z) P_X + \frac{1}{4} (I + P_Z) Q (I + P_Z)\right) P_X \\
            &= \interpM{\frameof{P_X}}\left( \interpM{\Prep{P_Z}{P_X}}(Q) \right)
        \end{align} }%
    
    \proofpart[\cref{eqn:PrepMeasEquivPrep}]
    Observe that
    \begin{align}
        &\interpM{\Meas[c]{P_Z}}\left( \interpM{\Prep{P_Z}{P_X}}(Q) \right) \nonumber \\
        &= \begin{cases}
            \interpM{\Meas[c]{P_Z}}((I + P_Z) Q)
                & \commutativity{P_Z}{Q}=\commutativity{P_X}{Q} = 0 \\
            0   & \text{otherwise}
        \end{cases} 
    \end{align}
    Therefore it suffices to show that 
    \begin{align}
        &\interpM{\Meas[c]{P_Z}}((I + P_Z) Q) \nonumber \\
        &= \cAssign{c}{b} \mapsto
            \tfrac{1}{4} (I \pm^b P_Z) (I + P_Z) Q (I + P_Z) (I \pm^b P_Z) \\
        &= \cAssign{c}{b} \mapsto 
        \begin{cases}
            (I + P_Z) Q & b=0 \\
            0 & b=1
        \end{cases} \\
        &= \interpM{\cAssign{c}{0}; \Prep{P_Z}{P_X}}(Q) \qedhere
    \end{align}
\end{proof}
\end{ExtendedVersion}

\begin{lemma} \label{lem:mergestep}
    If $t_1 \mergestep t_2$
    then $t_1 \equiv t_2$.
\end{lemma}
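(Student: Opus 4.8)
The plan is to argue by case analysis on which rule of \cref{fig:MergeVerticesStep} witnesses the step $t_1 \mergestep t_2$; in each case $t_1 = n_1; n_2$ for two specific node shapes, and the goal is to reduce the claim to the node-level equivalences already collected and proved sound in \cref{fig:NodeEquivProperties}, together with the corollary $F_1; F_2 \equiv F_2 \circ F_1$ of \cref{lem:FrameCompositionExpanded} for the frame rule. Since $\equiv$ abbreviates the stronger $\equiv^{\hold}$ and every rule of \cref{fig:NodeEquivProperties} is itself a hold equivalence, this suffices; observe also that the side condition ``no path between $n_1$ and $n_2$'' is irrelevant to soundness and matters only for applicability of the merge as a graph rewrite, so it can be ignored here.

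For the two unitary merges: $F_1; F_2 \mergestep F_2 \circ F_1$ is precisely the cited corollary of \cref{lem:FrameCompositionExpanded}, and $\mu_1;\mu_2 \mergestep \mu_2 \circ \mu_1$ follows by unfolding the preimage-sum definition of measurement-space functions to see $\interpM{\mu_2 \circ \mu_1} = \interpM{\mu_2} \circ \interpM{\mu_1}$. For $\Rot{P}{\theta_1};\Rot{\pm P}{\theta_2}$, first absorb the sign using \cref{eqn:RotNegEquivRot} and then combine angles using \cref{eqn:RotRotEquivRot}, obtaining $\Rot{P}{\theta}$ with $\theta = \theta_1 \pm \theta_2$; when $\theta$ is a multiple of $\tfrac{\pi}{2}$, apply \cref{eqn:RotEquivClifford} to rewrite this rotation as $\frameof{\interp{\Rot{P}{\theta}}}$, which is legitimate once one checks that $\interp{\Rot{P}{k\pi/2}}$ is a Clifford unitary for integer $k$.

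For the preparation and measurement merges the recipe is uniform: strip any $\pm$ or $(-1)^b$ sign using \cref{eqn:RotNegEquivRot}, \cref{eqn:MeasNegEquivMeas}, or \cref{eqn:PrepNegEquivPrep} (the last of which inserts a $\frameof{P_X}$ factor on both sides that then cancels), and then apply the matching equivalence---\cref{eqn:PrepPrepEquivPrep} for $\Prep{P_Z}{P_X};\Prep{\pm P_Z}{P_X'}$, \cref{eqn:PrepRotEquivPrep} for $\Prep{P_Z}{P_X};\Rot{\pm P_Z}{\theta}$, \cref{eqn:RotMeasEquivMeas} for $\Rot{\pm P}{\theta};\Meas[c]{P}$, \cref{eqn:MeasMeasEquivMeas} for $\Meas[c_1]{P};\Meas[c_2]{(-1)^b P}$, and \cref{eqn:PrepMeasEquivPrep} for $\Prep{P_Z}{P_X};\Meas[c]{(-1)^b P_Z}$. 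In the last two cases the $b=1$ subcase additionally uses \cref{eqn:MeasNegEquivMeas} to turn a trailing assignment $\cAssign{c}{c'}$ into $\cAssign{c}{c'+1}$, which is what matches the $\cAssign{c_2}{c_1+b}$ and $\cAssign{c}{b}$ on the right-hand sides of the respective merge rules.

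I do not expect a substantive obstacle once \cref{fig:NodeEquivProperties} is available; the work is careful bookkeeping of signs and of the classical-variable assignments introduced in the measurement cases, plus confirming that those assignments produce exactly the same cq-state as the composite node sequence. The one point that needs a touch of care is the rotation-to-frame subcase above, where $\interp{\Rot{P}{k\pi/2}}$ must be verified to be Clifford so that $\frameof{\interp{\Rot{P}{k\pi/2}}}$ is a well-formed Pauli frame and \cref{eqn:RotEquivClifford} applies.
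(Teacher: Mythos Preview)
Your proposal is correct and takes essentially the same approach as the paper: a case analysis on the merge rules, reducing each to the pre-established node-level equivalences in \cref{fig:NodeEquivProperties}, with \cref{lem:FrameCompositionExpanded} handling the frame case and the definition of $\interpM{\mu}$ handling the measurement-space-function case. The paper's own proof is a one-line remark to exactly this effect; your version simply spells out which equivalence handles which rule and how the $\pm$ signs are absorbed, which is useful bookkeeping but not a different argument.
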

\begin{ExtendedVersion}
    \begin{proof}
        % \cref{eqn:MergeRotRot} follows from \cref{eqn:RotRotEquivRot,eqn:RotNegEquivRot,eqn:RotEquivClifford}.

        % \cref{eqn:MergePrepPrep} follows from \cref{eqn:PrepPrepEquivPrep,eqn:PrepNegEquivPrep}.

        % \cref{eqn:MergePrepRot} follows from \cref{eqn:PrepRotEquivPrep,eqn:RotNegEquivRot}.

        % \cref{eqn:MergeMeasMeas} follows from \cref{eqn:MeasMeasEquivMeas,eqn:MeasNegEquivMeas}.

        % \cref{eqn:MergePrepMeas} follows from \cref{eqn:PrepMeasEquivPrep,eqn:MeasNegEquivMeas}.

        % \cref{eqn:MergeRotMeas} follows from \cref{eqn:RotNegEquivRot,eqn:RotMeasEquivMeas}.

        Most of the rules follow directly from \cref{fig:NodeEquivProperties}, but \cref{eqn:MergeFrameFrame} follows from \cref{lem:FrameCompositionExpanded}, and \cref{eqn:MergeMuMu} follows from definition.
    \end{proof}
\end{ExtendedVersion}

\begin{ExtendedVersion}
  \subsection{Two-axis Rotations} \label{sec:twoaxis}
    
In the implementation we extend \PCAST nodes with two-axis rotations: $\Rot2{P_1}{P_2}[\theta][\phi]$, a generalization of a single-qubit gate used in NISQ machines.

\begin{definition} \label{defn:twoaxis}
    For two anticommuting Hermitian Paulis $P_1$ and $P_2$, a two-axis Pauli rotation
    $\Rot2{P_1}{P_2}[\theta][\phi]$ is a rotation of a Pauli of the form
    $\cos{\phi}P_1 + \sin{\phi} P_2$:
    {\small \begin{align}
        &\interpM{\Rot2{P_1}{P_2}[\theta][\phi]}(Q) \\
        &= \interp{\Rot2{P_1}{P_2}[\theta][\phi]} Q \interp{\Rot2{P_1}{P_2}[-\theta][\phi]} \nonumber \\
        &\interp{\Rot2{P_1}{P_2}[\theta][\phi]}
        = \cos(\tfrac{\theta}{2}) I + i \sin(\tfrac{\theta}{2}) (\cos{\phi} P_1 + \sin{\phi} P_2)
    \end{align} }%
\end{definition}

We extend the commutativity relation, $\fwdAction{F}(n)$, and the merge relation to account for two-axis rotations in \cref{fig:Rot2EquivProperties}.

\begin{figure*}
    \centering
\begin{align*}
    \inferrule*
        {\commute{P_1}{n} \\ \commute{P_2}{n}}
        {\commute{\Rot2{P_1}{P_2}[\theta][\phi]}{n}}
    \qquad
    \inferrule*
        {\commute{Q}{P_1} \\ \commute{Q}{P_2}}
        {\commute{Q}{\Rot2{P_1}{P_2}[\theta][\phi]}}
\end{align*}
\begin{align}
    \fwdAction{F}(\Rot2{P_1}{P_2}[\theta][\phi])
    &= \Rot2{\fwdAction{F}(P_1)}{\fwdAction{F}(P_2)}[\theta][\phi] \\
    \Prep{P_Z}{P_X}; \Rot2{\pm P_Z}{P'}[\theta][\phi]
        &\mergestep \Prep{P_Z}{P_X} \\
    \Prep{P_Z}{P_X}; \Rot2{P'}{\pm P_Z}[\theta][\phi]
        &\mergestep \Prep{P_Z}{P_X} \\
    % \Rot2{P_Z}{P_X}[\theta_1][\phi]; \Rot2{P_Z}{P_X}[\theta_2][\phi]
    %     &\mergestep \begin{cases}
    %         1 & \theta_1 + \theta_2 = 0 \\
    %         \Rot2{P_Z}{P_X}[\theta_1+\theta_2][\phi] & \text{otherwise}
    %     \end{cases} \\
    \Rot2{P_Z}{P_X}[\theta_1][\phi]; \Rot2{P_Z}{\pm P_X}[\theta_2][\pm \theta]
        &\mergestep \Rot2{P_Z}{P_X}[\theta_1 + \theta_2][\phi] \\
    \Rot2{P_Z}{P_X}[\theta_1][\phi]; \Rot2{P_Z}{\pm P_X}[\theta_2][\pm(\pi - \phi)]
        &\mergestep \Rot2{P_Z}{P_X}[\theta_1 + \theta_2][\phi] 
\end{align}
    \caption{Properties of two-axis rotations}
    \label{fig:Rot2EquivProperties}
\end{figure*}

\begin{comment}
\begin{align}
    F ; \Rot2{P_1}{P_2}[\theta][\phi]
        &\equiv
        \Rot2{F^{-1}(P_1)}{F^{-1}(P_2)}[\theta][\phi]; F
    \\
    \Rot2{P}{P_2}[\theta][\phi]; \Meas[c]{P}
            &\equiv
            \Meas[c]{P}
            \label{eqn:Rot2MeasEquiv}
            % Albert: no general rule I know of...
    \\
    F ; \Rot2{P_1}{P_2}[\theta][\phi]
        &\framestep
        \Rot2{F^{-1}(P_1)}{F^{-1}(P_2)}[\theta][\phi]; F
    \\
\end{align}

\begin{align} \begin{array}{c}
    \inferrule*
        {
            P = \pm P_1
            \albert{\text{No general rule that I know of}}
        }
        {
            \Rot2{P_1}{P_2}[\theta][\phi]; \Meas[c]{P}
            \mergestep
            \Meas[c]{P}
        }
    \\ \\
    \inferrule*
        { 
            P_1' = \alpha_1 P_1 \\
            P_2' = \alpha_2 P_2 \\
            \alpha_1,\alpha_2 \in \{1,-1\} \\
            \phi = \alpha_2 \tfrac{\pi}{2} (1-\alpha_1 + \alpha_1) \phi'
        }
        { \Rot2{P_1}{P_2}[\theta][\phi] ; \Rot2{P_1'}{P_2'}[\theta'][\phi']
            \mergestep
            \Rot2{P_1}{P_2}[\theta + \theta'][\phi]
        }
    \\ \\
\end{array} \end{align}
\end{comment}

\begin{lemma}
    \cref{thm:PoPRNodeCommuteCorrect}, \cref{lem:extendedFrameTerminal}, and \cref{lem:mergestep} still hold with the addition of rules from \cref{fig:Rot2EquivProperties}.
\end{lemma}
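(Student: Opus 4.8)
The plan is to extend each of the three statements by a single additional case, for the two-axis rotation node $\Rot2{P_1}{P_2}[\theta][\phi]$, reusing the existing proof for ordinary rotations. The key observation is that $\interp{\Rot2{P_1}{P_2}[\theta][\phi]} = \cosBy2{\theta}\,I + i\sinBy2{\theta}\,(\cos\phi\, P_1 + \sin\phi\, P_2)$ is again a unitary Pauli vector of exactly the same algebraic shape as $\interp{\Rot{P}{\theta}}$: its generator $G = \cos\phi\, P_1 + \sin\phi\, P_2$ satisfies $G^2 = I$ precisely because $\anticommute{P_1}{P_2}$ (\cref{defn:twoaxis}), so its conjugation action is a monoidal Pauli map by \cref{lem:conjugateMonoidal}. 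I would first record the two-axis analogue of \cref{lem:unitaryNodeCommutePauli}: if $\commute{Q}{\Rot2{P_1}{P_2}[\theta][\phi]}$, i.e.\ $\commute{Q}{P_1}$ and $\commute{Q}{P_2}$, then $Q$ commutes with $G$ and hence $\interpM{\Rot2{P_1}{P_2}[\theta][\phi]}(Q) = Q$; dually, if a unitary node $n$ commutes with both $P_1$ and $P_2$ then $\interpM{n}$ fixes $G$ and therefore fixes $\interp{\Rot2{P_1}{P_2}[\theta][\phi]}$.

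For \cref{thm:PoPRNodeCommuteCorrect} I would add the case $n_1 = \Rot2{P_1}{P_2}[\theta][\phi]$ with $n_2$ unitary, which runs verbatim parallel to the ``$n_1 = \Rot{P}{\theta}$, $n_2$ unitary'' proof part: from $\commute{n_2}{P_1}$ and $\commute{n_2}{P_2}$ we get $\interpM{n_2}(\interp{\Rot2{P_1}{P_2}[\theta][\phi]}) = \interp{\Rot2{P_1}{P_2}[\theta][\phi]}$, and monoidality of $\interpM{n_2}$ then lets it pass through the conjugation. The remaining new cases --- $n_1 \in \{\Prep{P_Z}{P_X}, \Meas[c]{P}\}$ and $n_2 = \Rot2{P_1}{P_2}[\theta][\phi]$ --- are already covered by the existing ``Prep/unitary'' and ``Meas/unitary'' proof parts, once we supply that $\Rot2{P_1}{P_2}[\theta][\phi]$ is a monoidal unitary node fixing each of $I \pm P_Z$, $P_X$, $I \pm P$; this follows from the extended \cref{lem:unitaryNodeCommutePauli}, since $\commute{\Rot2{P_1}{P_2}[\theta][\phi]}{n_1}$ supplies $\commute{P_i}{P_Z}$, $\commute{P_i}{P_X}$, $\commute{P_i}{P}$ for $i = 1,2$.

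For \cref{lem:extendedFrameTerminal} I would add the case $n = \Rot2{P_1}{P_2}[\theta][\phi]$ and prove $F; \Rot2{P_1}{P_2}[\theta][\phi] \equiv \Rot2{\fwdAction{F}(P_1)}{\fwdAction{F}(P_2)}[\theta][\phi]; F$. First, the target node is well formed: $\fwdAction{F}(P_1)$ and $\fwdAction{F}(P_2)$ are Hermitian by \cref{lem:FrameHermitian} and anticommute since $\commutativity{\fwdAction{F}(P_1)}{\fwdAction{F}(P_2)} = \commutativity{P_1}{P_2} = 1$ by \cref{lem:FrameCommutativityPreservation}. Then, exactly as in the rotation proof part, I compute $\interpM{F}(\interp{\Rot2{\fwdAction{F}(P_1)}{\fwdAction{F}(P_2)}[\theta][\phi]}) = \unitaryof{F}(\cosBy2{\theta}I + i\sinBy2{\theta}(\cos\phi\,\fwdAction{F}(P_1) + \sin\phi\,\fwdAction{F}(P_2)))(\unitaryof{F})^\dagger$; using $\fwdAction{F}(P_i) = (\unitaryof{F})^\dagger P_i \unitaryof{F}$ (\cref{lem:FrameUnitary}) and linearity, this collapses to $\interp{\Rot2{P_1}{P_2}[\theta][\phi]}$, and monoidality of $\interpM{F}$ finishes the argument as before.

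For \cref{lem:mergestep} I would handle the new rules of \cref{fig:Rot2EquivProperties} by reducing two-axis rotations to ordinary ones through the decomposition $\Rot2{P_1}{P_2}[\theta][\phi] \equiv \Rot{-i P_2 P_1}{\phi};\, \Rot{P_1}{\theta};\, \Rot{-i P_2 P_1}{-\phi}$, obtained by expanding \cref{defn:twoaxis} and the Pauli-rotation conjugation formula ($-i P_2 P_1$ is a Hermitian Pauli squaring to $I$ because $\anticommute{P_1}{P_2}$, by \cref{lem:HermitianPauliFacts}). For the two-axis/two-axis merges, when the two rotations share an axis pair with matching angle parameter (after normalizing the $\pm$ signs on the Pauli arguments to shifts of $\phi$ by comparing generators), the outer conjugators cancel and the inner $\Rot{P_Z}{\theta_1}; \Rot{P_Z}{\theta_2}$ merge, both via \cref{eqn:RotRotEquivRot}. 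For the preparation/two-axis merges, after the decomposition each factor acts as the identity on the image of $\Prep{P_Z}{P_X}$ --- equivalently, one checks directly that the conjugation action of $\interp{\Rot2{P_1}{P_2}[\theta][\phi]}$ fixes every operator $(I + P_Z)Q$ in that image, which amounts to showing the generator commutes with $(I+P_Z)Q$. I expect this last family, together with the sign-and-angle bookkeeping relating $\Rot2{-P_1}{P_2}$ and $\Rot2{P_1}{-P_2}$ to shifts of $\phi$, to be the main obstacle: it is where the trigonometry is genuinely used and where the interplay between the two anticommuting rotation axes and the prepared eigen-subspace must be tracked with care (and, where necessary, pinned down by the side conditions on the Pauli arguments).
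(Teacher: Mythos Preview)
Your overall strategy is the same as the paper's in all three parts: treat $\Rot2{P_1}{P_2}[\theta][\phi]$ as an additional unitary node (so that the existing unitary-node cases of \cref{thm:PoPRNodeCommuteCorrect} apply once you record the two-axis analogue of \cref{lem:unitaryNodeCommutePauli}), push it through a frame by checking $\interpM{F}\bigl(\interp{\Rot2{\fwdAction{F}(P_1)}{\fwdAction{F}(P_2)}[\theta][\phi]}\bigr)=\interp{\Rot2{P_1}{P_2}[\theta][\phi]}$, and reduce the new merge rules to the decomposition $\Rot2{P_1}{P_2}[\theta][\phi]\equiv\Rot{-iP_2P_1}{\phi};\Rot{P_1}{\theta};\Rot{-iP_2P_1}{-\phi}$ together with the sign/swap identities. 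The paper's proof of the \cref{lem:mergestep} extension is no more detailed than this: it simply lists those same identities and asserts the rules ``can be derived'' from them.

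There is, however, a concrete gap in your sketch of the preparation/two-axis merges. You claim that after decomposing $\Rot2{\pm P_Z}{P'}[\theta][\phi]$, ``each factor acts as the identity on the image of $\Prep{P_Z}{P_X}$,'' or equivalently that the generator $G=\cos\phi(\pm P_Z)+\sin\phi\,P'$ commutes with $(I+P_Z)Q$. Neither holds: since $\anticommute{P'}{P_Z}$ by the well-formedness of the two-axis node, one has $P'(I+P_Z)=(I-P_Z)P'$, so $[G,(I+P_Z)Q]=\sin\phi\bigl((I-P_Z)P'Q-(I+P_Z)QP'\bigr)\neq 0$ for generic $\phi$. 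Equivalently, the outer conjugator axis $-iP'P_Z$ anticommutes with $P_Z$ (compute $(-iP'P_Z)P_Z=-iP'$ versus $P_Z(-iP'P_Z)=iP'$), so $\Rot{-iP'P_Z}{\pm\phi}$ is \emph{not} absorbed by the preparation via \cref{eqn:PrepRotEquivPrep}; only the middle factor $\Rot{\pm P_Z}{\theta}$ is. You were right to flag this family as the main obstacle, but the mechanism you propose --- factorwise absorption --- cannot work here. Any correct derivation must instead exploit how the two outer conjugators interact with \emph{both} Paulis of the preparation (and possibly modify its $P_X$ component), not treat them as individually trivial on the image; the paper does not spell this out either.
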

\begin{proof}

    \proofpart[\cref{thm:PoPRNodeCommuteCorrect}] To show the commutativity rules in \cref{fig:Rot2EquivProperties} are sound, we must show that if $\commute{\Rot2{P_1}{P_2}[\theta][\phi]}{n}$ then $\Rot2{P_1}{P_2}[\theta][\phi]; n \equiv n; \Rot2{P_1}{P_2}[\theta][\phi]$. We start by observing that two-axis rotations satisfy \cref{lem:unitaryNodeCommutePauli}: namely, that if $\commute{Q}{\Rot2{P_1}{P_2}[\theta][\phi]}$ then $\interpM{\Rot2{P_1}{P_2}[\theta][\phi]}(Q)=Q$. Then, the proof of \cref{thm:PoPRNodeCommuteCorrect} automatically extends to two-axis rotations as instances of unitary nodes. It suffices to show the property holds when $n$ is unitary, in which case
    {\small \begin{align} \label{eqn:InternalTwoPoPRNodeCommuteCorrect}
        \interpM{n} \left( \interpM{\Rot2{P_1}{P_2}[\theta][\phi]}(Q) \right)
        = &\interpM{n} \left( \interp{\Rot2{P_1}{P_2}[\theta][\phi]} \right) \nonumber \\
        \cdot &\interpM{n} \left( Q \right) \\
        \cdot &\interpM{n} \left( \interp{\Rot2{P_1}{P_2}[-\theta][\phi]} \right) \nonumber 
    \end{align} }%
    We know that $\commute{n}{P_1}$ and $\commute{n}{P_2}$, and so by \cref{lem:unitaryNodeCommutePauli},
    \begin{align}
        \interpM{n} \left( \interp{\Rot2{P_1}{P_2}[\theta][\phi]} \right) = \interp{\Rot2{P_1}{P_2}[\theta][\phi]}.
    \end{align}
    Thus \cref{eqn:InternalTwoPoPRNodeCommuteCorrect} is equal to
    $
        \interpM{\Rot2{P_1}{P_2}[-\theta][\phi]}\left( \interpM{n}(Q) \right)
    $
    as desired.

    \proofpart[\cref{lem:extendedFrameTerminal}] Our goal is to show that
    {\small \begin{align}
        F; \Rot2{P_1}{P_2}[-\theta][\phi] &\equiv \Rot2{\fwdAction{F}(P_1)}{\fwdAction{F}(P_2)}[-\theta][\phi]; F.
    \end{align} }%
    Expanding out definitions, it suffices to observe that
    \begin{align}
        \interpM{F}\left( \interp{\Rot2{\fwdAction{F}(P_1)}{\fwdAction{F}(P_2)}[-\theta][\phi]} \right)
        = \interp{\Rot2{P_1}{P_2}[-\theta][\phi]}.
    \end{align}

    \proofpart[\cref{lem:mergestep}] Finally, we need to prove that if $t_1 \mergestep t_2$ then $t_1 \equiv t_2$ for the rules in \cref{fig:Rot2EquivProperties}. These can be derived from the set of following properties, all of which can be easily checked:
    \begin{align}
    \Rot2{P_1}{P_2}[\theta][\phi] \equiv~
        &\Rot{-i P_2 P_1}{\phi};
        \Rot{P_1}{\theta} ; \notag \\
    & \Rot{-i P_2 P_1}{-\phi} \\
    %(\Rot2{P_1}{P_2}[\theta][\phi])^\dagger
    %    &\equiv
    %    \Rot2{P_1}{P_2}[-\theta][\phi]
    %    \\
    \Rot2{P_1}{P_2}[\theta][\phi] \equiv~ &\Rot2{P_2}{P_1}[\theta][\tfrac{\pi}{2}-\phi] \\
    \Rot2{P_1}{-P_2}[\theta][\phi]
        \equiv~
        &\Rot2{P_1}{P_2}[\theta][-\phi] \\
    \Rot2{P_1}{P_2}[0][\phi] &\equiv 1 \\
    \Rot2{P_1}{P_2}[\theta][0] &\equiv \Rot{P_1}{\theta} \\
    \Rot2{P_1}{-P_2}[\theta][\phi]
        &\equiv
        \Rot2{P_1}{P_2}[\theta][-\phi]
        \\
    \Rot2{-P_1}{P_2}[\theta][\phi]
        &\equiv
        \Rot2{P_1}{P_2}[\theta][\pi-\phi]
\end{align}
\end{proof}

\end{ExtendedVersion}

\section{The \PCAST Optimization}
\label{sec:optimization}
The \PCAST optimization, illustrated in \cref{fig:optimizationflow}, 
starts with a circuit, compiles it to a \PCAST graph, optimizes that graph, and then synthesizes a circuit again.
To maintain the frame- and measurement-space-function-terminal invariants, we separate out the terminal frame $F$ and measurement space function $\mu$ from the rest of the graph $G$.
%The graph $G^{\text{orig}}$ constructed directly from the original circuit will not have an accompanying measurement space function, since measurement space functions are only introduced via a release outcome in the optimization stage.
%If introduced, the synthesized output circuit will include the measurement space function to enable classical rewriting of measurement variables.

\begin{figure}
    \centering
    \begin{adjustbox}{width=\columnwidth}% there is also 'max width' to only scale it down if it is larger
    \begin{tikzpicture} 
      \node (Corig) at (0,0) {
        \begin{tikzcd}[row sep=0.1cm]
            \startingpoprgate{C^{\text{orig}}}
        \end{tikzcd}
      };

      \node (Gorig) at (3,0) {
        \begin{tikzcd}[row sep=0.1cm]
          \pcastgraph{3}{2} & \startingpoprgate{F^{\text{orig}}} %3
          \\
          \startingpoprgate{G^{\text{orig}}} \arrow[ru] \arrow[rd] & %3
          \\
          & \startingpoprgate{\mu^{\text{orig}}}
        \end{tikzcd}
      };

      \node (Gopt) at (7,0) {
        \begin{tikzcd}[row sep=0.1cm]
          \pcastgraph{3}{2} & \startingpoprgate{F^{\text{opt}}} %2
          \\
          \startingpoprgate{G^{\text{opt}}} \arrow[ru] \arrow[rd] & %2
          \\
          & \startingpoprgate{\mu^{\text{opt}}}
      \end{tikzcd}
      };

      \node (Copt) at (10,0) {
        \begin{tikzcd}[row sep=0.1cm]
          \startingpoprgate{C^{\text{opt}}}
                \pcastgraph{2}{1}
          \\
          \startingpoprgate{\mu^{\text{opt}}}
      \end{tikzcd}
      };
      
        \path[->] (Corig) edge node[above] {A} (Gorig);
        \path[->] (Gorig) edge node[above] {B} (Gopt);
        \path[->] (Gopt)  edge node[above] {C} (Copt);
    \end{tikzpicture}
    \end{adjustbox}
    \caption{Flow for constructing a \PCAST graph from a circuit, starting with the circuit on the left $C^{\text{orig}}$ and ending with the optimized circuit on the right $C^{\text{opt}}$. The steps correspond to Sections~\ref{sec:toGraph}, \ref{sec:opts}, and \ref{sec:synthesis}).}
    \label{fig:optimizationflow}
\end{figure}

\subsection{Compiling circuits to \PCAST graphs}
\label{sec:toGraph}

The procedure $\circuitToPoPR{C}$ (\cref{alg:circToPoPR}) produces a \PCAST graph $G$, a terminating Pauli frame $F$, and a terminating measurement space function $\mu$ from a circuit $C$ by moving gates into $G$ or $F$ respectively using a helper function $\addNode{G}{n}$ (\cref{fig:addNode}).
The algorithm maintains the loop invariant that $\termof{C} \equiv g; F; \mu; \termof{C'}$.

\begin{comment}
\begin{figure}
\begin{algorithmic}[1]
    \Procedure {$\circuitToPoPR{C}$}{}
        \State $G \gets \emptyset$; $F \gets \frameof{I}$; $C' \gets C$
        \While{$C' = g; C''$}
            \If {$g$ is a Clifford gate}
                \State $F \gets \frameof{g} \circ F$
            \Else
                \State $(G, F') \gets \addTerm{G}{\fwdAction{F}(\termof{g})}$
                \State $F \gets F \circ F'$
            \EndIf
            \State $C' \gets C''$
        \EndWhile
    \State \Return $(G, F)$
    \EndProcedure
\end{algorithmic}
    \caption{The function $\circuitToPoPR{C}$ takes as input a circuit $C$ and returns a \PCAST graph $G$ and a Pauli frame $F$ such that $\termof{G}; F \equiv \termof{C}$.  We assume that every gate in $C$ can be written in a straightfoward way as a \PCAST term $\termof{g}$, or, if it is a Clifford gate, as a Pauli frame $\frameof{g}$.}
    \label{alg:circToPoPR}
\end{figure}
\end{comment}

\begin{figure}
\begin{algorithmic}[1]
    \Procedure {$\circuitToPoPR{C}$}{}
        \State $G \gets \emptyset$; $F \gets \frameof{I}$; $\mu \gets \mu^0$
        \For {each $g$ in $C$}
            \For {each $n$ in $\termof{g}$}
                \If {$n=\Meas[c]{P}$}
                        \State $n \gets \Meas[c']{P}$; $\mu \gets \mu[c \mapsto c']$
                            \Comment{$c'$ fresh}
                \EndIf
                \If {$n=\mu'$} $\mu \gets \mu' \circ \mu$
                \ElsIf {$n=F'$} $F \gets F' \circ F$
                \Else
                    \State $(G', F', \mu') \gets \addNode{G}{\fwdAction{F}(n)}$
                    \State $G \gets G'; F \gets F \circ F'; \mu \gets \mu \circ \mu'$
                \EndIf
            \EndFor
        \EndFor
    \State \Return $(G, F, \mu)$
    \EndProcedure
\end{algorithmic}
    \caption{The function $\circuitToPoPR{C}$ takes as input a circuit $C$ and returns a \PCAST graph $G$, a Pauli frame $F$, and a measurement space function $\mu$ such that $G; F; \mu \equiv \termof{C}$.  We assume that every gate in $C$ can be written in a straightforward way as a \PCAST term $\termof{g}$.}
    \label{alg:circToPoPR}
\end{figure}

\begin{figure}
\begin{algorithmic}[1]
\Procedure {$\ADDNODE$}{$G$, $n$}
    \If {$n=F$} \Return $(G,F,\mu^0)$
    \ElsIf {$n=\mu$} \Return $(G,\frameof{I},\mu)$
    \Else
        \For {each node $n_0$ in $G$ with outdegree $0$}
            \If {$n_0; n \mergestep n_0'$ OR $n_0';\mu_0$}
                \State $G' \gets$ \Call{\REMOVEVERTEX}{$G, n_0$}
                \State $(G'', F, \mu) \gets \addNode{G'}{n_0'}$
                \State \Return $(G'', F, \mu_0 \circ \mu)$
            \EndIf
        \EndFor
        
        \State $G' \gets$ \Call{\ADDVERTEX}{$G, n'$}
        \For {each node $n_0$ in $G'$} 
            \If {$\commute*{n_0}{n'}$}
                $G' \gets $\Call{\ADDEDGE}{$G', n_0, n'$}
            \EndIf
        \EndFor
        \Return $(G', \frameof{I}, \mu^0)$
    \EndIf
\EndProcedure
\end{algorithmic}
    \caption{The procedure $\addNode{G}{n}$, where $G$ is a \PCAST graph and $n$ is a \PCAST node,
returns an updated graph $G'$, a Pauli frame $F$, and a measurement space function $\mu$ such that $G; n \equiv G'; F; \mu$. 
The functions \ADDVERTEX, \REMOVEVERTEX, and \ADDEDGE implement the corresponding simple graph operations.
Note that if $n_0; n \mergestep t$, it is either the case that $t$ is a single node $n_0'$, or is equal to $n_0'; \mu_0$.
}
    \label{fig:addNode}
\end{figure}

\begin{comment}
\begin{figure}
\begin{algorithmic}[1]
\Procedure {$\ADDTERM$}{$G$, $t$}% {$\addTerm{G}{t}$}{}
%    \State \Comment{Returns $(G',F)$ such that $\termof{G'}; F \equiv \termof{G}; t$.}
\State $G' \gets G; t' \gets t; F' \gets \frameof{I}$
\While {$t'=n;t''$}
\State $t' \gets t''$
\If {$n$ is a Pauli frame $F$}
    $F' \gets F \circ F'$
\Else~
    \State $n' \gets \fwdAction{F'}(n)$ \Comment{commute $n$ past $F'$}
    \For {each node $n_0$ in $G'$ with outdegree $0$}
        \If {$n_0; n' \mergestep t_0$}
            \State $G' \gets$ \Call{\REMOVEVERTEX}{$G', n_0$} 
            \State $(G', F'') \gets$ \Call{\ADDTERM}{$G',t_0$}
            \State $F' \gets F' \circ F''$
            \State $n' \gets 1$  \Comment{break out of the outer loop}
        \EndIf
    \EndFor
    \State $G' \gets$ \Call{\ADDVERTEX}{$G', n'$}
    \For {each node $n_0$ in $G'$} 
      \If {$\commute*{n_0}{n'}$}
        $G' \gets $\Call{\ADDEDGE}{$G', n_0, n'$}
      \EndIf
    \EndFor
\EndIf
\EndWhile
\Return $(G', F')$

\EndProcedure
\end{algorithmic}
\caption{The procedure $\addTerm{G}{t}$, where $G$ is a \PCAST graph and $t$ is a \PCAST term,
returns an updated graph $G'$ and Pauli frame $F$ such that $\termof{G'}; F \equiv \termof{G}; t$. 
The functions \ADDVERTEX, \REMOVEVERTEX, and \ADDEDGE implement the corresponding simple graph operations.}
\label{fig:addTerm}
\end{figure}
\end{comment}

\begin{ExtendedVersion}

\begin{lemma} \label{lem:addNodeCorrect}
    $\addNode{G}{n}$ always terminates, and returns $(G',F, \mu)$ such that $G; n \equiv G'; F;\mu$.
\end{lemma}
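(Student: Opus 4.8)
The plan is to prove termination and correctness together by well-founded induction on the number of vertices $|V(G)|$, following the three-way case split of the algorithm in \cref{fig:addNode}. Termination is essentially immediate: the only recursive call, which occurs in the merge branch, is $\addNode{G'}{n_0'}$ where $G'$ is $G$ with the sink $n_0$ removed, so $|V(G')| = |V(G)| - 1$; every other branch returns after a bounded amount of work without recursing, so the recursion bottoms out.

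For the equivalence $G; n \equiv G'; F; \mu$, the two degenerate cases are immediate. If $n = F$ is a Pauli frame we return $(G, F, \mu^0)$ and $G; F \equiv G; F; \mu^0$ since $\mu^0$ is the identity measurement space function; if $n = \mu$ is a measurement space function we return $(G, \frameof{I}, \mu)$ and $G; \mu \equiv G; \frameof{I}; \mu$ since $\interpM{\frameof{I}}$ is the identity map (as $\unitaryof{\frameof{I}} = I$). In the non-merge branch $n$ is added as a fresh vertex together with exactly the edges $n_0 \to n$ for every vertex $n_0$ of $G$ that does not commute with $n$; the new vertex is a sink, so $G'$ is again a directed acyclic graph whose edges are precisely the non-commuting pairs, i.e. a well-formed \PCAST graph, and appending $n$ to any topological ordering of $G$ yields a topological ordering of $G'$. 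By \cref{lem:topologicalOrderEquiv} the semantics of $G'$ is independent of the ordering, so $\termof{G'} \equiv \termof{G}; n$, that is $G; n \equiv G'; \frameof{I}; \mu^0$, which is what the algorithm returns.

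The substantive case is the merge branch: there we have found a vertex $n_0$ of outdegree $0$ with $n_0; n \mergestep t$, where by the remark accompanying \cref{fig:addNode} either $t = n_0'$ (in which case take $\mu_0 = \mu^0$) or $t = n_0'; \mu_0$ for a measurement space function $\mu_0$. Since $n_0$ is a sink, some topological ordering of $G$ ends in $n_0$, so $\termof{G} \equiv \termof{G \setminus n_0}; n_0$, and hence $G; n \equiv \termof{G \setminus n_0}; n_0; n \equiv \termof{G \setminus n_0}; t$ by \cref{lem:mergestep}. Removing the sink $n_0$ leaves a well-formed graph with one fewer vertex, so the induction hypothesis applies to $\addNode{G \setminus n_0}{n_0'}$, giving $(G'', F, \mu')$ with $\termof{G \setminus n_0}; n_0' \equiv G''; F; \mu'$. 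Combining, $G; n \equiv \termof{G \setminus n_0}; n_0'; \mu_0 \equiv G''; F; \mu'; \mu_0 \equiv G''; F; (\mu_0 \circ \mu')$, where the last equivalence uses $\interpM{\mu_0 \circ \mu'} = \interpM{\mu_0} \circ \interpM{\mu'}$ so that the term $\mu'; \mu_0$ and the composite $\mu_0 \circ \mu'$ coincide; this is exactly the triple $(G'', F, \mu_0 \circ \mu')$ the algorithm returns.

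I expect the main obstacle to be the bookkeeping in the merge branch rather than anything conceptually hard. One must argue that the local rewrite $n_0; n \mergestep t$ is legitimately applied inside the graph — this is exactly why $n_0$ is required to have outdegree $0$, so that it can be pulled to the end of a topological ordering via \cref{lem:topologicalOrderEquiv} before \cref{lem:mergestep} is invoked — and one must propagate the measurement space function that a merge may produce out to the end in the correct composition order, consistently with the convention under which the rule $\mu_1; \mu_2 \mergestep \mu_2 \circ \mu_1$ was shown sound. A secondary point, needed so that every intermediate $\interpM{-}$ is defined and \cref{lem:topologicalOrderEquiv} keeps applying, is that removing a sink, adding a vertex, and adding an edge all preserve well-formedness of the \PCAST graph (acyclicity together with the edge-iff-non-commuting condition).
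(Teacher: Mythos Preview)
Your proposal is correct and follows essentially the same approach as the paper's proof: induction on the number of vertices, with the merge branch justified by pulling the outdegree-$0$ node $n_0$ to the end of a topological ordering and applying \cref{lem:mergestep}, then the induction hypothesis, then composing the measurement space functions. Your version is in fact more thorough than the paper's --- you explicitly handle the degenerate cases $n=F$ and $n=\mu$ and explicitly invoke \cref{lem:topologicalOrderEquiv} where the paper merely asserts $G \equiv G'; n_0$ from the outdegree-$0$ condition --- but the argument is the same.
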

\begin{proof}
    $\ADDNODE$ always terminates because the recursive call on line 8 is always invoked with a $G'$ containing one fewer node than $G$.

    Regarding correctness, if none of the branches of the for loop on line 5 are taken, then the semantics of $G'$ (line 10) will be equivalent to $G;n$, satisfying the invariant. However, if one of the branches is taken, i.e. there is an $n_0$ with $n_0; n \mergestep n_0'; \mu_0$, then we know we can write $G \equiv G'; n_0$ by the fact that $n_0$ has outdegree 0 in $G$.
    Then, the recursive call tells us that $G'; n_0' \equiv G''; F; \mu$, and so
    \begin{align}
        G; n &\equiv G'; n_0;n  \equiv G'; n_0'; \mu_0 \\
        &\equiv G''; F; \mu; \mu_0 \equiv G''; F; \mu_0 \circ \mu
        \qedhere
    \end{align}
\end{proof}

\begin{theorem} \label{lem:circuitToPoPRCorrect}
    $\circuitToPoPR{C}$ returns $(G, F, \mu)$ such that $C \equiv G; F; \mu$.
\end{theorem}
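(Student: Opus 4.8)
The plan is to establish the loop invariant of $\circuitToPoPR{C}$ (\cref{alg:circToPoPR}): after the outer loop has consumed a prefix $C_0$ of the input circuit $C$ (so $C = C_0; C_1$ for some suffix $C_1$), the current values of $G$, $F$ and $\mu$ satisfy $\termof{C_0} \equiv G; F; \mu$, identifying each circuit and each graph with its associated term; the theorem is then the special case $C_0 = C$. Termination is immediate, since both loops range over finite lists and the recursion inside $\ADDNODE$ terminates by \cref{lem:addNodeCorrect}. I would prove the invariant by induction on the number of \PCAST nodes processed so far. In the base case no gate has been read, so $G = \emptyset$, $F = \frameof{I}$, $\mu = \mu^0$, and the processed prefix has term $1$; the claim holds because $\interpM{\emptyset}$, $\interpM{\frameof{I}}$ and $\interpM{\mu^0}$ are each the identity cq-channel.

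For the inductive step, assume the invariant with processed prefix $\equiv G; F; \mu$, and let $n$ be the next node drawn from $\termof{g}$ for the gate $g$ currently being scanned; I case on $n$ exactly as the algorithm does. (i) If $n = \mu'$, then measurement-space functions compose, so $G; F; \mu; \mu' \equiv G; F; (\mu' \circ \mu)$, matching the update $\mu \gets \mu' \circ \mu$. (ii) If $n = F'$, then $\mu$ commutes with every Pauli frame (\cref{fig:CommuteTermTerm} with \cref{thm:PoPRNodeCommuteCorrect}) and $F; F' \equiv F' \circ F$ is a corollary of \cref{lem:FrameCompositionExpanded}, so $G; F; \mu; F' \equiv G; F; F'; \mu \equiv G; (F' \circ F); \mu$, matching $F \gets F' \circ F$. (iii) If $n = \Meas[c]{P}$, the algorithm first rewrites it to $\Meas[c']{P}$ with $c'$ fresh and replaces $\mu$ by $\mu[c \mapsto c']$; the equivalence $\mu; \Meas[c]{P} \equiv \Meas[c']{P}; \mu[c \mapsto c']$ established earlier leaves the semantics of the processed prefix composed with $n$ unchanged while moving the measurement in front of the trailing measurement-space function, and the renamed measurement is then handled as in case (iv). (iv) Otherwise $n$ is a rotation, a preparation, or a freshly renamed measurement: rotations and preparations commute with $\mu$ (\cref{fig:CommuteTermTerm}, \cref{thm:PoPRNodeCommuteCorrect}), and a renamed measurement was already moved in front of the current $\mu$ in case (iii), so we reach a state equivalent to $G; F; n; \mu^\sharp$, where $\mu^\sharp$ is the current value of $\mu$; then $F; n \equiv \fwdAction{F}(n); F$ by \cref{lem:extendedFrameTerminal} (with $\fwdAction{F}$ computed node-wise as in \cref{lem:frameTerminal}) makes this equivalent to $G; \fwdAction{F}(n); F; \mu^\sharp$. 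By \cref{lem:addNodeCorrect}, $\addNode{G}{\fwdAction{F}(n)}$ returns $(G', F', \mu')$ with $G; \fwdAction{F}(n) \equiv G'; F'; \mu'$; since $\mu'$ commutes with $F$ and both frames and measurement-space functions compose, $G'; F'; \mu'; F; \mu^\sharp \equiv G'; (F \circ F'); (\mu^\sharp \circ \mu')$, which is precisely the simultaneous update $G \gets G'$, $F \gets F \circ F'$, $\mu \gets \mu \circ \mu'$.

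Iterating this over all nodes of all gates, and using $\termof{C_1; C_2} = \termof{C_1}; \termof{C_2}$ together with the fact that each gate's term is semantically the gate itself, gives $\termof{C} \equiv G; F; \mu$ when the loops exit, \ie $C \equiv G; F; \mu$. I expect the main obstacle to be the measurement case (iii): one must verify that the on-the-fly renaming of measurement variables---the mechanism that keeps the trailing measurement-space function terminal while still letting the remaining incoming nodes be commuted in front of it---is sound, and that the accumulated $\mu$ is composed in the right order. A secondary subtlety is confirming that any $\mu'$ produced inside $\ADDNODE$ (arising from merges such as two successive measurements of the same Pauli, or a preparation absorbing a later measurement) genuinely commutes with the trailing frame, so that it can be pulled to the very end; everything else is bookkeeping justified by the commutation and frame-composition lemmas already in hand.
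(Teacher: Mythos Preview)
Your proposal is correct and follows essentially the same approach as the paper: both establish a loop invariant relating the processed portion of the circuit to $G; F; \mu$, and both verify its preservation by appealing to \cref{lem:extendedFrameTerminal}, \cref{lem:addNodeCorrect}, and the commutation of measurement-space functions with frames. Your case analysis is more explicit than the paper's (which handles the measurement renaming and the $\mu'/F'$ branches more tersely), and your phrasing of the invariant in terms of the consumed prefix rather than the remaining suffix is cosmetically different, but the argument is the same.
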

\begin{proof}
    The loop on line 3 preserves the invariant that $C \equiv G; F; \mu; C'$ where $C$ is the original circuit and $C'$ is the remainder after each loop iteration. In the body of the for loops, we have $C \equiv G; F; \mu; n; C'$. After the if statements in lines 5 and 7, we know that we can commute $n$ past $\mu$ to obtain $C \equiv G; F; n; \mu; C'$.
    From \cref{lem:extendedFrameTerminal} we know that this is equivalent to $G; \fwdAction{F}(n); F; \mu; C'$. From \cref{lem:addNodeCorrect} this is equivalent to $G'; F'; \mu'; F; \mu; C'$ where $(G',F',\mu'$ is the result of $\addNode{G}{\fwdAction{F}{n}}$. Then, since $\mu'$ and $F$ commute, this is equivalent to $G'; F\circ F'; \mu \circ \mu'$, as obtained in line 11, which completes the proof.
\end{proof}
\end{ExtendedVersion}

\subsection{Internal optimizations on \PCAST graphs} \label{sec:opts}
We give a brief overview of the internal optimizations on the \PCAST graphs here, where a detailed discussion is the contents of \companion. The optimizations are primarily concerned with the interfaces between unitary and non-unitary elements of the graph, and how these optimizations are leveraged depends on the desired outcome of the quantum program, release or hold.
\begin{comment}
\begin{enumerate}
    \item A \emph{hold outcome} indicates that the quantum state is in part or in whole a desired output of the program, so the state of the system is maintained after the program is complete.
    \item A \emph{release outcome} indicates that the only desired outcomes are the measurements, in which case once all measurement results have been recorded, the quantum state can be ``released''. Formally we define it as:
\end{enumerate}
\begin{definition} %\albert{alternate to the alternative definition in terms of cq-states}
\label{defn:releaseoutcome}
    A release outcome is an equivalence class of Pauli maps defined relative to a classical measurement space $\mathcal M$, whereby $C_1 \simeq C_2$ if and only if for every Pauli $P$, if $C_1(P) = \bigoplus_{m' \in \mathcal M_1} \cqstate{\sigma^1_{m'}}{\rho^1_{m'}}$ and $C_2(P)= \bigoplus_{m'' \in \mathcal M_2} \cqstate{\sigma^2_{m''}}{\rho^2_{m''}}$, then there exists surjective functions $ f_1: \mathcal M_1 \to \mathcal M$ and  $ f_2: \mathcal M_2 \to \mathcal M$ such that for all $m \in \mathcal M$ $\sum_{m' \in f_1^{-1}(\{m\})} \tr(\rho^1_{m'}) = \sum_{m'' \in f_2^{-1}(\{m\})} \tr(\rho^2_{m''})$.
\end{definition}
\end{comment}
We give a brief outline of the optimizations here, and illustrate them in \cref{fig:pcastopt}.
\begin{enumerate}[leftmargin=*]
\item In both cases, reduce the size of the support for nodes which are either weakly or not at all dependent (as defined in \companion) on a preparation (\cref{fig:pcastopta}).
\item For a \emph{hold} outcome, reduce the cost of the terminating frame in the presence of preparations with outdegree 0 in the graph (\cref{fig:pcastoptb}).
\item For a \emph{release} outcome (\cref{fig:pcastoptc}):
\begin{enumerate}
\item Trivialize the terminating frame and remove nodes that can be commuted past all measurements.
\item If there are preparation nodes with outdegree 0,
\begin{enumerate}
\item Replace all measurement nodes with outdegree 0 with an equivalent and more efficient set of measurements and a terminal measurement map via a \emph{stabilizer search}~\companion.
\item reduce the cost of the the frame generated by 3(b.i) in the presence of preparations with outdegree 0.
\item  convert the graph back to frame-terminating.
\end{enumerate}
\end{enumerate}
\end{enumerate}

\begin{figure*} 

  \centering

\begin{minipage}{0.335\textwidth}
  \subfloat[Reducing node support in the presence of a preparation node (top) is the equivalent of removing a control-line when the control is initialized via a preparation (bottom).]{  \label{fig:pcastopta}
    \begin{tikzpicture} 
      \node[scale=0.6] at (0,0) {
        \begin{tikzcd}[row sep=0.2cm]
          \startingpoprgate{\Prep{Z_0}{X_0}} \arrow[r] 
            &\poprgate{\Rot{Z_0 Z_1}{\theta/2}}
          \\
          &\startingpoprgate{\Rot{Z_1}{\theta/2}}
      \end{tikzcd}
      };
      
      \node at (2, 0) {\Large $\equiv$};
      
      \node[scale=0.6] at (3,0) {
        \begin{tikzcd}[row sep=0.2cm]
          \poprgate{\Prep{Z_0}{X_0}}
          \\
          \startingpoprgate{\Rot{Z_1}{\theta}}
        \end{tikzcd}
      };

      \node[scale=0.6] at (0, -1.25) {
        \begin{tikzcd}[row sep=0.2cm]
          & \gate{\PrepZ} & \octrl{1} & \qw \\
          & \qw & \gate{\RZ(\theta)} & \qw
        \end{tikzcd}
      };
      \node at (2, -1.25) {\Large $\equiv$};
      \node[scale=0.6] at (3, -1.25) {
        \begin{tikzcd}[row sep=0.2cm]
          & \gate{\PrepZ} & \qw \\ 
          & \gate{\RZ(\theta)} & \qw
        \end{tikzcd}
      };
    \end{tikzpicture}
  }
  \subfloat[Reducing the cost of the Pauli Frame in the presence of a preparation (top) is the equivalent of removing a CZ when the control line is initialized via a preparation (bottom).]{ \label{fig:pcastoptb}
    \begin{tikzpicture}
      \node[scale=0.6] at (0,0) {
        \begin{tikzcd}[row sep=0.2cm]
          \startingpoprgate{\Prep{Z_0}{X_0}} \arrow[r]
          & \startingframegate{1}{\begin{pmatrix}
              Z_0 & X_0 Z_1 \\
              Z_1 & Z_0 X_1
            \end{pmatrix}}
        \end{tikzcd}
      };

      \node at (2, 0) {\Large $\equiv$};
      
      \node[scale=0.6] at (3,0) {
        \begin{tikzcd}[row sep=0.2cm]
          \poprgate{\Prep{Z_0}{X_0}} \\
           \startingframegate{1}{\begin{pmatrix}
                Z_0 & X_0 \\
                Z_1 & X_1
          \end{pmatrix}}
        \end{tikzcd}
      }; 

      \node[scale=0.6] at (0.5, -1.25) {
        \begin{tikzcd}[row sep=0.2cm]
          & \gate{\PrepZ} & \ctrl{1} & \qw \\
          & \qw & \ctrl{} & \qw &
        \end{tikzcd}
      };
      \node at (2, -1.25) {\Large $\equiv$};
      \node[scale=0.6] at (3, -1.25) {
        \begin{tikzcd}[row sep=0.3cm]
          & \gate{\PrepZ} & \qw \\ 
          & \qw & \qw
        \end{tikzcd}
      };
    \end{tikzpicture}
  }
\end{minipage}
\begin{minipage}{0.65\textwidth}
\vfill
  \subfloat[Sequence of transformations applied in the case a release outcome.]{ \label{fig:pcastoptc}
   \begin{tikzpicture}
     \node[scale=0.55, draw, thick, rounded corners, fill=blue!20] (g0) at (-0.25,2) {
       \begin{tikzcd}[row sep=0.2cm, solid]
           &\startingpoprgate{\Meas[c_0]{Y_0 X_1 Z_2}} \arrow[rd]
         \\
         \startingpoprgate{\Prep{X_0}{Y_0}} \arrow[r] \arrow[rd] \arrow[ru]
         &\poprgate{\Rot{Z_0 Z_1}{\theta_1}} \arrow[r]
         & \startingframegate{1}{\begin{pmatrix}
                Z_0 X_1 & Z_1 \\
                Z_0 & X_0 Z_1 \\
                Z_2 & X_2
           \end{pmatrix}}
         \\
         &\startingpoprgate{\Meas[c_1]{X_0 Y_1 Z_2}} \arrow[ru]
         %\\
         %&\startingpoprgate{\Meas[c_1]{Z_2}}
       \end{tikzcd}
     };

     \node[scale=0.55, draw, thick, rounded corners, fill=blue!20] (g1) at (5.75, 1.75) {
       \begin{tikzcd}[row sep=0.2cm]
           &\startingpoprgate{\Meas[c_0]{Y_0 X_1 Z_2}} \arrow[rd]
         \\
         \poprgate{\Prep{X_0}{Y_0}} \arrow[rd] \arrow[ru]
         &
           & \startingframegate{1}{\begin{pmatrix}
                Z_0 & X_0 \\
                Z_1 & X_1 \\
                Z_2 & X_2
           \end{pmatrix}}
         \\
         &\startingpoprgate{\Meas[c_1]{X_0 Y_1 Z_2}} \arrow[ru]
       \end{tikzcd}
     };

     \node[scale=0.55, draw, thick, rounded corners, fill=blue!20] (g2) at (0, -1) {
       \begin{tikzcd}[row sep=0.2cm]
           & &\startingpoprgate{\Meas[c_0']{Y_0}} \arrow[rd]
         \\
         \poprgate{\Prep{X_0}{Y_0}} \arrow[r]
         & \startingframegate{1}{\begin{pmatrix}
                Z_0 X_1 & X_0 \\
                Z_1 X_0 & X_1 \\
                Z_2 & X_2
           \end{pmatrix}} \arrow[ru] \arrow[rd]
         &\startingpoprgate{\Meas[c']{Z_2}} \arrow[r]
         & \startingframegate{1}{\begin{aligned}
            &\cAssign{c_0}{c_0' + c'}; \\
            &\cAssign{c_1}{c_1' + c'}
        \end{aligned} }
         \\
         & &\startingpoprgate{\Meas[c_1']{Y_1}} \arrow[ru]
       \end{tikzcd}
     };

%     \node[scale=0.7] at (0, -10.5) {
%       \begin{tikzcd}[row sep=0.2cm]
%           & &\startingpoprgate{\Meas[c_0]{Y_0}}
%         \\
%        \poprgate{\Prep{X_0}{Y_0}} \arrow[r]
%         & \startingframegate{1}{\begin{pmatrix}
%                Y_0 & X_0 \\
%                Z_1 & X_1 \\
%                Z_2 & X_2
%           \end{pmatrix}} \arrow[ru]
%         &\startingpoprgate{\Meas[c_0 + c_1 + c_2]{Z_2}}
%         \\
%         & &\startingpoprgate{\Meas[c_2]{Y_1}}
%       \end{tikzcd}
%     };

%     \node at (0, -12.25) {\Large $\simeq (\mod c_0 \otimes c_1$) };
% prev -14
     \node[scale=0.55, draw, thick, rounded corners, fill=blue!20] (g3) at (6, -3) {
       \begin{tikzcd}[row sep=0.2cm]
           &\startingpoprgate{\Meas[c_0']{-Z_0}} \arrow[rd]
           & \startingframegate{1}{\begin{pmatrix}
                Z_0 & X_1 \\
                Z_1 & X_1 \\
                Z_2 & X_2
           \end{pmatrix}}
         \\
         \poprgate{\Prep{X_0}{Y_0}} \arrow[ru]
         &\startingpoprgate{\Meas[c']{Z_2}} \arrow[r]
         & \startingframegate{1}{\begin{aligned}
            &\cAssign{c_0}{c_0' + c'}; \\
            &\cAssign{c_1}{c_1' + c'}
        \end{aligned} }
         \\
         &\startingpoprgate{\Meas[c_2']{Y_1}} \arrow[ru]
       \end{tikzcd}
     };

     \path[->, line width = 0.3mm] (g0) edge node [above] {$\equiv^\hold$} (g1);
     \path[->, line width=0.3mm, pos=0] (g1) edge node [below] {~~$\equiv^\release$} (g2);
     \path[->, line width=0.3mm, pos=0] (g2) edge node [below] {$\equiv^\hold$} (g3);
     
    \end{tikzpicture}
  }
  \end{minipage}

\caption{Examples of the \PCAST graph optimizations.}
  \label{fig:pcastopt}
\end{figure*}

\begin{theorem}[\companion]
    If $G$ is optimized to $G'$ with outcome $o$, then $\termof{G} \equiv^o \termof{G'}$.
\end{theorem}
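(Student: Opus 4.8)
The plan is to present the optimization as a finite chain of atomic rewrites $G = G_0 \to G_1 \to \cdots \to G_m = G'$, where each step $G_i \to G_{i+1}$ instantiates one of the rules behind the families enumerated in \cref{sec:opts}, to show that each step preserves $\equiv^o$, and to conclude by transitivity of $\equiv^o$. Two global observations make the bookkeeping go through. First, $\equiv^{\hold}$ refines $\equiv^{\release}$ (equal partial density matrices have equal trace), so any step checked for $\equiv^{\hold}$ is automatically sound when $o = \release$. Second, $\equiv^{\hold}$ is a two-sided congruence for sequential composition — composition of \PCAST terms is associative with unit $1$ and $\interpM{t_1;t_2} = \interpM{t_2}\circ\interpM{t_1}$, so a local equivalence $s \equiv^{\hold} s'$ lifts to $pre;s;post \equiv^{\hold} pre;s';post$ — whereas $\equiv^{\release}$ is only a right congruence ($t_1 \equiv^{\hold} t_2$ and $t_1' \equiv^o t_2'$ imply $t_1;t_1' \equiv^o t_2;t_2'$). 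The latter suffices because every \release-only rewrite acts on the \emph{tail} of the graph: the terminal Pauli frame together with the outdegree-$0$ measurement and measurement-space nodes, which sit at the end by the frame- and measurement-space-terminal invariants.

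To turn a graph rewrite into a term rewrite I would invoke \cref{thm:PoPRNodeCommuteCorrect}: any two topological orderings of a \PCAST graph give $\equiv$-equal terms, so $\termof{G_i}$ is well-defined, and when $G_{i+1}$ comes from rewriting a subgraph of $G_i$ one can pick a common topological order isolating the rewritten part as a contiguous subterm, giving $\termof{G_i} \equiv pre;s;post$ and $\termof{G_{i+1}} \equiv pre;s';post$. Everything then reduces to the local equivalences $s \equiv^o s'$ for the atomic rules.

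Next I would dispatch the families of \cref{sec:opts}. Support reduction and frame-cost reduction in the presence of a preparation (items 1 and 2) are $\equiv^{\hold}$: a Pauli prepared with outdegree $0$ acts as an idempotent projector, so by the preparation semantics of \cref{defn:prep} a rotation or frame column along a Pauli commuting with both $P_Z$ and $P_X$ is absorbed, exactly as in the merge rule $\Prep{P_Z}{P_X};\Rot{P_Z}{\theta}\mergestep\Prep{P_Z}{P_X}$ of \cref{fig:MergeVerticesStep} (sound by \cref{lem:mergestep}); the frame case, illustrated in \cref{fig:pcastoptb}, is the same computation applied to the terminal frame. The \release-only steps (item 3) are $\equiv^{\release}$: trivializing the terminal frame and deleting any node that commutes past every remaining measurement is sound because under $\equiv^{\release}$ only traces matter, so a unitary sitting after all measurements is invisible (cf.\ $\Rot{\pm P}{\theta};\Meas[c]{P}\mergestep\Meas[c]{P}$, \cref{fig:MergeVerticesStep}); the stabilizer search replaces the outdegree-$0$ measurement nodes by a cheaper generating set together with a measurement space function that classically reconstructs the original outcomes; and the follow-up frame-cost reduction and re-establishment of frame-terminality (via \cref{lem:frameTerminal}) are bookkeeping of the kind already handled. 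Combining each local equivalence with the congruence facts and transitivity yields $\termof{G} \equiv^o \termof{G'}$.

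The main obstacle is the stabilizer-search step: proving that replacing a family of pairwise-commuting outdegree-$0$ measurements $\Meas[c_1]{P_1},\ldots,\Meas[c_k]{P_k}$ by a reduced measurement set plus a measurement space function produces the same joint outcome distribution over the $P_i$, and that this is consistent with the surrounding preparations and the re-introduced frame. This requires analysing the interplay of the group generated by the $P_i$ with the stabilizer of the prepared state, and is the substantive content carried out in \companion; the remaining steps are congruence bookkeeping over the semantics of \cref{defn:rot,defn:prep,defn:meas}. Since the theorem itself is imported from \companion, the role of the present argument is exactly this glue — reducing soundness of the optimization to soundness of the atomic rules proved there.
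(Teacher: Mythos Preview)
The paper does not prove this theorem at all: it is stated with the citation marker \companion\ and the surrounding text explicitly says ``a detailed discussion is the contents of \companion.'' There is therefore no in-paper proof to compare your proposal against. Your sketch is a reasonable reconstruction of how such a proof would go --- decompose the optimization into atomic rewrites, lift each local equivalence to the whole term via the congruence properties and \cref{thm:PoPRNodeCommuteCorrect}, and defer the substantive stabilizer-search argument to \companion\ --- and you correctly flag that last step as the real content. Since you yourself note that ``the theorem itself is imported from \companion,'' you have accurately assessed the situation; there is nothing further in this paper to check your outline against.
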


\subsection{Synthesizing circuits from \PCAST graphs}
\label{sec:synthesis}

After optimizing the \PCAST graph, the next step is to synthesize an equivalent circuit. In this section we summarize the method outlined in Ref. \cite{Schmitz2021} along with the modifications needed to adapt it to \PCAST.

We will always start with a frame-terminal, fully merged \PCAST graph $(G, F)$.\ifExtended{ The process is performed in two parts:
\begin{enumerate}
\item Synthesize the non-Clifford elements of $G$ up to a residual Clifford unitary. In the \release case, we also defer synthesis of all measurements topologically at the end of the DAG.
\item Finalize the synthesis:
\begin{enumerate}
    \item In the \hold case, transform the original terminating frame by the residual frame from step 1 and synthesize an equivalent Clifford circuit.
    \item In the \release case, transform the end measurements by the residual frame from step 1 and perform a \emph{stabilizer search}~\companion to simultaneously produce a set of \release equivalent measurements, a Clifford circuit to realize them, and a measurement space function to complete the \release equivalence.
\end{enumerate}
\end{enumerate}
}{} 
The process starts by prepending the graph with an empty circuit and iteratively synthesizing gates from dependency-free nodes. If a node has a direct equivalent as a gate, such as $\Prep{Z_i}{X_i}$, that gate is added directly. If not, we select a Clifford gate $g$ to add to the circuit, and transform the rest of the graph by $F'=\frameof{g^{-1}}$ to obtain
\begin{align} \label{eq:gatepush}
    C; g; F'; G; F
    \equiv
    C; g; \fwdAction{F'}(G); F \circ F'
\end{align}
where $\fwdAction{F'}(G)$ applies $\fwdAction{F'}(n)$ to each node in $G$.

Thus synthesis is primarily a method for selecting the sequence of Clifford gates, particularly two-qubit entangling (TQE) gates, that minimizes the gate cost, depth or other cost metric of the resulting circuit.\footnote{For a given qubit pair there are $3 \times 3 = 9$ TQE gates, generalizing CNOT and CZ, such that we have one Pauli operator basis for each qubit~\cite{Schmitz2021}.} To do so, we adapt the search algorithm of \cite{Schmitz2021} with what we call \emph{search functions}.
\begin{enumerate}
    \item $\nodeCost(n)$ returns the cost to implement $n$, which is zero if and only if a gate $\gateof{n}$ implements it.
    \item $\reduceNode(n)$ returns a set of TQE gates $g$ that minimize the $\nodeCost$ of $\fwdAction{\frameof{g}}(n)$. \ifExtended{It promises \emph{monotonicity} with respects to $\nodeCost$, i.e. the repeated process of applying any sequence of gates returned by the function will monotonically reduce $\nodeCost$ to zero.}{Repeated application will monotonically reduce $\nodeCost$ to 0.}
    %It promises \emph{monotonicity} with respects to $\nodeCost$, i.e. the repeated process of apply any sequence of gates returned by the function will monotonically reduce $\nodeCost$ to zero.
    \item $\gateCost(C,G, F, g)$ returns the cost of the action of Eq.~\eqref{eq:gatepush}.
    \item $\addGate(C, G, F, g)$ performs the action of Eq.~\eqref{eq:gatepush}.
\end{enumerate}

We start with synthesizing non-Clifford nodes (\cref{algo:greedy}). \ifExtended{The algorithm depicted in \cref{algo:greedy} has two important properties which we prove in the following lemma:
\begin{lemma}
  The search algorithm of \cref{algo:greedy} terminates and satisfies $\termof{C}; \termof{G'}; \termof{F'} \equiv \termof{G}; \termof{F}$ for its inputs and outputs.
\end{lemma}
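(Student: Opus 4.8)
The plan is to prove both halves at once by maintaining a loop invariant. Throughout the execution of \cref{algo:greedy} the running circuit $C$, graph $\hat{G}$, and frame $\hat{F}$ will satisfy
\begin{align}
    \termof{C}; \termof{\hat{G}}; \termof{\hat{F}} \equiv \termof{G}; \termof{F},
\end{align}
where $(G,F)$ are the inputs; this holds trivially at initialization, when $C$ is empty and $(\hat{G},\hat{F})=(G,F)$, and at loop exit it is precisely the claimed statement for the outputs $(C,G',F')$. I would additionally carry the structural invariant that $\hat{G}$ is a well-formed, frame-free \PCAST graph, which is what makes ``dependency-free node'' meaningful across iterations.

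The invariant is preserved by each of the two kinds of step. \emph{(i) Emitting a direct gate.} When a dependency-free node $n$ has $\nodeCost(n)=0$, by definition there is a gate $\gateof{n}$ with $\gateof{n} \equiv n$; since $n$ is a source of the DAG, some topological ordering of $\hat{G}$ begins with $n$, so \cref{lem:topologicalOrderEquiv} gives $\termof{\hat{G}} \equiv n; \termof{\hat{G}\setminus\{n\}}$, and replacing $\hat{G}$ by $\hat{G}\setminus\{n\}$ while appending $\gateof{n}$ to $C$ keeps the invariant. \emph{(ii) Applying a TQE Clifford $g$.} Here $g$ is appended to $C$, $\hat{G}$ is replaced by $\fwdAction{F'}(\hat{G})$ with $F'=\frameof{g^{-1}}$, and $\hat{F}$ by $\hat{F}\circ F'$. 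This is exactly \cref{eq:gatepush}: because $\unitaryof{F'}=g^{-1}$ we have $g; F' \equiv 1$, so $C;\hat{G};\hat{F} \equiv C; g; F'; \hat{G}; \hat{F}$; then \cref{lem:extendedFrameTerminal} applied node by node pushes $F'$ rightward, turning $\hat{G}$ into $\fwdAction{F'}(\hat{G})$, and the $F'; \hat{F} \equiv \hat{F}\circ F'$ corollary of \cref{lem:FrameCompositionExpanded} absorbs the leftover $F'$. The new $\fwdAction{F'}(\hat{G})$ is still a well-formed frame-free graph because $\fwdAction{F'}$ maps Hermitian Paulis to Hermitian Paulis (\cref{lem:FrameHermitian}) and preserves commutativity (\cref{lem:FrameCommutativityPreservation}), so every node keeps its type and every edge is unchanged; in particular sources stay sources. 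In the \release variant, measurements are additionally slid to the topological tail of $\hat{G}$, which is one more invocation of \cref{lem:topologicalOrderEquiv}.

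For termination I would use the lexicographic measure $(\,|\hat{G}|,\ \nodeCost(n)\,)$, where $n$ is the dependency-free node currently being processed. Processing $n$ is an inner phase that repeatedly applies a gate from $\reduceNode(n)$ --- each such application is a step of type (ii), which leaves $|\hat{G}|$ fixed but, by the monotonicity guarantee of $\reduceNode$, strictly decreases $\nodeCost(n)$ --- followed, once $\nodeCost(n)$ reaches $0$, by a single step of type (i) that deletes $n$ and strictly decreases $|\hat{G}|$. A nonempty frame-free \PCAST graph always has a source node (a rotation, preparation, or measurement), so the algorithm never stalls, and it halts when $\hat{G}$ is empty (or, in the \release case, holds only the deferred measurements). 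The main obstacle is making this termination argument airtight: one must verify that a type-(ii) step can neither enlarge $\hat{G}$ nor un-source a node that was already dependency-free --- which is exactly where preservation of Hermiticity and commutativity by $\fwdAction{F'}$ is used --- and one must lean on the $\reduceNode$ monotonicity promise (established separately) as a black box to bound the inner phase. Beyond that, all the bookkeeping reduces to \cref{lem:topologicalOrderEquiv}, \cref{lem:extendedFrameTerminal}, and \cref{lem:FrameCompositionExpanded}.
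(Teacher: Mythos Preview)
Your invariant argument for correctness is the same as the paper's, just more explicit: the paper simply says each call to $\addGate$ either realizes \cref{eq:gatepush} or adds a gate whose equivalent node is then removed from $G'$, whereas you spell out which lemmas justify each of those two cases and also verify that $\fwdAction{F'}$ preserves the graph structure (Hermiticity and commutativity), a point the paper leaves implicit.

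The termination argument, however, misreads the control flow of \cref{algo:greedy}. There is no ``inner phase that repeatedly applies a gate from $\reduceNode(n)$'' for a fixed $n$: each iteration of the \textbf{while} loop recomputes $\text{Min}=\argmin_{n\in\BEGIN(G')}\nodeCost(n)$, picks \emph{one} gate from $\reduceNode(\text{Min})$, applies it, and loops. That gate is only guaranteed to reduce the cost of \emph{some} element of Min, not of any particular $n$ you were tracking, so the second coordinate of your lexicographic measure $(|\hat{G}|,\nodeCost(n))$ is not well-defined across iterations. The paper handles this by fixing the node $n$ that was just reduced and observing it must reappear in Min at the next iteration (its cost, having been minimal, only dropped), so by the monotonicity promise of $\reduceNode$ this specific $n$ is eventually driven to cost zero and removed; termination then follows by induction on $|G'|$. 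Your lexicographic idea can be salvaged by taking the second coordinate to be $\min_{n\in\BEGIN(\hat{G})}\nodeCost(n)$ rather than the cost of one ``current'' node: a type-(ii) step strictly decreases this minimum (the reduced node was at the minimum and dropped below it), and---as you correctly noted---type-(ii) steps leave $\BEGIN(\hat{G})$ unchanged, so only finitely many of them can occur before a type-(i) step fires.
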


\begin{proof}
To prove the algorithm of \cref{algo:greedy} terminates, consider lines~\ref{line:getmins}-\ref{line:addgate} for any iteration of the main loop. Regardless of what gate is returned in line~\ref{line:getgate}, its application in line~\ref{line:addgate} reduces $\nodeCost$ for at least one element $n \in \text{Min}$ by definition of the search functions. At the top of the next iteration, either $\nodeCost(n) = 0$ and the node is removed from $G'$, or it is necessarily found in Min again and thus possibly reduced by the next TQE gate added to the circuit. Thus by the monotonicity property of the $\reduceNode$ function, $n$ is eventually reduced to $\nodeCost(n) = 0$ and removed from $G'$. Thus a finite number of iterations of the main loop always removes at least one element which implies a finite number of iterations of the main loop removes all relevant nodes related to the exit condition. Therefore, the algorithm terminates.

As for the second property, each call to $\addGate$ satisfies \cref{eq:gatepush} or is called to add a gate whose equivalent is removed from $G'$ and therefore every modification to $C, G'$ and $F'$ satisfies $\termof{C}; \termof{G'}; \termof{F'} \equiv \termof{G}; \termof{F}$ including when returned.
\end{proof}
}{The search always terminates due to the search functions' guarantees~\cite{Schmitz2021}. It should be clear that $\termof{C}; \termof{G'}; \termof{F'} \equiv \termof{G}$, as $\addGate$ preforms a logical identity between the graph and the circuit.
%In the \hold case (\release case), we delay synthesizing the terminating frame (end measurements) as these will be synthesized separately.
}

\begin{figure}
    \begin{algorithmic}[1]
    \Procedure{SearchNonClifford}{$G, F$, \hold /\release}
        \State $C \gets \emptyset$, $G' \gets G$, $F' \gets F$
        % \While {(\hold) $G'$ contains non-Clifford nodes OR
        
        %     \qquad (\release) $G'$ contains non-measurement nodes}
        \While {$G'$ contains (\release) non-measurement /
        
                \qquad \qquad \qquad~~~ (\hold) non-Clifford nodes}
            \For{$n \in \BEGIN(G')$}
                \If{$\nodeCost(n)=0$}
                  \If{(\hold) OR $n \notin \END(G')$ OR 
                     \State \quad $n$ not a measurement node
                     }
                    \State $\Call{addGate}{C, G', F', \gateof{n}}$
                    \State $G' \gets$ \Call{\REMOVEVERTEX}{$G', n$}
                  \EndIf
                \EndIf
            \EndFor
            \State Min $\gets \argmin_{n \in \BEGIN(G')} \nodeCost(n)$ \label{line:getmins}
            \State MinGate $\gets \reduceNode(\text{Min})$ \label{line:getgates}
            \State $g_{min} \gets \argmin_{g \in \text{MinGate}} \gateCost(C, G', F', g)$ \label{line:getgate}
            \State $\addGate(C, G', F', g_{min})$ \label{line:addgate}
        \EndWhile
        \State \Return $(C, G', F')$
    \EndProcedure
    \end{algorithmic}
    \caption{Ultra Greedy Search Algorithm for Non-Clifford nodes of a \PCAST graph $G$. Returns a circuit $C$ and terminating frame $F'$ such that $\termof{C}; \termof{G'}; \termof{F'} \equiv \termof{G}$. $\BEGIN(G)$ is the set of dependency-free non-Clifford nodes in $G$ and $\END(G)$ the set of nodes with outdegree 0. }
    \label{algo:greedy}
\end{figure}

\subsubsection{Search function implementations}
Search functions can be implemented differently based on complexity and cost.
%Here we discuss those currently implemented in the Intel Quantum SDK, with future work discussed in Section~\ref{sec:conc}.
The basic implementation in the Intel Quantum SDK primarily considers the minimum number of TQE gates required to reduce the node cost to zero. This cost depends on whether the node is a \emph{singlet} or a \emph{factor} node. 

Singlet nodes $n(P)$ are those defined by a single Pauli operator such as $\Rot{P}{\theta}$ or $\Meas[c]{P}$.\ifExtended{Ref.~\cite{Schmitz2021} argues that the minimum number of TQE gates to reduce a singlet node to single-qubit support is}{}
\begin{align}
\nodeCost(n(P))  = \support(P) - 1.
\end{align}
%as argued in Ref.~\cite{Schmitz2021}.

Factor nodes $n(P, Q)$ are defined by two anti-commuting Pauli operators. We can understand factor nodes as \emph{effective qubits} 
%consisting of
\ifExtended{(partial dimension-2 factorization of the Hilbert space)}{} consisting of
an effective $Z$ and effective $X$, much like the rows of the Pauli frame. 
\ifExtended{We have introduced two factor nodes, $\Prep{P_Z}{P_X}$ and $\Rot2{P_Z}{P_X}[\theta][\phi]$, and}{Thus far we have only introduced one type of factor node, $\Prep{P_Z}{P_X}$, but} to synthesize Clifford circuits in \cref{sec:cliffordSynthesis}, we break an $n$-qubit Pauli frame into $n$ factor nodes, one for each row, called \emph{local frames}.\ifExtended{}{\footnote{In the implementation, we also have 2-axis rotations $\Rot2{P_1}{P_2}[\theta][\phi]$, the generalization of the Intel Quantum SDK's $\RXY$ gate; see \cite{PaykinSchmitz2023PCOAST}.}}

%To understand the minimum number of TQE gates to reduce factor nodes, i.e. reduce the effective qubit to an actual one, we introduce the \emph{local support matrix} for any anti-commuting pair $(P, Q)$ and qubit $i$ as
The node cost of a factor node is the number of TQE gates needed to reduce the effective qubit to an actual one.\ifExtended{ For this purpose we define the}{The} \emph{local support matrix} of an anti-commuting pair $P, Q$ and qubit $i$ \ifExtended{as}{is}
\begin{align}
\support_i(P, Q) =
\begin{pmatrix}
    \lambda(P, X_i) & \lambda(P, Z_i)  \\
     \lambda(Q, X_i) & \lambda(Q, Z_i) 
\end{pmatrix},
\end{align}
The determinant (mod 2) of the local support matrix is called the \emph{local determinant}. 
We say a node has \emph{weak support} on qubit $i$ if its local support matrix is nonzero, but has zero determinant, and \emph{strong support} if its local determinant is~1.
%For any factor node whose local determinant is 1 on qubit $i$, we say that node has \emph{strong support} on qubit $i$ and likewise if the local support matrix is not zero, but has zero determinant, we say that node has \emph{weak support} on qubit $i$. 
\ifExtended{The following lemma}{Ref.~\cite{PaykinSchmitz2023PCOAST}} argues that for any factor node, the sum over all local determinants is $1 \mod 2$.
\begin{ExtendedVersion}
\begin{lemma} \label{lemma:sumDet}
For a factor noded defined by anti-commuting pair $(P,Q)$, 
\begin{align}
\sum_i \det(\support_i(P,Q)) \mod 2 = 1.
\end{align}
\end{lemma}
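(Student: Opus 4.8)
The plan is to show that the claimed sum equals the lifted commutativity $\lambda(P,Q)$, which is $1$ by hypothesis since a factor node is defined by an \emph{anti}-commuting pair. The lemma is then essentially the statement that the local (single-qubit) symplectic forms sum to the global symplectic form.

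First I would simplify the entries of $\support_i(P,Q)$. Writing $P = \alpha(p_0,\ldots,p_{n-1})$ and $Q = \beta(q_0,\ldots,q_{n-1})$, the lifted commutativity against a single-qubit-supported Pauli collapses: since $(X_i)_j = I$ for $j \neq i$ and $\lambda(q,I) = 0$ for every single-qubit Pauli $q$, we get $\lambda(P, X_i) \equiv \lambda(p_i, X) \pmod 2$ and likewise $\lambda(P, Z_i) \equiv \lambda(p_i, Z)$, and similarly for $Q$. Hence $\support_i(P,Q)$ depends only on $p_i$ and $q_i$.

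Next I would compute the local determinant mod $2$. Over $\mathbb{F}_2$ we have $\det\left(\begin{smallmatrix} a & b \\ c & d\end{smallmatrix}\right) = ad + bc$, so
\[
  \det(\support_i(P,Q)) \equiv \lambda(p_i, X)\lambda(q_i, Z) + \lambda(p_i, Z)\lambda(q_i, X) \pmod 2 .
\]
The key step is the identity $\lambda(p, X)\lambda(q, Z) + \lambda(p, Z)\lambda(q, X) \equiv \lambda(p, q) \pmod 2$ for all single-qubit Paulis $p,q$. This can be checked by a short case analysis over $p, q \in \{I, X, Y, Z\}$ using the single-qubit commutativity table (the case where either argument is $I$ is trivial, since then both sides vanish). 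Equivalently, introducing the symplectic bit-vector $p \mapsto (a_p, b_p)$ with $\lambda(p,q) = a_p b_q + b_p a_q$, one has $\lambda(p, X) = b_p$ and $\lambda(p, Z) = a_p$, and the identity is immediate.

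Finally, summing over $i$ and using the definition of lifted commutativity from \cref{sec:background},
\[
  \sum_i \det(\support_i(P,Q)) \equiv \sum_i \lambda(p_i, q_i) \equiv \lambda(P, Q) \pmod 2 = 1,
\]
since $\anticommute{P}{Q}$. I expect no real obstacle here; the only mildly fiddly point is the single-qubit identity in the second step, which is a finite check (or one line via the symplectic representation).
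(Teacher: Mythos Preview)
Your proposal is correct and follows essentially the same route as the paper: both identify $\sum_i \det(\support_i(P,Q))$ with the global commutativity $\lambda(P,Q)$, which equals $1$ by the anti-commuting hypothesis. The only difference is presentation---the paper states the expansion $\lambda(P,Q) = \sum_i \bigl(\lambda(P,X_i)\lambda(Q,Z_i)+\lambda(P,Z_i)\lambda(Q,X_i)\bigr)$ directly, whereas you first reduce to single-qubit data and then verify the local identity $\lambda(p,X)\lambda(q,Z)+\lambda(p,Z)\lambda(q,X)\equiv\lambda(p,q)$ explicitly.
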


\begin{proof}
  By definition, anti-commutativity implies
  \begin{align}
    1 &= \lambda(P,Q) \nonumber \\
      &= \sum_i \left( \lambda(P, X_i)  \lambda(Q, Z_i) + \lambda(P, Z_i) \lambda(Q, X_i) \right) \mod 2 \nonumber \\
      &= \sum_i \det(\support_i(P,Q))\mod 2
  \end{align}
\end{proof}

  To use the local support, we think of it as a dimension 2 binary column vector of dimension 2 binary row vectors (and thus all addition below is assumed to be binary),
  \begin{align}
      \support_i(P,Q) = \begin{pmatrix} \label{eq:lsm_vec}
    v^Z_i  &
    v^X_i 
\end{pmatrix}_i,
\end{align}
where we also define $v^Y_i = v^Z_i + v^X_i$ and the constant dimension 2 vectors $\underbar x = (0, 1)^\top, \underbar y = (1,1)^\top$ and $\underbar z = (1, 0)^\top$. We then consider the action of a TQE gate on the local support matrix when the node $n(P, Q)$ is conjugated by it. Each TQE gate on qubits $i$ and $j$ is defined by its Pauli operator basis for $i$ and $j$. For example $\CX_{ij}$ is $Z$ basis on $i$ and $X$ on $j$ and $\CZ_{ij}$ is $Z$ basis on both. We use the notation for TQE gates $g_{ij} = (\sigma_1,\sigma_2)_{i,j}$ where $\sigma_{k} \in \{X, Y, Z\}$, to represent these bases. As such, it is easy to show that under the action of the TQE gate, the local support matrices are transformed as follows:
\begin{lemma}\label{lemma:lsm_trans}
For a factor node defined by anti-commuting pair $(P,Q)$, and any TQE gate defined by the pair $(\sigma_1, \sigma_2)_{i,j}$ for qubits $i, j$, the action of the TQE gate on the local frames is given by
  \begin{align}\label{eq:lsm_trans}
\support(P,Q)_i \xrightarrow{(\sigma_1, \sigma_2)_{ij}}
\support(P,Q)_i + & 
\lambda(\sigma_1, X)\begin{pmatrix}
    v^{\sigma_2}_j  &
    0 
\end{pmatrix} \nonumber \\
+& \lambda(\sigma_1, Z)\begin{pmatrix}
    0 &
    v^{\sigma_2}_j
\end{pmatrix},
\end{align}

 where $v_i^\sigma$ is as defined in Eq. \eqref{eq:lsm_vec}. The analogous formula also applies for for the $j^{th}$ support matrix, and all others are left the same.
\end{lemma}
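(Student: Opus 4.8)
The plan is to prove \cref{lemma:lsm_trans} by a direct bookkeeping computation: track how conjugation by the TQE gate $g_{ij}=(\sigma_1,\sigma_2)_{i,j}$ changes each commutativity bit $\lambda(P,\tau_k)$ and $\lambda(Q,\tau_k)$ for $\tau\in\{X,Z\}$ and every qubit $k$, and then repackage the result into the matrix form of \eqref{eq:lsm_trans}. Note that anti-commutativity of $(P,Q)$ plays no role here — it is only needed for \cref{lemma:sumDet} — so the computation may be carried out for an arbitrary pair of Paulis.

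First I would record the explicit conjugation action of a TQE gate on the Pauli group. Since $g_{ij}$ is a Clifford supported on qubits $i,j$, it suffices to check on the generators $\{X_k,Z_k\}_k$: this shows that conjugation by $g_{ij}$ fixes every tensor factor of a Pauli $R$ except on qubits $i$ and $j$, where, up to an irrelevant overall phase,
\[
R_i \;\longmapsto\; R_i\cdot\sigma_1^{\,\lambda(R_j,\sigma_2)},
\qquad
R_j \;\longmapsto\; R_j\cdot\sigma_2^{\,\lambda(R_i,\sigma_1)}.
\]
(For $(\sigma_1,\sigma_2)=(Z,X)$ this reproduces $\CX_{ij}$, and for $(Z,Z)$ it reproduces $\CZ_{ij}$.) The inverse $g_{ij}^{-1}$ obeys a rule of the same structural form — indeed every TQE gate is an involution up to global phase — so it is immaterial whether ``conjugating the node by $g_{ij}$'' is read as $g_{ij}Rg_{ij}^\dagger$ or as $g_{ij}^\dagger R g_{ij}=\fwdAction{\frameof{g_{ij}}}(R)$. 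Writing $P'=\fwdAction{\frameof{g_{ij}}}(P)$ and $Q'=\fwdAction{\frameof{g_{ij}}}(Q)$, the rule immediately gives $P'_k=P_k$ and $Q'_k=Q_k$ for all $k\notin\{i,j\}$, hence $\support_k(P',Q')=\support_k(P,Q)$ for those $k$, which settles the ``all others are left the same'' clause.

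For $k=i$ I would use $\mathbb{F}_2$-bilinearity of $\lambda$ with respect to Pauli multiplication modulo phase — that is, $\lambda(R_1R_2,S)=\lambda(R_1,S)+\lambda(R_2,S)\bmod 2$, which is immediate from the definition of $\lambda$ — together with $\lambda(P,\tau_i)=\lambda(P_i,\tau)$. These give, for $\tau\in\{X,Z\}$,
\[
\lambda(P',\tau_i)\;=\;\lambda(P_i,\tau)\;+\;\lambda(P_j,\sigma_2)\,\lambda(\sigma_1,\tau),
\]
and the same with $P$ replaced by $Q$. Thus the $\tau$-column of $\support_i(P,Q)$, namely $\bigl(\lambda(P,\tau_i),\lambda(Q,\tau_i)\bigr)^{\!\top}$, acquires exactly the increment $\lambda(\sigma_1,\tau)\,v^{\sigma_2}_j$, where $v^{\sigma_2}_j=\bigl(\lambda(P,(\sigma_2)_j),\lambda(Q,(\sigma_2)_j)\bigr)^{\!\top}$ as in \eqref{eq:lsm_vec}. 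Taking $\tau=X$ and $\tau=Z$ and assembling the two columns yields precisely \eqref{eq:lsm_trans}; the $k=j$ statement is obtained by the identical computation with $i\leftrightarrow j$ and $\sigma_1\leftrightarrow\sigma_2$ interchanged.

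The only step that is not pure bookkeeping is establishing the conjugation rule for TQE gates in the second paragraph; everything downstream of that is $\mathbb{F}_2$-linear algebra. I expect the care points to be (i) matching the paper's ordering of the $X$- and $Z$-columns of $\support_i$ from \eqref{eq:lsm_vec}, (ii) the direction of the frame action $\fwdAction{\cdot}$, and (iii) the sign and phase factors accumulated under conjugation — none of which affect any value of $\lambda$, and hence none of which affect the statement.
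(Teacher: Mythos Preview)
Your proposal is correct and follows essentially the same approach as the paper's proof: both establish the conjugation action of the TQE gate on single-qubit generators (the paper writes $\fwdAction{F^{g_{ij}}}(\sigma_i)=\sigma_i\cdot(\sigma_2)_j^{\lambda(\sigma_1,\sigma_i)}$, equivalent to your per-factor rule), then use $\mathbb{F}_2$-bilinearity of $\lambda$ to read off the change in each entry of $\support_i(P,Q)$. The only cosmetic difference is that the paper moves the conjugation onto the generator via self-inverseness, writing $\lambda(\fwdAction{F^{g_{ij}}}(P),X_i)=\lambda(P,\fwdAction{F^{g_{ij}}}(X_i))$, whereas you track the $i$th tensor factor of $P'$ directly; the resulting identity is the same.
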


\begin{proof}
    Note conjugation of a single-qubit Pauli operator $\sigma_i$ by TQE gate $g_{ij} = (\sigma_1, \sigma_2)_{ij}$ is given by
    \begin{align}
      \fwdAction F^{g_{ij}}(\sigma_i) = \sigma_i * (\sigma_2)_j^{\lambda(\sigma_1, \sigma_i)},
    \end{align}
    and similarly for $\sigma_j$. Noting that $g_{ij}$ is self-inverse and using the properties of $\lambda$~\companion, we find that
    \begin{align}
      \lambda(\fwdAction F^{g_{ij}}(P), X_i) =& \lambda(P, \fwdAction F^{g_{ij}}(X_i)) \nonumber \\
      =& \lambda(P, X_i) + \lambda(\sigma_1, X) \lambda(P, (\sigma_2)_j),
    \end{align}
    and likewise for $Q$, and $Z_i$, Therefore when put altogether we find Eq. \eqref{eq:lsm_trans}. 
\end{proof}
Note that Eq. \eqref{eq:lsm_trans} is linear and the $2 \times 2$ matrix we add to the local support matrix as a result of the transformation is of weak support, i.e. zero determinate. It should also be clear that for every determinate 1 local support matrix, i.e. such as when a qubit $j$ has strong support, each constant vector $\underbar x, \underbar y$ and $\underbar z$ is represented as one of the column vectors $v_j^X, v_j^Y$ and $v_j^Z$. Thus a corollary of Lemma~\ref{lemma:lsm_trans} is one can add, via the action of a TQE gate, any weak support matrix to any local support matrix on qubit $i$ using another qubit $j$ which has strong support, which by virtue of Lemma~\ref{lemma:sumDet}, there is alway at lease one such qubit. By the same lemma, if qubits $i$ and $j$ have strong support, reducing one to weak support must also reduce the other to weak support, but we can not fully remove support from either because the matrix we add to the $i (j)$ support matrix is always determinate 0. By simply counting, we find that there are six possible determinate 0 matrices which can reduce a given determinate 1 matrix to determinate 0 under addition, and thus six TQE gates which can reduce a pair of strong support qubits to weak support qubits. Likewise, there is trivially only one determinate 0 matrix which can reduce itself to zero under addition, and thus exactly one TQE gate which can reduce a weak support qubit to no support using a strong support qubit (which must remain strong by virtue of Lemma~\ref{lemma:sumDet}).\footnote{There do exists cases where a TQE gate can reduce two qubits with weak support to one with weak support, but it is not guaranteed.}  

\end{ExtendedVersion}
\begin{ShortVersion}
%We also argue for any factor node and pair of qubits, there exists six TQE gates to reduce two strong supports to two weak supports and exactly one to reduce one strong and one weak support to no support on the weak support qubit.\footnote{We note here that there does exists cases where a TQE gate can reduce two qubits with weak support to one qubit with weak support, but it is not guaranteed in all cases, so we ignore this possibility without violating the promises made by the search functions.} 
In addition, there exist six TQE gates to reduce any two qubits with strong support to two with weak support, and exactly one to reduce one strong and one weak support to no support on the weak support qubit.\footnote{There do exists cases where a TQE gate can reduce two qubits with weak support to one with weak support, but it is not guaranteed.}
\end{ShortVersion}
%Reducing a Pauli factor node is then the process of reducing all strong support to a single qubit (which will constitute the final qubit on which the node is reduced to) and reducing all weak support (included those which were formerly strong support) to no support. This results in the definition of $\nodeCost$,
Reducing a Pauli factor node is then the process of reducing all strong support to a single qubit (the final qubit to which the node is reduced) and eliminating all weak support. Thus
\begin{align}
\nodeCost(n(P,Q)) =& \left(\sum_i \det(\support_i(P,Q)) - 1\right)/2
\nonumber\\ 
&+ \left( \sum_i \left[\support_i(P,Q) \neq 0\right] - 1 \right)
\end{align}
\ifExtended{
The first term represents the number of TQE gates to reduce strong support to one and the second term is the number of TQE gates to reduce all weak support to no support.
%From this discussion, the $\reduceNode$ function follows immediately\cite{PaykinSchmitz2023PCOAST}.
The implementation of $\reduceNode$ follows as an application of Eq. \eqref{eq:lsm_trans} to all strong-strong and strong-weak qubit pair such that the gate action results in weak-weak and stong-no support, respectively.
}{
The implementation of $\reduceNode$ follows from the discussion in \cite{PaykinSchmitz2023PCOAST} around the selection of gates which reduce strong-strong and strong-weak qubit pairs.
}

%The final two search functions are straightforward. 
$\gateCost$ is the average change in $\nodeCost$ over all remaining nodes.%
%\footnote{More nuanced methods are the contents of future work as discussed Section~\ref{sec:conc}. However one simple modification is to give greater weight to those nodes which are free of dependencies, as studied in Section~\ref{sec:results}. The weighting used there is to weight free nodes by a factor proportional to the total number of nodes in the graph divided by the total number of free nodes. This allows the weighting to have diminishing impact as the synthesis proceeds.\label{foot:rflag}} 
\footnote{One simple modification to $\gateCost$ is to give greater weight to dependency-free nodes, as discussed in Section~\ref{sec:results}. There, we weight free nodes proportionally to the total number of nodes in the graph divided by the number of free nodes, meaning that weighting has diminishing impact as synthesis proceeds. \label{foot:rflag}}
We use average change as opposed to average absolute cost as it is faster to compute. $\addGate$ produces the equivalent change to the \PCAST graph.

%All other versions of the search functions are variations on the basic version. In the case of the Intel Quantum SDK, other considerations include:
The Intel Quantum SDK implements a variation of the basic cost functions, incorporating the following:
\begin{itemize}
    \item A ``schedule'' search maintains an approximate ASAP scheduling of the circuit. $\gateCost$ includes a parallelization credit with a tuneable parameter~\cite{Schmitz2021}.\footnote{The value of this is a fine-tuning modification studied in \cref{sec:results}. \label{foot:Cflag}}
    
    \item A ``native gate'' search targeting a particular gate set, which informs circuit cost. In the case of the the Intel Quantum SDK, the set is $\{\CZ, \RXY, \MeasZ, \PrepZ\}$. 

\begin{ExtendedVersion}
    As such, we want to avoid the generation of $\RZ$ gates, and use the known decomposition of non-native gates, especially TQE gates, at the point of synthesis. This requires a modification primarily to the search function $\addGate$. $\addGate$ replaces a TQE gate with an asymmetric equivalent\footnotemark sequence of gates, namely at most one single-qubit Clifford on each qubit, followed by a $\CZ$. \footnotetext{Equivalence is with respects to the Pauli Frame equivalence classes of equally-entangled frames; see Ref. \cite{Schmitz2021}.} The asymmetric equivalent provides the same reduction for $\nodeCost$, but also guarantees that for the sake of reducing multi-qubit rotation nodes, the node will not be translated to the circuit as an $\RZ$ as it commutes with the $\CZ$. Though this reduces the chances that synthesis results in an $\RZ$ gate, it doesn't remove the possibility entirely. So when an $\RZ$ is passed to $\addGate$, the following decomposition is used:
    \begin{align}
    \RZ_q(\theta) = \X_q \RXY_q(\pi, -\theta/2)
    \end{align}
    where the $\RXY$ is added to the circuit and the $X$ is added to graph to be eventually absorbed by the terminating frame. Similarly, when a gate-equivalent Meas or Prep is passed to $\addGate$ which is not in the $Z$ basis, only half of the appropriate Clifford conjugation along with the $\MeasZ$ or $\PrepZ$ gate is added to the circuit while the second half is added to the graph and terminating frame.
\end{ExtendedVersion}

    \item $\gateCost$ includes a cost penalty for the action of the gate on the terminating frame. 
\begin{ExtendedVersion}
    An average change in cost over the rows of the Pauli frame--interpreted as local frame nodes--is added to penalize the generation of costly Pauli frames to synthesize (see \cref{sec:cliffordSynthesis}
\end{ExtendedVersion}

    %\item All search function described here can be used with the stabilizer search template to implement the \emph{Stabilizer Search Algorithm} of \companion as described therein.
    \item The search functions described here are also used to implement the stabilizer search algorithm of \companion%
    \ifExtended{ via the \emph{stabilizer search template}}{}.

\end{itemize}

%\subsubsection{Adaptation of Ultra-Greedy Search for Finalizing the Circuit}
\subsubsection{Adapting Ultra-Greedy Search to finalize the circuit}
\label{sec:cliffordSynthesis}

%After the synthesis of the non-Clifford nodes, the remaining logic $(G', F')$ as returned by \cref{algo:greedy} is either such that $G'$ is empty and $F'$ is non-trivial in the ``hold'' case or $G'$ contains only mutually commuting measurements and $F'$ can be discarded in the ``release'' case.
After synthesizing the non-Clifford nodes, \cref{algo:greedy} leaves us with $(G', F')$, where either (\hold) $G'$ is empty and $F'$ is non-trivial, or (\release) $G'$ contains only mutually commuting measurements and $F'$ can be discarded.

%In the \hold case, methods exist for synthesizing a circuit for a Pauli frame/tableau\cite{Aaronson2004, Maslov2018}. However, we leverage \cref{algo:greedy} for this task because doing so will automatically take into account all the considerations of the search functions.
In the \hold case, while methods exist for synthesizing a circuit for a Pauli frame/tableau~\cite{Aaronson2004, Maslov2018}, we leverage the ultra-greedy algorithm to automatically take into account all search function considerations.
%As such, Pauli frame synthesis is described exactly as that of \cref{algo:greedy} with the rows of the Pauli frame re-interpreted as a set of mutually independent local-frame factor nodes, and a zero-cost node being translated to the single-qubit Clifford it represents.
As such, Pauli frame synthesis is implemented by \cref{algo:greedy} with the rows of the Pauli frame re-interpreted as a set of mutually independent local-frame factor nodes, and a zero-cost node replacing the Clifford it represents.
With a few additional promises made by the search functions (see \companion for details), it is guaranteed that when one local frame is reduced, it is independent of all other local frames. %(i.e. single-qubit operators for the real qubit the local frame is reduced to, cannot appear in any other local frame), as conjugation by Clifford gates cannot change the commutativity relations between local frames. 
%We then consider if the local frame is reduced to the same or a different qubit than its row index. When different, this represents a swapping of the qubit indices. In some cases
%this swapping is considered free as it can be implemented virtually. However, there are circumstances where this qubit swapping must be performed by gates, so future versions may include the swap cost.
If the local frame is reduced to a different qubit than its row index, it is implemented by a swap. In some cases
this swap can be implemented virtually and thus considered free, but future versions may include the swap cost when qubit swapping must be performed by gates.

In the \release case, we can again adapt \cref{algo:greedy} for the measurements in $G'$ by replacing the search functions with the stabilizer-search-templated versions and adding any measurement space function generated by the search. As measurement space functions are (binary) linear, they add no more than a polynomial-time cost (in number of measurements) to the overall quantum-classical computation \companion.\footnote{Measurement space functions are handled automatically in the Intel Quantum SDK: classical instructions are generated in the LLVM IR to appropriately map fresh measurement outcomes to classical variables.}

\begin{ExtendedVersion}
\subsubsection{Time Complexity of the Ultra-Greedy Search Algorithm}
\label{sec:timeComplexity}

As argued and demonstrated in Ref. \cite{Schmitz2021}, the time-complexity of the ultra-greedy search algorithm is  $\mathcal{O}(N^3|G|^2)$, where $N$ is the number of qubits and $|G|$ is the number of nodes in the graph; we expect roughly the same scaling or better here. This scaling can be dramatically improved (or made worse) based on the complexity of implementing the search functions where the primary driver of that cost is often $\gateCost$. If $\gateCost$ function complexity is reduced to $\mathcal O(1)$, we find a floor for ultra greedy search of $\Omega(N|G|)$) without affecting the two most important properties, algorithm termination and outcome correctness, albeit at the cost of circuit performance.
\end{ExtendedVersion}

\section{Evaluation}
\label{sec:results}
%\input{src/tables/counts}
%\input{src/tables/counts_v2}
%\input{src/tables/counts_v3}
%\input{src/tables/counts_v4}
% Please add the following required packages to your document preamble:
% \usepackage{booktabs}
% \usepackage{graphicx}
% \usepackage[table,xcdraw]{xcolor}
% If you use beamer only pass "xcolor=table" option, i.e. \documentclass[xcolor=table]{beamer}
\begin{table*}[]
\caption{Results for total gate count, two-qubit gates, and depth compared with Qiskit and \tket. The coloring indicates the range from best (dark green, bold) to worst (white).}
\label{tab:gate_counts_combined}
\resizebox{\textwidth}{!}{%
\begin{tabular}{@{}ccccccccclcccccclcccccc@{}}
\toprule
\multicolumn{1}{l}{} &
  \multicolumn{1}{l}{} &
  \multicolumn{1}{l}{} &
  \multicolumn{6}{c}{\textbf{Gate Count}} &
   &
  \multicolumn{6}{c}{\textbf{2Q Gates}} &
   &
  \multicolumn{6}{c}{\textbf{Depth}} \\
  \cmidrule(lr){4-9}
  \cmidrule(lr){11-16}
  \cmidrule(lr){18-23}
\textbf{Benchmark} &
  \textbf{N} &
  \textbf{} &
  \textbf{\PCOASTone} &
  \textbf{\PCOASTFT} &
  \textbf{\qiskittwo} &
  \textbf{\qiskitthree} &
  \textbf{\tketone} &
  \textbf{\tkettwo} &
   &
  \textbf{\PCOASTone} &
  \textbf{\PCOASTFT} &
  \textbf{\qiskittwo} &
  \textbf{\qiskitthree} &
  \textbf{\tketone} &
  \textbf{\tkettwo} &
   &
  \textbf{\PCOASTone} &
  \textbf{\PCOASTFT} &
  \textbf{\qiskittwo} &
  \textbf{\qiskitthree} &
  \textbf{\tketone} &
  \textbf{\tkettwo} \\
  \midrule
\textbf{H2\_BK} &
  4 &
   &
  \cellcolor[HTML]{00B85C}\textbf{47} &
  \cellcolor[HTML]{00B85C}\textbf{47} &
  \cellcolor[HTML]{FFFFFF}117 &
  \cellcolor[HTML]{FFFFFF}117 &
  \cellcolor[HTML]{CEEACF}79 &
  \cellcolor[HTML]{BBE5C4}76 &
   &
  \cellcolor[HTML]{00B85C}\textbf{13} &
  \cellcolor[HTML]{00B85C}\textbf{13} &
  \cellcolor[HTML]{FFFFFF}38 &
  \cellcolor[HTML]{F3F8F0}33 &
  \cellcolor[HTML]{5ACE8E}18 &
  \cellcolor[HTML]{48C984}17 &
   &
  \cellcolor[HTML]{00B85C}\textbf{31} &
  \cellcolor[HTML]{00B85C}\textbf{31} &
  \cellcolor[HTML]{FFFFFF}95 &
  \cellcolor[HTML]{F6FAF4}86 &
  \cellcolor[HTML]{E4F0DD}66 &
  \cellcolor[HTML]{CCE9CE}60 \\
\textbf{H2\_JW} &
  4 &
   &
  \cellcolor[HTML]{3EC77E}58 &
  \cellcolor[HTML]{00B85C}\textbf{36} &
  \cellcolor[HTML]{FFFFFF}195 &
  \cellcolor[HTML]{F4F8F1}165 &
  \cellcolor[HTML]{9CDEB3}91 &
  \cellcolor[HTML]{7FD7A3}81 &
   &
  \cellcolor[HTML]{47C984}19 &
  \cellcolor[HTML]{00B85C}\textbf{12} &
  \cellcolor[HTML]{FFFFFF}56 &
  \cellcolor[HTML]{E9F3E4}40 &
  \cellcolor[HTML]{52CC89}20 &
  \cellcolor[HTML]{29C272}16 &
   &
  \cellcolor[HTML]{11BC65}25 &
  \cellcolor[HTML]{00B85C}\textbf{21} &
  \cellcolor[HTML]{FFFFFF}126 &
  \cellcolor[HTML]{F2F7EE}103 &
  \cellcolor[HTML]{BDE6C5}65 &
  \cellcolor[HTML]{8EDAAB}54 \\
\textbf{H2\_PM} &
  4 &
   &
  \cellcolor[HTML]{00B85C}\textbf{53} &
  \cellcolor[HTML]{00B85C}\textbf{53} &
  \cellcolor[HTML]{FFFFFF}119 &
  \cellcolor[HTML]{F3F8F0}106 &
  \cellcolor[HTML]{82D7A4}72 &
  \cellcolor[HTML]{7BD5A0}71 &
   &
  \cellcolor[HTML]{00B85C}\textbf{15} &
  \cellcolor[HTML]{00B85C}\textbf{15} &
  \cellcolor[HTML]{FFFFFF}38 &
  \cellcolor[HTML]{EAF3E5}30 &
  \cellcolor[HTML]{13BC66}16 &
  \cellcolor[HTML]{00B85C}\textbf{15} &
   &
  \cellcolor[HTML]{00B85C}\textbf{25} &
  \cellcolor[HTML]{00B85C}\textbf{25} &
  \cellcolor[HTML]{FFFFFF}99 &
  \cellcolor[HTML]{F0F6EB}80 &
  \cellcolor[HTML]{E2EFDA}63 &
  \cellcolor[HTML]{C9E9CC}58 \\
\textbf{LiH\_BK} &
  12 &
   &
  \cellcolor[HTML]{0FBB64}3143 &
  \cellcolor[HTML]{00B85C}\textbf{2372} &
  \cellcolor[HTML]{FFFFFF}24985 &
  \cellcolor[HTML]{F4F9F2}21080 &
  \cellcolor[HTML]{C0E6C7}12017 &
  \cellcolor[HTML]{A8E0B9}10786 &
   &
  \cellcolor[HTML]{14BC67}1308 &
  \cellcolor[HTML]{00B85C}\textbf{964} &
  \cellcolor[HTML]{FFFFFF}8680 &
  \cellcolor[HTML]{F1F7ED}6834 &
  \cellcolor[HTML]{8DDAAB}3385 &
  \cellcolor[HTML]{6ED399}2858 &
   &
  \cellcolor[HTML]{03B85D}1157 &
  \cellcolor[HTML]{00B85C}\textbf{1027} &
  \cellcolor[HTML]{FFFFFF}18549 &
  \cellcolor[HTML]{F3F8F0}15111 &
  \cellcolor[HTML]{E2EFDA}9799 &
  \cellcolor[HTML]{C5E7C9}8665 \\
\textbf{LiH\_JW} &
  12 &
   &
  \cellcolor[HTML]{0CBB63}3173 &
  \cellcolor[HTML]{00B85C}\textbf{2647} &
  \cellcolor[HTML]{FFFFFF}21239 &
  \cellcolor[HTML]{FBFDFA}20083 &
  \cellcolor[HTML]{A5E0B8}9471 &
  \cellcolor[HTML]{87D9A7}8225 &
   &
  \cellcolor[HTML]{0FBB64}1295 &
  \cellcolor[HTML]{00B85C}\textbf{1062} &
  \cellcolor[HTML]{FFFFFF}8064 &
  \cellcolor[HTML]{F3F8F0}6666 &
  \cellcolor[HTML]{64D093}2616 &
  \cellcolor[HTML]{3FC77F}2040 &
   &
  \cellcolor[HTML]{03B85D}1194 &
  \cellcolor[HTML]{00B85C}\textbf{1070} &
  \cellcolor[HTML]{FFFFFF}17027 &
  \cellcolor[HTML]{F6FAF4}14787 &
  \cellcolor[HTML]{BAE5C3}7654 &
  \cellcolor[HTML]{94DCAE}6310 \\
\textbf{LiH\_PM} &
  12 &
   &
  \cellcolor[HTML]{12BC66}3218 &
  \cellcolor[HTML]{00B85C}\textbf{2402} &
  \cellcolor[HTML]{FFFFFF}22214 &
  \cellcolor[HTML]{FBFCFA}20897 &
  \cellcolor[HTML]{C4E7C9}11030 &
  \cellcolor[HTML]{A8E1BA}9795 &
   &
  \cellcolor[HTML]{1CBE6B}1324 &
  \cellcolor[HTML]{00B85C}\textbf{905} &
  \cellcolor[HTML]{FFFFFF}7640 &
  \cellcolor[HTML]{F8FBF6}6893 &
  \cellcolor[HTML]{92DBAD}3092 &
  \cellcolor[HTML]{6FD39A}2565 &
   &
  \cellcolor[HTML]{05B95F}1182 &
  \cellcolor[HTML]{00B85C}\textbf{999} &
  \cellcolor[HTML]{FFFFFF}16131 &
  \cellcolor[HTML]{F9FBF7}14674 &
  \cellcolor[HTML]{E4F0DD}9263 &
  \cellcolor[HTML]{D1EBD0}8021 \\
\textbf{BeH2\_BK} &
  14 &
   &
  \cellcolor[HTML]{0ABA62}7164 &
  \cellcolor[HTML]{00B85C}\textbf{6016} &
  \cellcolor[HTML]{FFFFFF}54201 &
  \cellcolor[HTML]{F8FBF6}48857 &
  \cellcolor[HTML]{BFE6C6}26465 &
  \cellcolor[HTML]{A7E0B9}23834 &
   &
  \cellcolor[HTML]{11BC65}3049 &
  \cellcolor[HTML]{00B85C}\textbf{2407} &
  \cellcolor[HTML]{FFFFFF}18796 &
  \cellcolor[HTML]{F4F9F2}15936 &
  \cellcolor[HTML]{8BDAAA}7482 &
  \cellcolor[HTML]{69D196}6240 &
   &
  \cellcolor[HTML]{03B85D}2726 &
  \cellcolor[HTML]{00B85C}\textbf{2430} &
  \cellcolor[HTML]{FFFFFF}39889 &
  \cellcolor[HTML]{F7FAF5}34861 &
  \cellcolor[HTML]{E2EFDB}21702 &
  \cellcolor[HTML]{C8E8CB}19029 \\
\textbf{BeH2\_JW} &
  14 &
   &
  \cellcolor[HTML]{07B960}6812 &
  \cellcolor[HTML]{00B85C}\textbf{5967} &
  \cellcolor[HTML]{FFFFFF}54113 &
  \cellcolor[HTML]{FBFDFA}51123 &
  \cellcolor[HTML]{A6E0B8}23684 &
  \cellcolor[HTML]{83D8A5}19974 &
   &
  \cellcolor[HTML]{0EBB64}2970 &
  \cellcolor[HTML]{00B85C}\textbf{2363} &
  \cellcolor[HTML]{FFFFFF}21072 &
  \cellcolor[HTML]{F4F8F0}17528 &
  \cellcolor[HTML]{68D195}6669 &
  \cellcolor[HTML]{40C780}5044 &
   &
  \cellcolor[HTML]{02B85D}2571 &
  \cellcolor[HTML]{00B85C}\textbf{2372} &
  \cellcolor[HTML]{FFFFFF}44354 &
  \cellcolor[HTML]{F7FAF4}38564 &
  \cellcolor[HTML]{BAE5C4}19706 &
  \cellcolor[HTML]{8EDAAB}15619 \\
\textbf{BeH2\_PM} &
  14 &
   &
  \cellcolor[HTML]{0BBA62}7143 &
  \cellcolor[HTML]{00B85C}\textbf{5806} &
  \cellcolor[HTML]{FFFFFF}59164 &
  \cellcolor[HTML]{FBFCFA}55629 &
  \cellcolor[HTML]{C5E7C9}29069 &
  \cellcolor[HTML]{AAE1BB}25923 &
   &
  \cellcolor[HTML]{13BC66}3051 &
  \cellcolor[HTML]{00B85C}\textbf{2260} &
  \cellcolor[HTML]{FFFFFF}20392 &
  \cellcolor[HTML]{F8FBF6}18418 &
  \cellcolor[HTML]{96DCB0}8309 &
  \cellcolor[HTML]{73D49C}6913 &
   &
  \cellcolor[HTML]{03B85D}2646 &
  \cellcolor[HTML]{00B85C}\textbf{2340} &
  \cellcolor[HTML]{FFFFFF}42891 &
  \cellcolor[HTML]{F9FBF7}39019 &
  \cellcolor[HTML]{E4F0DD}24475 &
  \cellcolor[HTML]{D3EBD1}21316 \\
\textbf{qft\_5} &
  5 &
   &
  \cellcolor[HTML]{22C06F}68 &
  \cellcolor[HTML]{00B85C}\textbf{65} &
  \cellcolor[HTML]{E5F1DE}87 &
  \cellcolor[HTML]{2EC375}69 &
  \cellcolor[HTML]{FFFFFF}104 &
  \cellcolor[HTML]{E4F0DC}86 &
   &
  \cellcolor[HTML]{FFFFFF}26 &
  \cellcolor[HTML]{F5F9F2}25 &
  \cellcolor[HTML]{FFFFFF}26 &
  \cellcolor[HTML]{00B85C}\textbf{20} &
  \cellcolor[HTML]{FFFFFF}26 &
  \cellcolor[HTML]{00B85C}\textbf{20} &
   &
  \cellcolor[HTML]{5ACE8E}31 &
  \cellcolor[HTML]{00B85C}\textbf{25} &
  \cellcolor[HTML]{F1F7ED}48 &
  \cellcolor[HTML]{E3F0DC}41 &
  \cellcolor[HTML]{FFFFFF}55 &
  \cellcolor[HTML]{F3F8F0}49 \\
\textbf{qft\_10} &
  10 &
   &
  \cellcolor[HTML]{72D39B}262 &
  \cellcolor[HTML]{00B85C}\textbf{210} &
  \cellcolor[HTML]{E7F2E1}334 &
  \cellcolor[HTML]{ADE2BC}289 &
  \cellcolor[HTML]{FFFFFF}416 &
  \cellcolor[HTML]{F2F8EE}371 &
   &
  \cellcolor[HTML]{FFFFFF}112 &
  \cellcolor[HTML]{00B85C}\textbf{89} &
  \cellcolor[HTML]{EDF5E8}105 &
  \cellcolor[HTML]{13BC66}90 &
  \cellcolor[HTML]{EDF5E8}105 &
  \cellcolor[HTML]{13BC66}90 &
   &
  \cellcolor[HTML]{63D093}81 &
  \cellcolor[HTML]{00B85C}\textbf{70} &
  \cellcolor[HTML]{EBF4E5}103 &
  \cellcolor[HTML]{E3EFDB}96 &
  \cellcolor[HTML]{FFFFFF}120 &
  \cellcolor[HTML]{F8FBF6}114 \\
\textbf{qft\_20} &
  20 &
   &
  \cellcolor[HTML]{7ED6A2}922 &
  \cellcolor[HTML]{00B85C}\textbf{645} &
  \cellcolor[HTML]{E9F3E3}1269 &
  \cellcolor[HTML]{E4F0DD}1179 &
  \cellcolor[HTML]{FFFFFF}1631 &
  \cellcolor[HTML]{F9FCF8}1541 &
   &
  \cellcolor[HTML]{FFFFFF}440 &
  \cellcolor[HTML]{00B85C}\textbf{289} &
  \cellcolor[HTML]{F3F8F0}410 &
  \cellcolor[HTML]{E7F2E1}380 &
  \cellcolor[HTML]{F3F8F0}410 &
  \cellcolor[HTML]{E7F2E1}380 &
   &
  \cellcolor[HTML]{FFFFFF}270 &
  \cellcolor[HTML]{4DCA87}217 &
  \cellcolor[HTML]{2AC273}212 &
  \cellcolor[HTML]{00B85C}\textbf{206} &
  \cellcolor[HTML]{ECF5E7}250 &
  \cellcolor[HTML]{E7F2E0}244 \\
\textbf{qft\_30} &
  30 &
   &
  \cellcolor[HTML]{BFE6C6}2281 &
  \cellcolor[HTML]{00B85C}\textbf{1280} &
  \cellcolor[HTML]{EAF3E4}2804 &
  \cellcolor[HTML]{E6F1DF}2642 &
  \cellcolor[HTML]{FFFFFF}3646 &
  \cellcolor[HTML]{FAFCF9}3466 &
   &
  \cellcolor[HTML]{FFFFFF}1119 &
  \cellcolor[HTML]{00B85C}\textbf{589} &
  \cellcolor[HTML]{E8F2E2}915 &
  \cellcolor[HTML]{E2EFDA}858 &
  \cellcolor[HTML]{E8F2E2}915 &
  \cellcolor[HTML]{E2EFDA}858 &
   &
  \cellcolor[HTML]{FFFFFF}491 &
  \cellcolor[HTML]{F3F8EF}455 &
  \cellcolor[HTML]{0FBB64}322 &
  \cellcolor[HTML]{00B85C}\textbf{316} &
  \cellcolor[HTML]{A5E0B8}380 &
  \cellcolor[HTML]{95DCAF}374 \\
\textbf{qft\_50} &
  50 &
   &
  \cellcolor[HTML]{E3EFDB}6348 &
  \cellcolor[HTML]{00B85C}\textbf{3150} &
  \cellcolor[HTML]{F0F6EC}7674 &
  \cellcolor[HTML]{C9E9CC}5862 &
  \cellcolor[HTML]{FFFFFF}9236 &
  \cellcolor[HTML]{F0F6EB}7666 &
   &
  \cellcolor[HTML]{FFFFFF}3282 &
  \cellcolor[HTML]{00B85C}\textbf{1489} &
  \cellcolor[HTML]{E6F1DF}2525 &
  \cellcolor[HTML]{67D195}1898 &
  \cellcolor[HTML]{D0EAD0}2315 &
  \cellcolor[HTML]{67D195}1898 &
   &
  \cellcolor[HTML]{FFFFFF}1324 &
  \cellcolor[HTML]{FFFFFF}1165 &
  \cellcolor[HTML]{0DBB63}542 &
  \cellcolor[HTML]{00B85C}\textbf{536} &
  \cellcolor[HTML]{E2EFDA}640 &
  \cellcolor[HTML]{DBEDD6}634 \\
\textbf{grover\_5} &
  5 &
   &
  \cellcolor[HTML]{00B85C}\textbf{37} &
  \cellcolor[HTML]{00B85C}\textbf{37} &
  \cellcolor[HTML]{E7F2E1}52 &
  \cellcolor[HTML]{E7F2E1}52 &
  \cellcolor[HTML]{FFFFFF}62 &
  \cellcolor[HTML]{FFFFFF}62 &
   &
  \cellcolor[HTML]{00B85C}\textbf{13} &
  \cellcolor[HTML]{00B85C}\textbf{13} &
  \cellcolor[HTML]{00B85C}\textbf{13} &
  \cellcolor[HTML]{00B85C}\textbf{13} &
  \cellcolor[HTML]{00B85C}\textbf{13} &
  \cellcolor[HTML]{00B85C}\textbf{13} &
   &
  \cellcolor[HTML]{00B85C}\textbf{6} &
  \cellcolor[HTML]{00B85C}\textbf{6} &
  \cellcolor[HTML]{F1F7ED}35 &
  \cellcolor[HTML]{F1F7ED}35 &
  \cellcolor[HTML]{FFFFFF}44 &
  \cellcolor[HTML]{FFFFFF}44 \\
\textbf{grover\_10} &
  10 &
   &
  \cellcolor[HTML]{5ACE8E}162 &
  \cellcolor[HTML]{00B85C}\textbf{145} &
  \cellcolor[HTML]{DFEED8}187 &
  \cellcolor[HTML]{DFEED8}187 &
  \cellcolor[HTML]{FFFFFF}230 &
  \cellcolor[HTML]{FFFFFF}230 &
   &
  \cellcolor[HTML]{F0F7EC}56 &
  \cellcolor[HTML]{00B85C}\textbf{48} &
  \cellcolor[HTML]{38C57B}49 &
  \cellcolor[HTML]{38C57B}49 &
  \cellcolor[HTML]{38C57B}49 &
  \cellcolor[HTML]{38C57B}49 &
   &
  \cellcolor[HTML]{00B85C}\textbf{11} &
  \cellcolor[HTML]{00B85C}\textbf{11} &
  \cellcolor[HTML]{EFF6EB}94 &
  \cellcolor[HTML]{EFF6EB}94 &
  \cellcolor[HTML]{FFFFFF}124 &
  \cellcolor[HTML]{FFFFFF}124 \\
\textbf{grover\_30} &
  30 &
   &
  \cellcolor[HTML]{29C272}530 &
  \cellcolor[HTML]{00B85C}\textbf{504} &
  \cellcolor[HTML]{D2EBD1}637 &
  \cellcolor[HTML]{D2EBD1}637 &
  \cellcolor[HTML]{FFFFFF}790 &
  \cellcolor[HTML]{FFFFFF}790 &
   &
  \cellcolor[HTML]{F6FAF4}185 &
  \cellcolor[HTML]{00B85C}\textbf{160} &
  \cellcolor[HTML]{A2DFB6}169 &
  \cellcolor[HTML]{A2DFB6}169 &
  \cellcolor[HTML]{A2DFB6}169 &
  \cellcolor[HTML]{A2DFB6}169 &
   &
  \cellcolor[HTML]{00B85C}\textbf{26} &
  \cellcolor[HTML]{00B85C}\textbf{26} &
  \cellcolor[HTML]{EFF6EB}274 &
  \cellcolor[HTML]{EFF6EB}274 &
  \cellcolor[HTML]{FFFFFF}364 &
  \cellcolor[HTML]{FFFFFF}364 \\
\textbf{grover\_80} &
  80 &
   &
  \cellcolor[HTML]{3EC77E}1545 &
  \cellcolor[HTML]{00B85C}\textbf{1442} &
  \cellcolor[HTML]{C1E7C7}1762 &
  \cellcolor[HTML]{C1E7C7}1762 &
  \cellcolor[HTML]{F1F7EE}2190 &
  \cellcolor[HTML]{F1F7EE}2190 &
   &
  \cellcolor[HTML]{FFFFFF}559 &
  \cellcolor[HTML]{00B85C}\textbf{467} &
  \cellcolor[HTML]{09BA61}469 &
  \cellcolor[HTML]{09BA61}469 &
  \cellcolor[HTML]{09BA61}469 &
  \cellcolor[HTML]{09BA61}469 &
   &
  \cellcolor[HTML]{00B85C}\textbf{63} &
  \cellcolor[HTML]{00B85C}\textbf{63} &
  \cellcolor[HTML]{EFF6EB}721 &
  \cellcolor[HTML]{EFF6EB}721 &
  \cellcolor[HTML]{FFFFFF}960 &
  \cellcolor[HTML]{FFFFFF}960 \\
\textbf{grover\_100} &
  100 &
   &
  \cellcolor[HTML]{3AC67C}1936 &
  \cellcolor[HTML]{00B85C}\textbf{1814} &
  \cellcolor[HTML]{C0E6C7}2212 &
  \cellcolor[HTML]{C0E6C7}2212 &
  \cellcolor[HTML]{FFFFFF}2750 &
  \cellcolor[HTML]{FFFFFF}2750 &
   &
  \cellcolor[HTML]{FFFFFF}695 &
  \cellcolor[HTML]{00B85C}\textbf{587} &
  \cellcolor[HTML]{08BA60}589 &
  \cellcolor[HTML]{08BA60}589 &
  \cellcolor[HTML]{08BA60}589 &
  \cellcolor[HTML]{08BA60}589 &
   &
  \cellcolor[HTML]{00B85C}\textbf{78} &
  \cellcolor[HTML]{00B85C}\textbf{78} &
  \cellcolor[HTML]{E2EFDA}901 &
  \cellcolor[HTML]{E2EFDA}901 &
  \cellcolor[HTML]{FFFFFF}1200 &
  \cellcolor[HTML]{FFFFFF}1200 \\
\textbf{hea5\_l\_20} &
  5 &
   &
  \cellcolor[HTML]{F8FBF7}341 &
  \cellcolor[HTML]{00B85C}\textbf{307} &
  \cellcolor[HTML]{E2EFDA}326 &
  \cellcolor[HTML]{E2EFDA}326 &
  \cellcolor[HTML]{FFFFFF}345 &
  \cellcolor[HTML]{FFFFFF}345 &
   &
  \cellcolor[HTML]{F8FBF6}93 &
  \cellcolor[HTML]{00B85C}\textbf{80} &
  \cellcolor[HTML]{00B85C}\textbf{80} &
  \cellcolor[HTML]{00B85C}\textbf{80} &
  \cellcolor[HTML]{00B85C}\textbf{80} &
  \cellcolor[HTML]{00B85C}\textbf{80} &
   &
  \cellcolor[HTML]{E2EFDA}114 &
  \cellcolor[HTML]{E2EFDA}108 &
  \cellcolor[HTML]{00B85C}\textbf{106} &
  \cellcolor[HTML]{00B85C}\textbf{106} &
  \cellcolor[HTML]{00B85C}\textbf{106} &
  \cellcolor[HTML]{00B85C}\textbf{106} \\
\textbf{hea5\_c\_20} &
  5 &
   &
  \cellcolor[HTML]{FFFFFF}473 &
  \cellcolor[HTML]{00B85C}\textbf{380} &
  \cellcolor[HTML]{79D59F}405 &
  \cellcolor[HTML]{79D59F}405 &
  \cellcolor[HTML]{79D59F}405 &
  \cellcolor[HTML]{79D59F}405 &
   &
  \cellcolor[HTML]{FFFFFF}184 &
  \cellcolor[HTML]{86D8A7}125 &
  \cellcolor[HTML]{00B85C}\textbf{100} &
  \cellcolor[HTML]{00B85C}\textbf{100} &
  \cellcolor[HTML]{00B85C}\textbf{100} &
  \cellcolor[HTML]{00B85C}\textbf{100} &
   &
  \cellcolor[HTML]{DDEDD7}221 &
  \cellcolor[HTML]{00B85C}\textbf{175} &
  \cellcolor[HTML]{E2EFDA}222 &
  \cellcolor[HTML]{E2EFDA}222 &
  \cellcolor[HTML]{E2EFDA}222 &
  \cellcolor[HTML]{E2EFDA}222 \\
\textbf{hea5\_f\_20} &
  5 &
   &
  \cellcolor[HTML]{15BD68}326 &
  \cellcolor[HTML]{00B85C}\textbf{307} &
  \cellcolor[HTML]{9DDEB4}446 &
  \cellcolor[HTML]{9DDEB4}446 &
  \cellcolor[HTML]{FFFFFF}705 &
  \cellcolor[HTML]{FFFFFF}705 &
   &
  \cellcolor[HTML]{12BC66}85 &
  \cellcolor[HTML]{00B85C}\textbf{80} &
  \cellcolor[HTML]{FFFFFF}200 &
  \cellcolor[HTML]{FFFFFF}200 &
  \cellcolor[HTML]{FFFFFF}200 &
  \cellcolor[HTML]{FFFFFF}200 &
   &
  \cellcolor[HTML]{08BA60}111 &
  \cellcolor[HTML]{00B85C}\textbf{108} &
  \cellcolor[HTML]{68D196}144 &
  \cellcolor[HTML]{68D196}144 &
  \cellcolor[HTML]{FFFFFF}264 &
  \cellcolor[HTML]{FFFFFF}264 \\
\textbf{hea10\_l\_40} &
  10 &
   &
  \cellcolor[HTML]{FFFFFF}1591 &
  \cellcolor[HTML]{00B85C}\textbf{1412} &
  \cellcolor[HTML]{62CF92}1451 &
  \cellcolor[HTML]{62CF92}1451 &
  \cellcolor[HTML]{C4E7C9}1490 &
  \cellcolor[HTML]{C4E7C9}1490 &
   &
  \cellcolor[HTML]{FFFFFF}480 &
  \cellcolor[HTML]{00B85C}\textbf{360} &
  \cellcolor[HTML]{00B85C}\textbf{360} &
  \cellcolor[HTML]{00B85C}\textbf{360} &
  \cellcolor[HTML]{00B85C}\textbf{360} &
  \cellcolor[HTML]{00B85C}\textbf{360} &
   &
  \cellcolor[HTML]{FFFFFF}277 &
  \cellcolor[HTML]{0EBB64}218 &
  \cellcolor[HTML]{00B85C}\textbf{216} &
  \cellcolor[HTML]{00B85C}\textbf{216} &
  \cellcolor[HTML]{00B85C}\textbf{216} &
  \cellcolor[HTML]{00B85C}\textbf{216} \\
\textbf{hea10\_c\_40} &
  10 &
   &
  \cellcolor[HTML]{FFFFFF}4348 &
  \cellcolor[HTML]{ACE1BC}2655 &
  \cellcolor[HTML]{00B85C}\textbf{1610} &
  \cellcolor[HTML]{00B85C}\textbf{1610} &
  \cellcolor[HTML]{00B85C}\textbf{1610} &
  \cellcolor[HTML]{00B85C}\textbf{1610} &
   &
  \cellcolor[HTML]{FFFFFF}1749 &
  \cellcolor[HTML]{C1E7C7}977 &
  \cellcolor[HTML]{00B85C}\textbf{400} &
  \cellcolor[HTML]{00B85C}\textbf{400} &
  \cellcolor[HTML]{00B85C}\textbf{400} &
  \cellcolor[HTML]{00B85C}\textbf{400} &
   &
  \cellcolor[HTML]{FFFFFF}924 &
  \cellcolor[HTML]{00B85C}\textbf{659} &
  \cellcolor[HTML]{EDF5E8}842 &
  \cellcolor[HTML]{EDF5E8}842 &
  \cellcolor[HTML]{EDF5E8}842 &
  \cellcolor[HTML]{EDF5E8}842 \\
\textbf{hea10\_f\_40} &
  10 &
   &
  \cellcolor[HTML]{12BC66}1592 &
  \cellcolor[HTML]{00B85C}\textbf{1412} &
  \cellcolor[HTML]{98DCB0}2891 &
  \cellcolor[HTML]{98DCB0}2891 &
  \cellcolor[HTML]{FFFFFF}5810 &
  \cellcolor[HTML]{FFFFFF}5810 &
   &
  \cellcolor[HTML]{24C070}476 &
  \cellcolor[HTML]{00B85C}\textbf{360} &
  \cellcolor[HTML]{FFFFFF}1800 &
  \cellcolor[HTML]{FFFFFF}1800 &
  \cellcolor[HTML]{FFFFFF}1800 &
  \cellcolor[HTML]{FFFFFF}1800 &
   &
  \cellcolor[HTML]{1CBE6B}275 &
  \cellcolor[HTML]{00B85C}\textbf{218} &
  \cellcolor[HTML]{86D8A6}489 &
  \cellcolor[HTML]{86D8A6}489 &
  \cellcolor[HTML]{FFFFFF}1129 &
  \cellcolor[HTML]{FFFFFF}1129 \\
\textbf{hea20\_l\_50} &
  20 &
   &
  \cellcolor[HTML]{FFFFFF}4065 &
  \cellcolor[HTML]{00B85C}\textbf{3772} &
  \cellcolor[HTML]{4BCA86}3821 &
  \cellcolor[HTML]{4BCA86}3821 &
  \cellcolor[HTML]{97DCB0}3870 &
  \cellcolor[HTML]{97DCB0}3870 &
   &
  \cellcolor[HTML]{FFFFFF}1152 &
  \cellcolor[HTML]{00B85C}\textbf{950} &
  \cellcolor[HTML]{00B85C}\textbf{950} &
  \cellcolor[HTML]{00B85C}\textbf{950} &
  \cellcolor[HTML]{00B85C}\textbf{950} &
  \cellcolor[HTML]{00B85C}\textbf{950} &
   &
  \cellcolor[HTML]{FFFFFF}401 &
  \cellcolor[HTML]{07B960}288 &
  \cellcolor[HTML]{00B85C}\textbf{286} &
  \cellcolor[HTML]{00B85C}\textbf{286} &
  \cellcolor[HTML]{00B85C}\textbf{286} &
  \cellcolor[HTML]{00B85C}\textbf{286} \\
\textbf{hea20\_c\_50} &
  20 &
   &
  \cellcolor[HTML]{FFFFFF}20533 &
  \cellcolor[HTML]{12BC66}4679 &
  \cellcolor[HTML]{00B85C}\textbf{4020} &
  \cellcolor[HTML]{00B85C}\textbf{4020} &
  \cellcolor[HTML]{00B85C}\textbf{4020} &
  \cellcolor[HTML]{00B85C}\textbf{4020} &
   &
  \cellcolor[HTML]{FFFFFF}9048 &
  \cellcolor[HTML]{1DBF6C}1524 &
  \cellcolor[HTML]{00B85C}\textbf{1000} &
  \cellcolor[HTML]{00B85C}\textbf{1000} &
  \cellcolor[HTML]{00B85C}\textbf{1000} &
  \cellcolor[HTML]{00B85C}\textbf{1000} &
   &
  \cellcolor[HTML]{F3F8F0}2207 &
  \cellcolor[HTML]{00B85C}\textbf{1275} &
  \cellcolor[HTML]{EDF5E9}2052 &
  \cellcolor[HTML]{EDF5E9}2052 &
  \cellcolor[HTML]{EDF5E9}2052 &
  \cellcolor[HTML]{EDF5E9}2052 \\
\textbf{hea20\_f\_50} &
  20 &
   &
  \cellcolor[HTML]{05B95E}4063 &
  \cellcolor[HTML]{00B85C}\textbf{3772} &
  \cellcolor[HTML]{96DCB0}12371 &
  \cellcolor[HTML]{96DCB0}12371 &
  \cellcolor[HTML]{FFFFFF}29520 &
  \cellcolor[HTML]{FFFFFF}29520 &
   &
  \cellcolor[HTML]{0ABA61}1147 &
  \cellcolor[HTML]{00B85C}\textbf{950} &
  \cellcolor[HTML]{FFFFFF}9500 &
  \cellcolor[HTML]{FFFFFF}9500 &
  \cellcolor[HTML]{FFFFFF}9500 &
  \cellcolor[HTML]{FFFFFF}9500 &
   &
  \cellcolor[HTML]{13BC66}401 &
  \cellcolor[HTML]{00B85C}\textbf{288} &
  \cellcolor[HTML]{8EDAAB}1119 &
  \cellcolor[HTML]{8EDAAB}1119 &
  \cellcolor[HTML]{FFFFFF}2919 &
  \cellcolor[HTML]{FFFFFF}2919 \\
\textbf{qaoa\_6\_3} &
  6 &
   &
  \cellcolor[HTML]{E7F2E1}98 &
  \cellcolor[HTML]{00B85C}\textbf{83} &
  \cellcolor[HTML]{24C070}85 &
  \cellcolor[HTML]{6CD298}89 &
  \cellcolor[HTML]{FFFFFF}108 &
  \cellcolor[HTML]{FFFFFF}108 &
   &
  \cellcolor[HTML]{E2EFDA}23 &
  \cellcolor[HTML]{00B85C}\textbf{22} &
  \cellcolor[HTML]{E7F2E1}24 &
  \cellcolor[HTML]{E7F2E1}24 &
  \cellcolor[HTML]{E7F2E1}24 &
  \cellcolor[HTML]{E7F2E1}24 &
  \multicolumn{1}{c}{} &
  \cellcolor[HTML]{4FCB88}36 &
  \cellcolor[HTML]{00B85C}\textbf{33} &
  \cellcolor[HTML]{6AD197}37 &
  \cellcolor[HTML]{6AD197}37 &
  \cellcolor[HTML]{FFFFFF}50 &
  \cellcolor[HTML]{FFFFFF}50 \\
\textbf{qaoa\_6\_6} &
  6 &
   &
  \cellcolor[HTML]{8AD9A9}97 &
  \cellcolor[HTML]{00B85C}\textbf{86} &
  \cellcolor[HTML]{57CD8D}93 &
  \cellcolor[HTML]{57CD8D}93 &
  \cellcolor[HTML]{FFFFFF}122 &
  \cellcolor[HTML]{FFFFFF}122 &
   &
  \cellcolor[HTML]{71D39B}25 &
  \cellcolor[HTML]{00B85C}\textbf{24} &
  \cellcolor[HTML]{F0F7EC}28 &
  \cellcolor[HTML]{F0F7EC}28 &
  \cellcolor[HTML]{F0F7EC}28 &
  \cellcolor[HTML]{F0F7EC}28 &
   &
  \cellcolor[HTML]{20BF6E}37 &
  \cellcolor[HTML]{00B85C}\textbf{34} &
  \cellcolor[HTML]{B6E4C2}51 &
  \cellcolor[HTML]{B6E4C2}51 &
  \cellcolor[HTML]{FFFFFF}76 &
  \cellcolor[HTML]{FFFFFF}76 \\
\textbf{qaoa\_17\_3} &
  17 &
   &
  \cellcolor[HTML]{70D39A}432 &
  \cellcolor[HTML]{00B85C}\textbf{381} &
  \cellcolor[HTML]{37C57A}406 &
  \cellcolor[HTML]{37C57A}406 &
  \cellcolor[HTML]{FFFFFF}586 &
  \cellcolor[HTML]{FFFFFF}586 &
   &
  \cellcolor[HTML]{FFFFFF}148 &
  \cellcolor[HTML]{00B85C}\textbf{136} &
  \cellcolor[HTML]{FFFFFF}148 &
  \cellcolor[HTML]{FFFFFF}148 &
  \cellcolor[HTML]{FFFFFF}148 &
  \cellcolor[HTML]{FFFFFF}148 &
   &
  \cellcolor[HTML]{ADE2BC}90 &
  \cellcolor[HTML]{00B85C}\textbf{75} &
  \cellcolor[HTML]{45C882}81 &
  \cellcolor[HTML]{45C882}81 &
  \cellcolor[HTML]{FFFFFF}114 &
  \cellcolor[HTML]{FFFFFF}114 \\
\textbf{qaoa\_17\_6} &
  17 &
   &
  \cellcolor[HTML]{3BC67D}731 &
  \cellcolor[HTML]{00B85C}\textbf{659} &
  \cellcolor[HTML]{54CC8B}761 &
  \cellcolor[HTML]{54CC8B}761 &
  \cellcolor[HTML]{FFFFFF}1202 &
  \cellcolor[HTML]{FFFFFF}1202 &
   &
  \cellcolor[HTML]{8FDAAB}253 &
  \cellcolor[HTML]{00B85C}\textbf{220} &
  \cellcolor[HTML]{E4F0DC}324 &
  \cellcolor[HTML]{E4F0DC}324 &
  \cellcolor[HTML]{E4F0DC}324 &
  \cellcolor[HTML]{E4F0DC}324 &
   &
  \cellcolor[HTML]{0EBB64}128 &
  \cellcolor[HTML]{00B85C}\textbf{124} &
  \cellcolor[HTML]{87D9A7}161 &
  \cellcolor[HTML]{87D9A7}161 &
  \cellcolor[HTML]{FFFFFF}247 &
  \cellcolor[HTML]{FFFFFF}247 \\
\textbf{qaoa\_28\_3} &
  28 &
   &
  \cellcolor[HTML]{5FCF91}1007 &
  \cellcolor[HTML]{00B85C}\textbf{887} &
  \cellcolor[HTML]{33C478}952 &
  \cellcolor[HTML]{33C478}952 &
  \cellcolor[HTML]{E3EFDB}1456 &
  \cellcolor[HTML]{E3EFDB}1456 &
   &
  \cellcolor[HTML]{E2EFDA}356 &
  \cellcolor[HTML]{00B85C}\textbf{328} &
  \cellcolor[HTML]{E3EFDB}384 &
  \cellcolor[HTML]{E3EFDB}384 &
  \cellcolor[HTML]{E3EFDB}384 &
  \cellcolor[HTML]{E3EFDB}384 &
   &
  \cellcolor[HTML]{91DBAC}151 &
  \cellcolor[HTML]{00B85C}\textbf{116} &
  \cellcolor[HTML]{A5E0B8}156 &
  \cellcolor[HTML]{A5E0B8}156 &
  \cellcolor[HTML]{FFFFFF}225 &
  \cellcolor[HTML]{FFFFFF}225 \\
\textbf{qaoa\_28\_6} &
  28 &
   &
  \cellcolor[HTML]{28C172}1746 &
  \cellcolor[HTML]{00B85C}\textbf{1617} &
  \cellcolor[HTML]{4FCB88}1870 &
  \cellcolor[HTML]{4FCB88}1870 &
  \cellcolor[HTML]{E6F1DF}3052 &
  \cellcolor[HTML]{E6F1DF}3052 &
   &
  \cellcolor[HTML]{76D49E}660 &
  \cellcolor[HTML]{00B85C}\textbf{596} &
  \cellcolor[HTML]{E9F3E3}840 &
  \cellcolor[HTML]{E9F3E3}840 &
  \cellcolor[HTML]{E9F3E3}840 &
  \cellcolor[HTML]{E9F3E3}840 &
   &
  \cellcolor[HTML]{42C880}248 &
  \cellcolor[HTML]{00B85C}\textbf{218} &
  \cellcolor[HTML]{84D8A5}278 &
  \cellcolor[HTML]{84D8A5}278 &
  \cellcolor[HTML]{FFFFFF}423 &
  \cellcolor[HTML]{FFFFFF}423 \\
\textbf{qaoa\_40\_3} &
  40 &
   &
  \cellcolor[HTML]{5ECF90}2169 &
  \cellcolor[HTML]{00B85C}\textbf{1882} &
  \cellcolor[HTML]{32C478}2035 &
  \cellcolor[HTML]{32C478}2035 &
  \cellcolor[HTML]{E6F1DF}3254 &
  \cellcolor[HTML]{E6F1DF}3254 &
   &
  \cellcolor[HTML]{E6F1DF}865 &
  \cellcolor[HTML]{00B85C}\textbf{725} &
  \cellcolor[HTML]{E7F2E1}884 &
  \cellcolor[HTML]{E7F2E1}884 &
  \cellcolor[HTML]{E7F2E1}884 &
  \cellcolor[HTML]{E7F2E1}884 &
   &
  \cellcolor[HTML]{42C881}251 &
  \cellcolor[HTML]{00B85C}\textbf{222} &
  \cellcolor[HTML]{83D7A5}279 &
  \cellcolor[HTML]{83D7A5}279 &
  \cellcolor[HTML]{FFFFFF}418 &
  \cellcolor[HTML]{FFFFFF}418 \\
\textbf{qaoa\_40\_6} &
  40 &
   &
  \cellcolor[HTML]{20C06E}3456 &
  \cellcolor[HTML]{00B85C}\textbf{3220} &
  \cellcolor[HTML]{5BCE8F}3876 &
  \cellcolor[HTML]{5BCE8F}3876 &
  \cellcolor[HTML]{F7FBF5}6460 &
  \cellcolor[HTML]{F7FBF5}6460 &
   &
  \cellcolor[HTML]{73D49C}1361 &
  \cellcolor[HTML]{00B85C}\textbf{1210} &
  \cellcolor[HTML]{F3F8F0}1800 &
  \cellcolor[HTML]{F3F8F0}1800 &
  \cellcolor[HTML]{F3F8F0}1800 &
  \cellcolor[HTML]{F3F8F0}1800 &
   &
  \cellcolor[HTML]{53CC8A}413 &
  \cellcolor[HTML]{00B85C}\textbf{364} &
  \cellcolor[HTML]{5CCE8F}418 &
  \cellcolor[HTML]{5CCE8F}418 &
  \cellcolor[HTML]{E5F1DE}629 &
  \cellcolor[HTML]{E5F1DE}629 \\
  \bottomrule
\end{tabular}%
}
\end{table*}

\PCAST is implemented in C++ as the core optimization of the Intel Quantum SDK, enabled by the (-O1) flag.
In this section, we evaluate its performance against IBM's Qiskit~\cite{Qiskit} and Quantinuum's \tket~\cite{Sivarajah_2021_tket} optimizing compilers.

\subsection{Experimental Setup}
\textbf{System:} Our experiments use an Intel Xeon\registered Platinum CPU ($2.4$GHz, $2$TB RAM) and Python $3.10$.

\textbf{Framework Setup:} We compare against the Qiskit transpiler with optimization levels $2$ (\qiskittwo) and $3$ (\qiskitthree). For \tket, we compare with two predefined optimization sequences: \tketone, comprised mainly of the \textit{SynthesiseTket} pass; and \tkettwo, comprising of the \textit{FullPeepholeOptimise} pass.\footnote{In both cases we add other passes like \textit{PauliSimp} and \textit{OptimisePhaseGadgets} when they improve \tket's performance.}

For \PCOAST we compare optimization level 1 (\PCOASTone) and a version where we fine-tune some parameters of the search functions (\PCOASTFT), as discussed in \cref{sec:pcast_ft}.

\textbf{Benchmarks:} We analyze the compilation performance for a total of $36$ benchmarks of different configurations and sizes: the Unitary Coupled-Cluster Single and Double excitations (UCCSD) ansatz~\cite{uccsd}, Quantum Fourier Transform (QFT)~\cite{nielsen2002quantum}, Grover's Diffusion operator~\cite{grover}, Hardware-Efficient Ansatz (HEA)~\cite{Kandala2017}, and the Quantum Approximate Optimization Algorithm (QAOA) ansatz~\cite{farhi2014quantum}.

For UCCSD, we construct ansatz for $3$ molecules (H$_2$, LiH, and BeH$_2$) obtained through $3$ fermionic mapping techniques: Jordan-Wigner (JW)~\cite{Jordan1928}, Bravyi–Kitaev (BK)~\cite{Bravyi_2002}, and Parity Mapping (PM)~\cite{parity}. We test HEA for different circuit sizes and entanglement arrangements: linear, circular, and full. Finally, we construct QAOA ansatz for graph MaxCut on Erdős-Rényi random graphs~\cite{erdos_graph} of different sizes with edge probabilities of $0.3$ and $0.6$. 

\textbf{Gate Set Conversions:} To guarantee fairness, we map both Qiskit and \tket's compilation workflows to Intel's native gate set $\{ \RXY, \CZ, \RZ, \texttt{I}\}$. This is done in Qiskit by adding equivalence rules to the SessionEquivalenceLibrary class, such as connecting $\RXY$ (\textit{RGate} in Qiskit) to other gates and decomposing CX into $\RXY; \CZ; \RXY$.
For \tket this is done via a \textit{CustomRebase} pass that maps the gate set to \tket's TK1 and CX gates, and applied at the end of \tketone and \tkettwo.

\subsection{Gate Counts \& Circuit Depth}
\cref{tab:gate_counts_combined} shows the total gate count, two-qubit gates, and depth for \PCAST, Qiskit, and \tket across all benchmarks. Results assume an all-to-all connected machine as a backend and are obtained from multiple runs to guarantee consistency.

The untuned implementation, \PCOASTone, reduces the total gate count by $22.51\%$ (respectively $33.70\%$), single-qubit gates by $22.06\%$ ($43.58\%$), two-qubit gates by $13.46\%$ ($2.62\%$), and depth by $36.02\%$ ($45.52\%$) compared to the best Qiskit (\tket) performance across all benchmarks.

\PCOASTone performs exceptionally well with UCCSD, with average reductions in total gate count by $76.42\%$ ($55.51\%$), 
%single-qubit gates by $77.78\%$ ($62.68\%$), 
two-qubit gates by $72.52\%$ ($32.45\%$), and depth by $84.82\%$ ($74.30\%$) compared to the best Qiskit (\tket) results. This can be attributed to the fact that all fermionic mapping methods are equivalent up to conjugation by a Clifford, and thus are naturally captured by Pauli frames. 
\begin{ExtendedVersion}
In particular, one can go from one mapping to the other via an $\mathcal{O}(N)$ circuit. Moreover, because of how synthesis works, if one mapping is dramatically worse than the other, the process will generically gravitate to the more optimal mapping as the circuit is being synthesized.
\end{ExtendedVersion}

In some cases, \PCOASTone is outperformed by Qiskit and \tket in two-qubit gate counts for QFT ($34.71\%$) and HEA ($34.14\%$). This is a result of \PCOASTone's circuit synthesis cost function optimizing mainly for circuit depth.
%, which can be adjusted or fine-tuned to get optimized results per benchmark (Section~\ref{sec:pcast_ft}).

\begin{figure*}[h!]
    \centering
    \includegraphics[width=\linewidth, keepaspectratio]{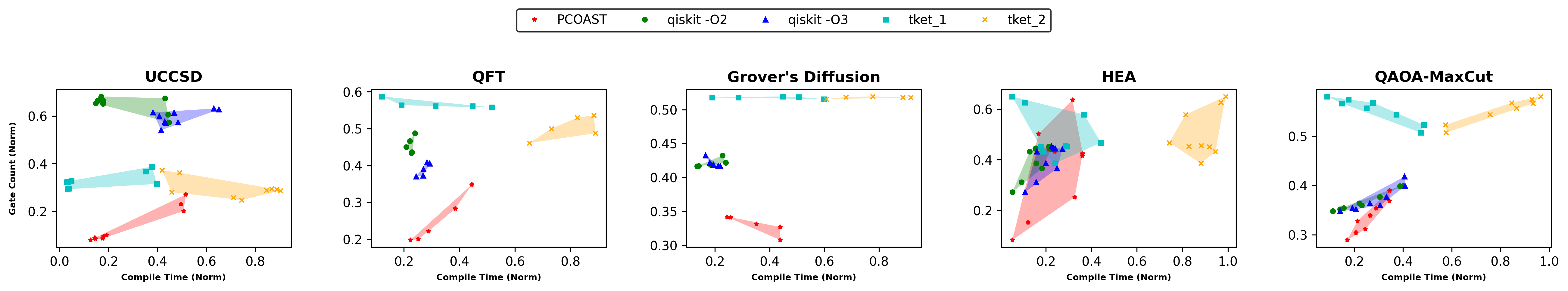}
    \caption{Normalized gate counts vs. compile time for each framework to evaluate the scalability across different benchmarks.}
    \label{fig:scalability}
\end{figure*}

\subsection{Fine-tuning \PCOAST's Cost Function}\label{sec:pcast_ft}
% \albert{do we want references to the flags themselves? These flag are not currently available in v1.0 but they could be later. ALSO, if we widen the table, I think we can combine sections.}
%From Section~\ref{sec:synthesis}, we learn that the $\gateCost$ function includes a circuit depth credit, which controls how much the gate's pace (the difference between the scheduled time and the latest scheduled gate) contributes to the cost (Eqn. (26) in~\cite{Schmitz2021}). This credit value is referred to as a \textit{parallelization credit} $C$, and it's set to $1.0$ by default. Further, the $\gateCost$ calculation can be modified to assign greater weights to nodes free of dependencies (see footnote~\ref{foot:rflag}), achieved by compiling \PCAST with the flag ``-$r$''. Thus, we fine-tune the cost function by scanning multiple values of $C$ and adjusting the weighting of free nodes to achieve better performance for each benchmark. Additionally, we utilize \textit{release outcome} whenever a workload represents a full computation (UCCSD, HEA, and QAOA), not a subroutine.
%\input{src/tables/pcoast_ft}

To investigate the cases where \PCOASTone performs poorly (\textit{qft\_50}, \textit{hea10\_c} and \textit{hea20\_c}), we modify the parallelization credit, as described in \cref{foot:Cflag}, away from its default value of $1.0$, and we add the weighting modification discussed in \cref{foot:rflag}. Additionally, we utilize a release outcome whenever a workload represents a full computation (UCCSD, HEA, and QAOA), not a subroutine. Table~\ref{tab:gate_counts_combined}'s \PCOASTFT shows the results of these modifications. We see that by fine-tuning the cost function, we are able to achieve better performance in all cases, where \PCOASTFT reduces the gate count by $50.33\%$, $38.24\%$, and $76.82\%$ for the respective workloads, as compared to \PCOASTone.
%This can be attributed to the nature of the workload itself. By looking at the numbers from both Qiskit and \tket, we see that both frameworks result in the exact number of gates, which is expected for this type of ansatz, as it's designed with efficiency in mind. If we want \PCAST to match the efficiency of how a HEA is implemented, the synthesis needs to be able to apply a TQE in such a manner that results in two rotations (simultaneously reducing the cost of two rotations from support 2 to support 1). In the standard \PCAST implementation, rotations at the beginning of the circuit don't weigh any higher than rotations at the end. This means that when a TQE gate is applied, even if it reduces the two rotations to support 1, this can be offset by an operation happening further in the dependency graph, making \PCAST inefficient in capturing the repeated pattern of ``TQE, two rotations, TQE, etc.'' within HEAs with large depths. However, by assigning more weight to elements occurring at the beginning of the circuit (through the ``r" flag), the optimization is more likely to capture this pattern.
\begin{ExtendedVersion}
This can be attributed to the initial implementation of workload. Both Qiskit and \tket result in the same number of gates, owing to the already efficient design of the circuit. In particular, it contains a common repeated pattern of ``TQE gate followed by two rotations.'' For \PCAST to match that efficiency, the synthesis needs to select TQE gates which result in the reduction of two available rotations simultaneously. By assigning more weight to elements occurring at the beginning of the DAG, the optimization is more likely to capture this pattern.
\end{ExtendedVersion}

%Our empirical analysis reveals that tuning the parallelization credit at a value less than $0.1$ generally achieves better results across different benchmarks. Additionally, we notice that as we increase its value beyond $1$, \PCOASTFT's performance shows limited variation, due to the parallelization part of the cost function dominating all other considerations.
Our empirical analysis reveals that setting the parallelization credit less than $0.1$ tends to achieve the best results. As we increase its value beyond $1$, performance plateaus due to the dominance of the parallelization part of the cost function.

Overall, \PCOASTFT reduces total gate count, two-qubit gates, and depth by $16.79\%$, $20.5\%$, and $11.28\%$ respectively compared to \PCOASTone, and by $32.53\% (43.33\%)$, $29.22\% (20.58\%)$, and $42.05\% (51.27\%)$ compared to the best Qiskit (\tket) performance across all benchmarks.

% \begin{figure}[!t]
%     \centering
%     \includegraphics[width=\linewidth, keepaspectratio]{src/figures/c_vals_v1.png}
%     \caption{Total gate count reduction as a result of fine-tuning \PCAST's cost function's \textit{Parallelization Credit}. Values are normalized for each benchmark, with $1$ denoting the lowest (best) gate count.}
%     \label{fig:c_vals}
% \end{figure}

\subsection{Scalability} \label{sec:scalability}

\ifExtended{\cref{sec:timeComplexity}}{Ref~\cite{PaykinSchmitz2023PCOAST}} argues that the complexity of \PCOAST's search algorithm is $\mathcal{O}(N^3|G|^2)$, where $N$ is the number of qubits and $|G|$ the number of nodes.
%We also showed \PCAST's performance for circuits with up to 100 qubits in \cref{tab:gate_counts_combined}.
\cref{tab:gate_counts_combined} shows that \PCAST can be applied to circuits with up to 100 qubits.
To gain a comprehensive understanding of scalability and quality, we compared compilation time and performance (gate count) across various toolchains (Fig.~\ref{fig:scalability}), revealing that \PCAST's scalability is superior to other frameworks. Its data points consistently reside in the lower left region, indicating better results, with a few exceptions in HEA. %While \textit{tket\_1} offers fast compilation, its gate count is high; \textit{tket\_2} experiences slow compile times and diminished performance with growing workloads. Qiskit configurations show good compile times but produce higher gate counts for UCCSD compared to other frameworks.

\section{Related Work}
\label{sec:related}
%Quantum circuit optimizations fall in two main classes: local, peephole-style optimizations~\citep{prasad2006data,kliuchnikov2013optimization, abdessaied2014quantum, nam2018automated,Qiskit,Sivarajah_2021_tket, pointing2021optimizing, xu2022quartz}, where local patterns of gates are replaced by other patterns; and global optimizations, where circuits are converted to an intermediate mathematical structure that highlights some semantic equivalence, simplified according to the rules of that structure, and synthesized back into a circuit that could be significantly different from the original. While peephole optimizations can be effective for recognizing common, simple patterns, including between unitary and non-unitary gates, they tend to miss richer opportunities for optimizations, both because they only look locally at neighboring gates, and because of the number and complexity of patterns needed to identify such opportunities.

Global quantum circuit optimizations, such as phase polynomials~\citep{amy2014polynomial, amy2018towards, nam2018automated, Meijer2020architecture}, the ZX-calculus~\citep{kissinger2020,kissinger2020reducing, deBeaudrap2020fusion, cowtan2019phase}, Pauli strings~\citep{li2022paulihedral}, and Pauli rotations~\citep{Zhang2019,Schmitz2021}, leverage mathematical structures to reduce gate count. Most focus on unitary optimizations, with the exception of some ZX variants~\citep{borgna2021hybrid,deBeaudrap2020fusion}. Most similar to \PCAST graphs is \citep{Zhang2019}'s DAGs of Pauli rotations and \citep{cowtan2019phase}'s ZX-based Pauli gadgets. Both overlap with \cref{sec:toGraph} when restricted to unitary gates, but neither use Pauli frames to represent Cliffords, nor address efficient Pauli gadget synthesis. With ZX-based approaches in particular, optimizations must maintain a ``circuit-like'' form, as not all ZX diagrams can be directly synthesized into gates~\citep{kissinger2020, deBeaudrap2020fusion, deBeaudrap2022circuit}. In contrast, all \PCOAST nodes are synthesizable, as synthesis is built into the framework itself.

Bottom-up synthesis methods construct parameterized circuits by iteratively adding gates and using numerical optimization algorithms for parameter determination~\citep{davis2020towards,younis2021qfast,rakyta2022efficient,madden2022best}. 
%Notable techniques  include QSearch~\cite{davis2020towards}, which formulates synthesis as a search across circuit templates; QFAST~\cite{younis2021qfast}, which unifies search and optimization through innovative circuit encoding; SQUANDER~\cite{rakyta2022efficient}, employing a straightforward strategy of enlarging and compressing circuits; and approaches in~\cite{madden2022best} that develop approximate circuits through instantiation. 
Compilation algorithms like QGo~\cite{ wu2021reoptimization} and QUEST~\cite{patel2022quest} leverage bottom-up synthesis for larger circuits, though scalability remains a concern because their search space increases exponentially with circuit size. In contrast, \PCAST scales well to large circuits, as demonstrated in \cref{sec:scalability}.

\citet{Schmitz2021} and \citet{li2022paulihedral} both address circuit synthesis from Pauli strings in the context of Hamiltonian simulation.
%with the goal of starting with a Pauli string representation of a Hamiltonian, and ending with a realizable circuit. 
%\citet{li2022paulihedral} goes so far as to propose treating its mathematical representation Paulihedral (based on Pauli strings) as an IR, or intermediate representation, to enable hardware-aware optimization and scheduling passes. In a similar way, \PCAST graphs could too be seen as an IR for the internal optimizations in \cref{sec:opts}, although IRs are more typically text-based rather than graph-based structures. \albert{that's not my understanding of IR...} 
\citep{Schmitz2021} is the basis of the \PCAST synthesis algorithm, extended to support non-unitary gates and custom cost functions that allow the ultra-greedy search to be fine-tuned.
\citeauthor{li2022paulihedral} incorporate hardware-aware optimization and scheduling passes into Hamiltonian synthesis and, though out of scope of this work, we will extend \PCOAST search functions with such hardware-aware considerations in the near future.

\section{Conclusion}
\label{sec:conclusion}
\label{sec:conc}
\PCAST is a novel optimization framework for mixed unitary and non-unitary quantum circuits that adapts the commutativity properties of Cliffords and Pauli strings to preparation and measurement gates in the \PCAST graph. Internal optimizations simplify the graph depending on whether the quantum state needs to be preserved (\hold) or can be released (\release) after circuit execution. Finally, a customizable greedy search algorithm finds an efficient gate implementation for the optimized \PCAST graph.

Implemented in the Intel Quantum SDK, \PCAST significantly reduces gate count, two-qubit gates, and depth in key benchmarks. With minor tuning, it reduces total gate count by between 32\% (resp. 43\%) two-qubit gates by 29\% (21\%), and depth by 42\% (51\%) compared to the best performance of Qiskit (resp. \tket). On applications for quantum chemistry, it reduces gate count by 79\% (62\%), two-qubit gates by 77\% (54\%), and depth by 85\% (76\%).

The framework leaves many avenues for future work.
\PCOAST can be used as an IR beyond circuit conversion for Hamiltonian simulation~\citep{Schmitz2021} and higher-order circuit transformations~\citep{schmitz2023functional}.
%\PCAST nodes allow for higher-order transformations like quantum control and unitary inverse of entire subgraphs, as enabled in Intel's quantum kernel expressions~\citep{schmitz2023functional},
%and is a natural candidate for Hamiltonian simulation as in~\citep{Schmitz2021}.
Future internal optimizations could include more advanced unitary optimizations such as singlet node to factor node merging and incorporating other representations like phase polynomials into \PCAST.
%For synthesis, the search functions can be generalized to capture connectivity constraints and noise models by adapting state-of-the-art [adjective] methods.
For synthesis, immediate next steps will adapt state-of-the-art methods for limited connectivity in NISQ architectures by incorporating connectivity and noise into the search functions.
%Additional refinements to \gateCost including a weighting of the node averaging based on distance in the partial ordering of the \PCAST graph. 
%\todo{Application to circuit cutting?}

\bibliography{references}

% Generated by IEEEtranN.bst, version: 1.14 (2015/08/26)
\begin{thebibliography}{45}
\providecommand{\natexlab}[1]{#1}
\providecommand{\url}[1]{#1}
\csname url@samestyle\endcsname
\providecommand{\newblock}{\relax}
\providecommand{\bibinfo}[2]{#2}
\providecommand{\BIBentrySTDinterwordspacing}{\spaceskip=0pt\relax}
\providecommand{\BIBentryALTinterwordstretchfactor}{4}
\providecommand{\BIBentryALTinterwordspacing}{\spaceskip=\fontdimen2\font plus
\BIBentryALTinterwordstretchfactor\fontdimen3\font minus
  \fontdimen4\font\relax}
\providecommand{\BIBforeignlanguage}[2]{{%
\expandafter\ifx\csname l@#1\endcsname\relax
\typeout{** WARNING: IEEEtranN.bst: No hyphenation pattern has been}%
\typeout{** loaded for the language `#1'. Using the pattern for}%
\typeout{** the default language instead.}%
\else
\language=\csname l@#1\endcsname
\fi
#2}}
\providecommand{\BIBdecl}{\relax}
\BIBdecl

\bibitem[Chong et~al.(2017)Chong, Franklin, and
  Martonosi]{chong2017programming}
F.~T. Chong, D.~Franklin, and M.~Martonosi, ``Programming languages and
  compiler design for realistic quantum hardware,'' \emph{Nature}, vol. 549,
  no. 7671, pp. 180--187, 2017.

\bibitem[Prasad et~al.(2006)Prasad, Shende, Markov, Hayes, and
  Patel]{prasad2006data}
A.~K. Prasad, V.~V. Shende, I.~L. Markov, J.~P. Hayes, and K.~N. Patel, ``Data
  structures and algorithms for simplifying reversible circuits,'' \emph{ACM
  Journal on Emerging Technologies in Computing Systems (JETC)}, vol.~2, no.~4,
  pp. 277--293, 2006.

\bibitem[Valiron et~al.(2015)Valiron, Ross, Selinger, Alexander, and
  Smith]{valiron2015programming}
B.~Valiron, N.~J. Ross, P.~Selinger, D.~S. Alexander, and J.~M. Smith,
  ``Programming the quantum future,'' \emph{Communications of the ACM},
  vol.~58, no.~8, pp. 52--61, 2015.

\bibitem[Kliuchnikov and Maslov(2013)]{kliuchnikov2013optimization}
V.~Kliuchnikov and D.~Maslov, ``Optimization of {C}lifford circuits,''
  \emph{Physical Review A}, vol.~88, no.~5, p. 052307, 2013.

\bibitem[Abdessaied et~al.(2014)Abdessaied, Soeken, and
  Drechsler]{abdessaied2014quantum}
N.~Abdessaied, M.~Soeken, and R.~Drechsler, ``Quantum circuit optimization by
  {H}adamard gate reduction,'' in \emph{Reversible Computation: 6th
  International Conference, RC 2014}.\hskip 1em plus 0.5em minus 0.4em\relax
  Springer, 2014, pp. 149--162.

\bibitem[Nam et~al.(2018)Nam, Ross, Su, Childs, and Maslov]{nam2018automated}
Y.~Nam, N.~J. Ross, Y.~Su, A.~M. Childs, and D.~Maslov, ``Automated
  optimization of large quantum circuits with continuous parameters,''
  \emph{npj Quantum Information}, vol.~4, no.~1, p.~23, 2018.

\bibitem[{Qiskit contributors}(2023)]{Qiskit}
\BIBentryALTinterwordspacing
{Qiskit contributors}, ``Qiskit: An open-source framework for quantum
  computing,'' 2023. [Online]. Available:
  \url{https://doi.org/10.5281/zenodo.2573505}
\BIBentrySTDinterwordspacing

\bibitem[Pointing et~al.(2021)Pointing, Padon, Jia, Ma, Hirth, Palsberg, and
  Aiken]{pointing2021optimizing}
J.~Pointing, O.~Padon, Z.~Jia, H.~Ma, A.~Hirth, J.~Palsberg, and A.~Aiken,
  ``Quanto: Optimizing quantum circuits with automatic generation of circuit
  identities,'' 2021,
  \href{https://arxiv.org/abs/2111.11387}{arXiv:2111.11387}.

\bibitem[Xu et~al.(2022)Xu, Li, Padon, Lin, Pointing, Hirth, Ma, Palsberg,
  Aiken, Acar, et~al.]{xu2022quartz}
M.~Xu, Z.~Li, O.~Padon, S.~Lin, J.~Pointing, A.~Hirth, H.~Ma, J.~Palsberg,
  A.~Aiken, U.~A. Acar \emph{et~al.}, ``Quartz: superoptimization of quantum
  circuits,'' in \emph{Proceedings of the 43rd ACM SIGPLAN International
  Conference on Programming Language Design and Implementation}, 2022, pp.
  625--640.

\bibitem[Sivarajah et~al.(2020)Sivarajah, Dilkes, Cowtan, Simmons, Edgington,
  and Duncan]{Sivarajah_2021_tket}
\BIBentryALTinterwordspacing
S.~Sivarajah, S.~Dilkes, A.~Cowtan, W.~Simmons, A.~Edgington, and R.~Duncan,
  ``t$\mid$ket$\rangle$ : {A} retargetable compiler for {NISQ} devices,''
  \emph{Quantum Science and Technology}, vol.~6, no.~1, p. 014003, nov 2020.
  [Online]. Available: \url{https://doi.org/10.1088/2058-9565/ab8e92}
\BIBentrySTDinterwordspacing

\bibitem[Zhang and Chen(2019)]{Zhang2019}
F.~Zhang and J.~Chen, ``Optimizing {T} gates in {Clifford+T} circuit as $\pi/4$
  rotations around {Paulis},'' 2019,
  \href{https://arxiv.org/abs/1903.12456}{arxiv:1903.12456}.

\bibitem[Cowtan et~al.(2019)Cowtan, Dilkes, Duncan, Simmons, and
  Sivarajah]{cowtan2019phase}
A.~Cowtan, S.~Dilkes, R.~Duncan, W.~Simmons, and S.~Sivarajah, ``Phase gadget
  synthesis for shallow circuits,'' in \emph{16th International Conference on
  Quantum Physics and Logic 2019}.\hskip 1em plus 0.5em minus 0.4em\relax Open
  Publishing Association, 2019, pp. 213--228.

\bibitem[Schmitz et~al.(2021)Schmitz, Sawaya, Johri, and Matsuura]{Schmitz2021}
A.~T. Schmitz, N.~P.~D. Sawaya, S.~Johri, and A.~Y. Matsuura, ``Graph
  optimization perspective for low-depth {T}rotter-{S}uzuki decomposition,''
  2021, \href{https://arxiv.org/abs/2103.08602}{arxiv:2103.08602}.

\bibitem[Li et~al.(2022)Li, Wu, Shi, Javadi-Abhari, Ding, and
  Xie]{li2022paulihedral}
\BIBentryALTinterwordspacing
G.~Li, A.~Wu, Y.~Shi, A.~Javadi-Abhari, Y.~Ding, and Y.~Xie, ``{Paulihedral}: A
  generalized block-wise compiler optimization framework for quantum simulation
  kernels,'' in \emph{Proceedings of the 27th ACM International Conference on
  Architectural Support for Programming Languages and Operating Systems}, ser.
  ASPLOS '22.\hskip 1em plus 0.5em minus 0.4em\relax New York, NY, USA:
  Association for Computing Machinery, 2022, p. 554–569. [Online]. Available:
  \url{https://doi.org/10.1145/3503222.3507715}
\BIBentrySTDinterwordspacing

\bibitem[Aaronson and Gottesman(2004)]{Aaronson2004}
\BIBentryALTinterwordspacing
S.~Aaronson and D.~Gottesman, ``Improved simulation of stabilizer circuits,''
  \emph{Physical Review A}, vol.~70, no.~5, nov 2004. [Online]. Available:
  \url{https://doi.org/10.1103/PhysRevA.70.052328}
\BIBentrySTDinterwordspacing

\bibitem[Khalate et~al.(2022)Khalate, Wu, Premaratne, Hogaboam, Holmes,
  Schmitz, Guerreschi, Zou, and Matsuura]{Khalate2022}
P.~Khalate, X.-C. Wu, S.~Premaratne, J.~Hogaboam, A.~Holmes, A.~Schmitz, G.~G.
  Guerreschi, X.~Zou, and A.~Y. Matsuura, ``An {LLVM}-based {C++} compiler
  toolchain for variational hybrid quantum-classical algorithms and quantum
  accelerators,'' 2022,
  \href{https://arxiv.org/abs/2202.11142}{arXiv:2202.11142}.

\bibitem[Paykin et~al.(2023)Paykin, Schmitz, Ibrahim, Wu, and
  Matsuura]{PaykinSchmitz2023PCOAST}
J.~Paykin, A.~T. Schmitz, M.~Ibrahim, X.-C. Wu, and A.~Y. Matsuura, ``{PCOAST}:
  A {P}auli-based quantum circuit optimization framework (extended version),''
  2023, \href{https://arxiv.org/abs/2305.10966v2}{arXiv:2305.10966v2}.

\bibitem[Nielsen and Chuang(2002)]{nielsen2002quantum}
M.~A. Nielsen and I.~Chuang, ``Quantum computation and quantum information,''
  2002.

\bibitem[Feng and Ying(2021)]{feng2021quantum}
Y.~Feng and M.~Ying, ``Quantum {Hoare} logic with classical variables,''
  \emph{ACM Transactions on Quantum Computing}, vol.~2, no.~4, pp. 1--43, 2021.

\bibitem[Selinger(2004)]{selinger2004towards}
P.~Selinger, ``Towards a quantum programming language,'' \emph{Mathematical
  Structures in Computer Science}, vol.~14, no.~4, pp. 527--586, 2004.

\bibitem[Schmitz et~al.(2023{\natexlab{a}})Schmitz, Ibrahim, Sawaya,
  Guerreschi, Paykin, Wu, and Matsuura]{Schmitz2023PCOAST}
A.~T. Schmitz, M.~Ibrahim, N.~P.~D. Sawaya, G.~G. Guerreschi, J.~Paykin, X.-C.
  Wu, and A.~Y. Matsuura, ``Optimization at the interface of unitary and
  non-unitary quantum operations in {PCOAST},'' 2023,
  \href{https://arxiv.org/abs/2305.09843}{arXiv:2305.09843}.

\bibitem[{Maslov} and {Roetteler}(2018)]{Maslov2018}
D.~{Maslov} and M.~{Roetteler}, ``Shorter stabilizer circuits via bruhat
  decomposition and quantum circuit transformations,'' \emph{IEEE Transactions
  on Information Theory}, vol.~64, no.~7, pp. 4729--4738, 2018.

\bibitem[Barkoutsos et~al.(2018)Barkoutsos, Gonthier, Sokolov, Moll, Salis,
  Fuhrer, Ganzhorn, Egger, Troyer, Mezzacapo, Filipp, and Tavernelli]{uccsd}
\BIBentryALTinterwordspacing
P.~K. Barkoutsos, J.~F. Gonthier, I.~Sokolov, N.~Moll, G.~Salis, A.~Fuhrer,
  M.~Ganzhorn, D.~J. Egger, M.~Troyer, A.~Mezzacapo, S.~Filipp, and
  I.~Tavernelli, ``Quantum algorithms for electronic structure calculations:
  Particle-hole {Hamiltonian} and optimized wave-function expansions,''
  \emph{Physical Review A}, vol.~98, no.~2, aug 2018. [Online]. Available:
  \url{https://doi.org/10.1103/physreva.98.022322}
\BIBentrySTDinterwordspacing

\bibitem[Grover(1996)]{grover}
\BIBentryALTinterwordspacing
L.~K. Grover, ``A fast quantum mechanical algorithm for database search,'' in
  \emph{Proceedings of the Twenty-Eighth Annual ACM Symposium on Theory of
  Computing}, ser. STOC '96.\hskip 1em plus 0.5em minus 0.4em\relax New York,
  NY, USA: Association for Computing Machinery, 1996, p. 212–219. [Online].
  Available: \url{https://doi.org/10.1145/237814.237866}
\BIBentrySTDinterwordspacing

\bibitem[Kandala et~al.(2017)Kandala, Mezzacapo, Temme, Takita, Brink, Chow,
  and Gambetta]{Kandala2017}
\BIBentryALTinterwordspacing
A.~Kandala, A.~Mezzacapo, K.~Temme, M.~Takita, M.~Brink, J.~M. Chow, and J.~M.
  Gambetta, ``Hardware-efficient variational quantum eigensolver for small
  molecules and quantum magnets,'' \emph{Nature}, vol. 549, no. 7671, pp.
  242--246, Sep. 2017. [Online]. Available:
  \url{https://doi.org/10.1038/nature23879}
\BIBentrySTDinterwordspacing

\bibitem[Farhi et~al.(2014)Farhi, Goldstone, and Gutmann]{farhi2014quantum}
E.~Farhi, J.~Goldstone, and S.~Gutmann, ``A quantum approximate optimization
  algorithm,'' 2014, \href{https://arxiv.org/abs/1411.4028}{arxiv:1411.4028}.

\bibitem[Jordan and Wigner(1928)]{Jordan1928}
\BIBentryALTinterwordspacing
P.~Jordan and E.~Wigner, ``Über das paulische Äquivalenzverbot,''
  \emph{Zeitschrift für Physik}, vol.~47, no. 9-10, pp. 631--651, Sep. 1928.
  [Online]. Available: \url{https://doi.org/10.1007/bf01331938}
\BIBentrySTDinterwordspacing

\bibitem[Bravyi and Kitaev(2002)]{Bravyi_2002}
\BIBentryALTinterwordspacing
S.~B. Bravyi and A.~Y. Kitaev, ``Fermionic quantum computation,'' \emph{Annals
  of Physics}, vol. 298, no.~1, pp. 210--226, may 2002. [Online]. Available:
  \url{https://doi.org/10.1006/aphy.2002.6254}
\BIBentrySTDinterwordspacing

\bibitem[Seeley et~al.(2012)Seeley, Richard, and Love]{parity}
\BIBentryALTinterwordspacing
J.~T. Seeley, M.~J. Richard, and P.~J. Love, ``The {Bravyi-Kitaev}
  transformation for quantum computation of electronic structure,'' \emph{The
  Journal of Chemical Physics}, vol. 137, no.~22, p. 224109, Dec. 2012.
  [Online]. Available: \url{https://doi.org/10.1063/1.4768229}
\BIBentrySTDinterwordspacing

\bibitem[Erd\"os and R\'enyi(1959)]{erdos_graph}
P.~Erd\"os and A.~R\'enyi, ``On random graphs {I},'' \emph{Publicationes
  Mathematicae Debrecen}, vol.~6, pp. 290--297, 1959.

\bibitem[Amy et~al.(2014)Amy, Maslov, and Mosca]{amy2014polynomial}
M.~Amy, D.~Maslov, and M.~Mosca, ``Polynomial-time {T}-depth optimization of
  {Clifford+ T} circuits via matroid partitioning,'' \emph{IEEE Transactions on
  Computer-Aided Design of Integrated Circuits and Systems}, vol.~33, no.~10,
  pp. 1476--1489, 2014.

\bibitem[Amy(2018)]{amy2018towards}
M.~Amy, ``Towards large-scale functional verification of universal quantum
  circuits,'' in \emph{Proceedings of Quantum Physics and Logic (QPL)}, 2018.

\bibitem[Meijer-van~de Griend and Duncan(2020)]{Meijer2020architecture}
A.~Meijer-van~de Griend and R.~Duncan, ``Architecture-aware synthesis of phase
  polynomials for {NISQ} devices,'' 2020,
  \href{https://arxiv.org/abs/2004.06052}{arxiv:2004.06052}.

\bibitem[Kissinger and van~de Wetering(2020{\natexlab{a}})]{kissinger2020}
A.~Kissinger and J.~van~de Wetering, ``{PyZX}: Large scale automated
  diagrammatic reasoning,'' \emph{Electronic Proceedings in Theoretical
  Computer Science}, vol. 318, pp. 229--241, 2020.

\bibitem[Kissinger and van~de
  Wetering(2020{\natexlab{b}})]{kissinger2020reducing}
\BIBentryALTinterwordspacing
------, ``Reducing the number of non-{C}lifford gates in quantum circuits,''
  \emph{Phys. Rev. A}, vol. 102, p. 022406, Aug 2020. [Online]. Available:
  \url{https://doi.org/10.1103/PhysRevA.102.022406}
\BIBentrySTDinterwordspacing

\bibitem[de~Beaudrap et~al.(2020)de~Beaudrap, Duncan, Horsman, and
  Perdrix]{deBeaudrap2020fusion}
\BIBentryALTinterwordspacing
N.~de~Beaudrap, R.~Duncan, D.~Horsman, and S.~Perdrix, ``Pauli fusion: a
  computational model to realise quantum transformations from {ZX} terms,''
  \emph{Electronic Proceedings in Theoretical Computer Science}, vol. 318, pp.
  85--105, May 2020. [Online]. Available:
  \url{https://doi.org/10.4204/eptcs.318.6}
\BIBentrySTDinterwordspacing

\bibitem[Borgna et~al.(2021)Borgna, Perdrix, and Valiron]{borgna2021hybrid}
A.~Borgna, S.~Perdrix, and B.~Valiron, ``Hybrid quantum-classical circuit
  simplification with the {ZX}-calculus,'' in \emph{Programming Languages and
  Systems: 19th Asian Symposium, APLAS 2021, Chicago, IL, USA, October 17--18,
  2021, Proceedings 19}.\hskip 1em plus 0.5em minus 0.4em\relax Springer, 2021,
  pp. 121--139.

\bibitem[de~Beaudrap et~al.(2022)de~Beaudrap, Kissinger, and van~de
  Wetering]{deBeaudrap2022circuit}
N.~de~Beaudrap, A.~Kissinger, and J.~van~de Wetering, ``Circuit extraction for
  {ZX}-diagrams can be \#{P}-hard,'' 2022,
  \href{https://arxiv.org/abs/2202.09194}{arxiv:2202.09194}.

\bibitem[Davis et~al.(2020)Davis, Smith, Tudor, Sen, Siddiqi, and
  Iancu]{davis2020towards}
M.~G. Davis, E.~Smith, A.~Tudor, K.~Sen, I.~Siddiqi, and C.~Iancu, ``Towards
  optimal topology aware quantum circuit synthesis,'' in \emph{2020 IEEE
  International Conference on Quantum Computing and Engineering (QCE)}.\hskip
  1em plus 0.5em minus 0.4em\relax IEEE, 2020, pp. 223--234.

\bibitem[Younis et~al.(2021)Younis, Sen, Yelick, and Iancu]{younis2021qfast}
E.~Younis, K.~Sen, K.~Yelick, and C.~Iancu, ``Qfast: Conflating search and
  numerical optimization for scalable quantum circuit synthesis,'' in
  \emph{2021 IEEE International Conference on Quantum Computing and Engineering
  (QCE)}.\hskip 1em plus 0.5em minus 0.4em\relax IEEE, 2021, pp. 232--243.

\bibitem[Rakyta and Zimbor{\'a}s(2022)]{rakyta2022efficient}
P.~Rakyta and Z.~Zimbor{\'a}s, ``Efficient quantum gate decomposition via
  adaptive circuit compression,'' 2022,
  \href{https://arxiv.org/abs/2203.04426}{arxiv:2203.04426}.

\bibitem[Madden and Simonetto(2022)]{madden2022best}
L.~Madden and A.~Simonetto, ``Best approximate quantum compiling problems,''
  \emph{ACM Transactions on Quantum Computing}, vol.~3, no.~2, pp. 1--29, 2022.

\bibitem[Wu et~al.(2021)Wu, Davis, Chong, and Iancu]{wu2021reoptimization}
X.-C. Wu, M.~G. Davis, F.~T. Chong, and C.~Iancu, ``Reoptimization of quantum
  circuits via hierarchical synthesis,'' in \emph{2021 International Conference
  on Rebooting Computing (ICRC)}.\hskip 1em plus 0.5em minus 0.4em\relax IEEE,
  2021, pp. 35--46.

\bibitem[Patel et~al.(2022)Patel, Younis, Iancu, de~Jong, and
  Tiwari]{patel2022quest}
T.~Patel, E.~Younis, C.~Iancu, W.~de~Jong, and D.~Tiwari, ``Quest:
  systematically approximating quantum circuits for higher output fidelity,''
  in \emph{Proceedings of the 27th ACM International Conference on
  Architectural Support for Programming Languages and Operating Systems}, 2022,
  pp. 514--528.

\bibitem[Schmitz et~al.(2023{\natexlab{b}})Schmitz, Paykin, and
  Matsuura]{schmitz2023functional}
A.~Schmitz, J.~Paykin, and A.~Matsuura, ``A functional approach to the modular
  construction of quantum logic,'' \emph{Bulletin of the American Physical
  Society}, 2023.

\end{thebibliography}

\end{document}